\algnewcommand\Input{\item[\textbf{Input:}]}
\begin{document}
\newcommand{\thd}[1]{$#1$-th}
\newcommand{\dt}[1]{d_{\bm{t};{{#1}}}}
\newcommand{\du}[1]{d_{{{#1}};\bm{r}}}
\newcommand{\wn}{\frac{2\pi}{\lambda}}
\newcommand{\expect}{\operatorname{\mathbb{E}}}
\newcommand{\ch}{\mathbf{h}}
\newcommand{\Cq}{ C_\theta }
\newcommand{\Cr}{C_r }
\newcommand{\diag}{\operatorname{diag}}
\newcommand{\deltaT}{\Delta \hat{\theta}}
\newcommand{\epsT}{\varepsilon_\theta}
\newcommand{\epsL}{\varepsilon_r}
\newcommand{\snr}{{\rm SNR}}
\newcommand{\cod}{\Delta C(\hatT,\theta,\hatD,r)}
\newcommand{\Capp}{{{C}}}
\newcommand{\cae}{{\mathcal{E}}}
\newcommand{\bfa}{{\mathbf{a}}}
\newcommand{\codew}{{\boldsymbol{w}}}
\newcommand{\rrg}{r_{\rm rg}}
\newcommand{\lrg}{\ell_{\rm rg}}
\newcommand{\minimize}{\operatorname*{minimize}}
\newcommand{\trg}{\theta_{\rm rg}}
\newcommand{\rrgq}{\rho_{\rm rg}}
\newcommand{\cFullFun}{C(\epsL,\epsT)}
\newcommand{\frf}{\mathbf{F}_{\mathrm{RF}}}
\newcommand{\fbb}{\mathbf{F}_{\mathrm{BB}}}
\newcommand{\fbbk}[1]{\mathbf{f}_{\mathrm{BB},#1}}
\newcommand{\frfk}[1]{\mathbf{f}_{\mathrm{RF},#1}}
\newcommand{\Bcal}{\mathcal{B}}
\newcommand{\Scal}{\mathcal{S}}

\newtheoremstyle{custom_style} 
  {\topsep} 
  {\topsep} 
  {\normalfont} 
  {} 
  {\bfseries} 
  {} 
  {.2em} 
  {\thmname{#1} {\normalfont\textbf{\thmnumber{#2}}} \thmnote{\normalfont(#3)\textbf{.}}} 

\theoremstyle{custom_style}

\newtheorem{theorem}{\emph{\underline{Theorem}}}
\newtheorem{acknowledgement}[theorem]{Acknowledgement}
\newtheorem{axiom}[theorem]{Axiom}
\newtheorem{case}[theorem]{Case}
\newtheorem{claim}[theorem]{Claim}
\newtheorem{conclusion}[theorem]{Conclusion}
\newtheorem{condition}[theorem]{Condition}
\newtheorem{conjecture}[theorem]{\emph{\underline{Conjecture}}}
\newtheorem{criterion}[theorem]{Criterion}
\newtheorem{definition}{\emph{\underline{Definition}}}
\newtheorem{exercise}[theorem]{Exercise}
\newtheorem{lemma}{\emph{\underline{Lemma}}}
\newtheorem{example}{\emph{\underline{Example}}}
\newtheorem{observation}{\emph{\underline{Observation}}}
\newtheorem{corollary}{\emph{\underline{Corollary}}}
\newtheorem{notation}[theorem]{Notation}
\newtheorem{problem}[theorem]{Problem}
\newtheorem{proposition}{\emph{\underline{Proposition}}}
\newtheorem{solution}[theorem]{Solution}
\newtheorem{method}[theorem]{Method}
\newtheorem{summary}[theorem]{Summary}
\newtheorem{assumption}{\emph{\underline{Assumption}}}
\newtheorem{remark}{\bf \emph{\underline{Remark}}}

\def\qed{$\Box$}
\def\QED{\mbox{\phantom{m}}\nolinebreak\hfill$\,\Box$}
\def\endproof{\hspace*{\fill}~\qed\par\endtrivlist\vskip3pt}

\def\E{\mathsf{E}}
\def\eps{\varepsilon}
\def\Lsp{{\boldsymbol L}}
\def\Bsp{{\boldsymbol B}}
\def\lsp{{\boldsymbol\ell}}
\def\Ltsp{{\Lsp^2}}
\def\Lpsp{{\Lsp^p}}
\def\Linsp{{\Lsp^{\infty}}}
\def\LtR{{\Lsp^2(\Rst)}}
\def\ltZ{{\lsp^2(\Zst)}}
\def\ltsp{{\lsp^2}}
\def\ltZt{{\lsp^2(\Zst^{2})}}
\def\ninN{{n{\in}\Nst}}
\def\oh{{\frac{1}{2}}}
\def\grass{{\cal G}}
\def\ord{{\cal O}}
\def\dist{{d_G}}
\def\conj#1{{\overline#1}}
\def\ntoinf{{n \rightarrow \infty }}
\def\toinf{{\rightarrow \infty }}
\def\tozero{{\rightarrow 0 }}
\def\trace{{\operatorname{trace}}}
\def\ord{{\cal O}}
\def\UU{{\cal U}}
\def\rank{{\operatorname{rank}}}
\def\acos{{\operatorname{acos}}}

\def\SINR{\mathsf{SINR}}
\def\SNR{\mathsf{SNR}}
\def\SIR{\mathsf{SIR}}
\def\tSIR{\widetilde{\mathsf{SIR}}}
\def\Ei{\mathsf{Ei}}
\def\l{\left}
\def\r{\right}
\def\({\left(}
\def\){\right)}
\def\llargebra{\left\{}
\def\rlargebra{\right\}}
\def\lmidbra{\left[}
\def\rmidbra{\right]}
\def\labs{\left|}
\def\rabs{\right|}

\setcounter{page}{1}

\newcommand{\eref}[1]{(\ref{#1})}
\newcommand{\fig}[1]{Fig.\ \ref{#1}}

\def\bydef{:=}
\def\ba{{\mathbf{a}}}
\def\bb{{\mathbf{b}}}
\def\bc{{\mathbf{c}}}
\def\bd{{\mathbf{d}}}
\def\bee{{\mathbf{e}}}
\def\bff{{\mathbf{f}}}
\def\bg{{\mathbf{g}}}
\def\bh{{\mathbf{h}}}
\def\bi{{\mathbf{i}}}
\def\bj{{\mathbf{j}}}
\def\bk{{\mathbf{k}}}
\def\bl{{\mathbf{l}}}
\def\bm{{\mathbf{m}}}
\def\bn{{\mathbf{n}}}
\def\bo{{\mathbf{o}}}
\def\bp{{\mathbf{p}}}
\def\bq{{\mathbf{q}}}
\def\br{{\mathbf{r}}}
\def\bs{{\mathbf{s}}}
\def\bt{{\mathbf{t}}}
\def\bu{{\mathbf{u}}}
\def\bv{{\mathbf{v}}}
\def\bw{{\mathbf{w}}}
\def\bx{{\mathbf{x}}}
\def\by{{\mathbf{y}}}
\def\bz{{\mathbf{z}}}
\def\b0{{\mathbf{0}}}

\def\bA{{\mathbf{A}}}
\def\bB{{\mathbf{B}}}
\def\bC{{\mathbf{C}}}
\def\bD{{\mathbf{D}}}
\def\bE{{\mathbf{E}}}
\def\bF{{\mathbf{F}}}
\def\bG{{\mathbf{G}}}
\def\bH{{\mathbf{H}}}
\def\bI{{\mathbf{I}}}
\def\bJ{{\mathbf{J}}}
\def\bK{{\mathbf{K}}}
\def\bL{{\mathbf{L}}}
\def\bM{{\mathbf{M}}}
\def\bN{{\mathbf{N}}}
\def\bO{{\mathbf{O}}}
\def\bP{{\mathbf{P}}}
\def\bQ{{\mathbf{Q}}}
\def\bR{{\mathbf{R}}}
\def\bS{{\mathbf{S}}}
\def\bT{{\mathbf{T}}}
\def\bU{{\mathbf{U}}}
\def\bV{{\mathbf{V}}}
\def\bW{{\mathbf{W}}}
\def\bX{{\mathbf{X}}}
\def\bY{{\mathbf{Y}}}
\def\bZ{{\mathbf{Z}}}

\def\mA{{\mathbb{A}}}
\def\mB{{\mathbb{B}}}
\def\mC{{\mathbb{C}}}
\def\mD{{\mathbb{D}}}
\def\mE{{\mathbb{E}}}
\def\mF{{\mathbb{F}}}
\def\mG{{\mathbb{G}}}
\def\mH{{\mathbb{H}}}
\def\mI{{\mathbb{I}}}
\def\mJ{{\mathbb{J}}}
\def\mK{{\mathbb{K}}}
\def\mL{{\mathbb{L}}}
\def\mM{{\mathbb{M}}}
\def\mN{{\mathbb{N}}}
\def\mO{{\mathbb{O}}}
\def\mP{{\mathbb{P}}}
\def\mQ{{\mathbb{Q}}}
\def\mR{{\mathbb{R}}}
\def\mS{{\mathbb{S}}}
\def\mT{{\mathbb{T}}}
\def\mU{{\mathbb{U}}}
\def\mV{{\mathbb{V}}}
\def\mW{{\mathbb{W}}}
\def\mX{{\mathbb{X}}}
\def\mY{{\mathbb{Y}}}
\def\mZ{{\mathbb{Z}}}

\def\cA{\mathcal{A}}
\def\cB{\mathcal{B}}
\def\cC{\mathcal{C}}
\def\cD{\mathcal{D}}
\def\cE{\mathcal{E}}
\def\cF{\mathcal{F}}
\def\cG{\mathcal{G}}
\def\cH{\mathcal{H}}
\def\cI{\mathcal{I}}
\def\cJ{\mathcal{J}}
\def\cK{\mathcal{K}}
\def\cL{\mathcal{L}}
\def\cM{\mathcal{M}}
\def\cN{\mathcal{N}}
\def\cO{\mathcal{O}}
\def\cP{\mathcal{P}}
\def\cQ{\mathcal{Q}}
\def\cR{\mathcal{R}}
\def\cS{\mathcal{S}}
\def\cT{\mathcal{T}}
\def\cU{\mathcal{U}}
\def\cV{\mathcal{V}}
\def\cW{\mathcal{W}}
\def\cX{\mathcal{X}}
\def\cY{\mathcal{Y}}
\def\cZ{\mathcal{Z}}
\def\cd{\mathcal{d}}
\def\Mt{M_{t}}
\def\Mr{M_{r}}
\def\O{\Omega_{M_{t}}}
\newcommand{\figref}[1]{{Fig.}~\ref{#1}}
\newcommand{\tabref}[1]{{Table}~\ref{#1}}

\newcommand{\var}{\mathsf{var}}
\newcommand{\fb}{\tx{fb}}
\newcommand{\nf}{\tx{nf}}
\newcommand{\BC}{\tx{(bc)}}
\newcommand{\MAC}{\tx{(mac)}}
\newcommand{\Pout}{p_{\mathsf{out}}}
\newcommand{\nnn}{\nn\\}
\newcommand{\FB}{\tx{FB}}
\newcommand{\TX}{\tx{TX}}
\newcommand{\RX}{\tx{RX}}
\renewcommand{\mod}{\tx{mod}}
\newcommand{\m}[1]{\mathbf{#1}}
\newcommand{\td}[1]{\tilde{#1}}
\newcommand{\sbf}[1]{\scriptsize{\textbf{#1}}}
\newcommand{\stxt}[1]{\scriptsize{\textrm{#1}}}
\newcommand{\suml}[2]{\sum\limits_{#1}^{#2}}
\newcommand{\sumlk}{\sum\limits_{k=0}^{K-1}}
\newcommand{\eqhsp}{\hspace{10 pt}}
\newcommand{\tx}[1]{\texttt{#1}}
\newcommand{\Hz}{\ \tx{Hz}}
\newcommand{\sinc}{\tx{sinc}}
\newcommand{\tr}{\mathrm{tr}}
\newcommand{\MAI}{\tx{MAI}}
\newcommand{\ISI}{\tx{ISI}}
\newcommand{\IBI}{\tx{IBI}}
\newcommand{\CN}{\tx{CN}}
\newcommand{\CP}{\tx{CP}}
\newcommand{\ZP}{\tx{ZP}}
\newcommand{\ZF}{\tx{ZF}}
\newcommand{\SP}{\tx{SP}}
\newcommand{\MMSE}{\tx{MMSE}}
\newcommand{\MINF}{\tx{MINF}}
\newcommand{\RC}{\tx{MP}}
\newcommand{\MBER}{\tx{MBER}}
\newcommand{\MSNR}{\tx{MSNR}}
\newcommand{\MCAP}{\tx{MCAP}}
\newcommand{\vol}{\tx{vol}}
\newcommand{\ah}{\hat{g}}
\newcommand{\tg}{\tilde{g}}
\newcommand{\teta}{\tilde{\eta}}
\newcommand{\heta}{\hat{\eta}}
\newcommand{\uh}{\m{\hat{s}}}
\newcommand{\eh}{\m{\hat{\eta}}}
\newcommand{\hv}{\m{h}}
\newcommand{\hh}{\m{\hat{h}}}
\newcommand{\Po}{P_{\mathrm{out}}}
\newcommand{\Poh}{\hat{P}_{\mathrm{out}}}
\newcommand{\Ph}{\hat{\gamma}}
\newcommand{\mat}[1]{\begin{matrix}#1\end{matrix}}
\newcommand{\ud}{^{\dagger}}
\newcommand{\C}{\mathcal{C}}
\newcommand{\nn}{\nonumber}
\newcommand{\nInf}{U\rightarrow \infty}

\title{Codebook Design for Limited Feedback in Near-Field XL-MIMO Systems}

\author{
    \IEEEauthorblockN{Liujia Yao,~\IEEEmembership{Student Member,~IEEE}, Changsheng You,~\IEEEmembership{Member,~IEEE}, Zixuan Huang,~\IEEEmembership{Member,~IEEE},\\ Chao Zhou,~\IEEEmembership{Gruduate Student Member,~IEEE}, Zhaohui Yang,~\IEEEmembership{Member,~IEEE}, and Xiaoyang Li,~\IEEEmembership{Member,~IEEE}
    \vspace{-22pt}
    \thanks{Part of this work will be presented at the IEEE Global Communication Conference (GLOBECOM) 2025, Taipei, China, December 8-12, 2025~\cite{yao2024codebook}. L. Yao, C. You, C. Zhou, and X. Li are with the Department of Electrical and Electronic Engineering, Southern University of Science and Technology, Shenzhen, China (e-mail:\{yaolj2024,zhouchao2024\}@mail.sustech.edu.cn, \{youcs,lixy\}@sustech.edu.cn). \textit{(Corresponding authors: Changsheng You, Zixuan Huang).}}
    \thanks{Z. Huang is with the School of Electronics and Communication Engineering, Guangzhou University, China (e-mail:zxhuang@gzhu.edu.cn). }
    \thanks{Z. Yang is with the College of Information Science and Electronic Engineering, Zhejiang University, and also with the Zhejiang Provincial Key Laboratory of Info. Proc., Commun. \& Netw. (IPCAN), Hangzhou, 310027, China (e-mail: yang\_zhaohui@zju.edu.cn).}
    }
}
\maketitle

\begin{abstract}
   In this paper, we study efficient codebook design for limited feedback in extremely large-scale multiple-input-multiple-output (XL-MIMO) frequency division duplexing (FDD) systems.
   It is worth noting that existing codebook designs for XL-MIMO, such as polar-domain codebook, have not well taken into account user (location) distribution in practice, thereby incurring excessive feedback overhead.
   To address this issue, we propose in this paper a novel and efficient feedback codebook tailored to user distribution.
   To this end, we first consider a typical scenario where users are uniformly distributed within a specific polar-region, based on which a sum-rate maximization problem is formulated to jointly optimize angle-range samples and bit allocation among angle/range feedback.
   This problem is challenging to solve due to the lack of a closed-form expression for the received power in terms of angle and range samples.
   By leveraging a Voronoi partitioning approach, we show that uniform angle sampling is optimal for received power maximization.
   For more challenging range sampling design, we obtain a tight lower-bound on the received power and show that \emph{geometric} sampling, where the ratio between adjacent samples is constant, can maximize the lower bound and thus serves as a high-quality suboptimal solution.
   We then extend the proposed framework to accommodate more general non-uniform user distribution via an alternating sampling method.
   Furthermore, theoretical analysis reveals that as the array size increases, the optimal allocation of feedback bits increasingly favors range samples at the expense of angle samples.
Finally, numerical results validate the superior rate performance and robustness of the proposed codebook design under various system setups, achieving significant gains over benchmark schemes, including the widely used polar-domain codebook.
\end{abstract}

\begin{IEEEkeywords}
    Near-field communications, limited feedback, XL-MIMO, codebook design.
\end{IEEEkeywords}
\vspace{-1.2em}
\section{Introduction}
Extremely large-scale MIMO (XL-MIMO) is a promising technology for future wireless communication systems owing to its ability to provide high spectral and energy efficiency~\cite{wang2024tutorial,zengytutorial,you2024generationadvancedtransceivertechnologies}. 
In particular, via increasing the number of antennas at the base station (BS) by another order-of-magnitude, XL-MIMO is expected to shift conventional far-field electromagnetic propagation modeling to the new \emph{near-field} model with spherical wavefronts~\cite{musttwostagehieeam,lywHFB,zengytutorial}.
Such near-field spherical wavefronts bring both opportunities and challenges to wireless system designs. {Specifically, unlike conventional far-field maximum ratio transmission (MRT) beamforming that steers a directional beam~\cite{lywHFB,mustnfbeamtrainingdftcodebook}, the MRT-based near-field beam makes it possible to focus beam energy around/at a certain location/region, which is termed the \emph{beam-focusing} in the literature~\cite{zhangNearFieldSparseChannel2023,zhang2022beamfocusing}.} This beam-focusing effect brings a new degree-of-freedom (DoF) to control the beam energy distribution in both the angle and range domains \cite{mustnfisac}. 

Among others, accurate channel state information (CSI) is crucial for reaping the performance gains of XL-MIMO systems, which, however, is practically challenging to acquire due to the massive number of antennas and the complex near-field channel model. 
For high-frequency band systems, e.g., millimeter-wave (mmWave) and terahertz (THz), compressed sensing (CS) techniques have been widely used for far-field channel estimation by utilizing a predefined codebook~\cite{zhangNearFieldSparseChannel2023}, thanks to the angular sparsity of planar wavefronts in far-field channels. However, these works cannot be directly applied to near-field channel estimation due to the spherical wavefront. 
To address this issue, the authors in~\cite{cuiChannelEstimationExtremely2022} exploited the sparsity of near-field channel in angle and range domains, where a novel polar-domain codebook was designed for near-field channel estimation. As such, the general multi-path narrowband near-field channel can be estimated by using CS techniques, e.g., orthogonal matching pursuit (OMP)~\cite{lee2016channelomp}. However, practical channel estimation usually involves hybrid-field scenarios, where both far-field and near-field scatterers may coexist. 
This issue was tackled in~\cite{wei2022channel}, where the BS first estimates far-field channels using the conventional far-field discrete Fourier transform (DFT) codebook and then estimates near-field channels using the polar-domain codebook. 

However, the above-mentioned works mostly focus on time-division duplex (TDD) systems, where the downlink and uplink channels are reciprocal. 
For frequency-division duplex (FDD) systems, channel reciprocity does not hold in general, which advocates for efficient designs of CSI feedback schemes.
Specifically, due to the limited feedback overhead, users can only feed back a quantized (or reduced) version of CSI, e.g., the codeword index in a predefined codebook. 
This is known as the codebook-based limited feedback scheme, which has been widely adopted in practical wireless systems (e.g., 4G LTE and 5G NR)~\cite{limitedfeedbackSurvey}.
Specifically, a two-level codebook was proposed in \cite{alkhateebLimitedFeedbackHybrid2015} to reduce the feedback overhead for hybrid beamforming design, where the angle-of-arrival (AoA)/angle-of-departure (AoD) and the channel gains are reported by using a DFT codebook and a random vector quantization (RVQ) codebook, respectively.
For example, for conventional far-field FDD systems, various methods have been proposed for CSI feedback (e.g., \cite{alkhateebLimitedFeedbackHybrid2015,shen0218channelfeedback,kim2023feedback}) to quantize the complex-valued channel gains and AoA/AoD. 
This method was further extended in \cite{shen0218channelfeedback} to the multi-path channels.
More recently, a dominating path selection-and-feedback method was proposed in \cite{kim2023feedback}, where only the path gains of the dominating paths were selected for feedback to further reduce the feedback overhead by exploiting the uplink-downlink angular reciprocity.

For FDD XL-MIMO systems, recent efforts have been devoted to studying CSI feedback in the uplink. 
For instance, a deep-learning (DL)-based auto-encoder network was proposed in~\cite{deeplearningXLfeedback} to compress the high-dimensional near-field channel matrix into a compact latent representation, which is then fed back to reduce the feedback overhead.
However, this approach suffers from limited generality, as the network must be retrained for different system configurations.
To overcome this limitation, parametric-based near-field feedback has been investigated in the literature (e.g.,~\cite{polardomainXLfeedback,cuiChannelEstimationExtremely2022}), wherein key parameters of near-field channels, such as angles, ranges, and channel gains, are fed back.
As such, the near-field channel can be reconstructed at the BS based on the reported channel parameters. 
Among the existing approaches, a straightforward method proposed in \cite{polardomainXLfeedback} quantized the near-field geometric parameters (i.e., angles and ranges) by employing the polar-domain angle–range sampling technique introduced in \cite{cuiChannelEstimationExtremely2022}. 
To reduce the feedback overhead, a range-free method was proposed in \cite{twoAngleXLfeedback}, where only the angle information is fed back, while the range information is estimated at the BS based on angle information.
However, the above methods incur substantial feedback overhead to achieve satisfactory performance for practical implementation, as they overlook user distributions in practice, where users are typically distributed within a confined region (e.g., a sector with a certain range) rather than across the entire domain~\cite{zhangNearFieldSparseChannel2023,you2024generationadvancedtransceivertechnologies,mustmixednfffieldinterference}. 
By exploiting useful information of user distribution, more efficient feedback codebooks can be constructed to reduce the feedback overhead, which, however, have not been well studied. 

Motivated by the above, we study efficient codebook design for limited feedback in near-field FDD multi-user systems where the users are distributed in a given region.
Specifically, we first consider a typical case where users are uniformly and randomly distributed in a specific region of the polar space, and then extend our proposed method to more general non-uniform user distributions.
The main contributions of this paper are summarized as follows.

\begin{itemize}
    \item { First, for the case of uniformly distributed users, we formulate an optimization problem to maximize the sum-rate by jointly optimizing the angle–range sampling and the allocation of feedback bits. This problem is challenging to solve due to the lack of a closed-form expression for the received power in terms of the angle-range sampling. 
    To tackle this issue, we adopt a Voronoi partitioning approach and show that uniform angle sampling is optimal for power maximization. 
    For the more challenging range sampling, we obtain a tight lower bound on the received power, and show that \emph{geometric} range sampling, where the ratio between adjacent range samples remains constant, can maximize the lower bound and thus serves as a high-quality suboptimal solution. 
    The above results are further generalized to the \emph{non-uniform} user distribution cases, by adaptively refining the range/angle samples according to the underlying user distribution. 
    Moreover, an RVQ codebook is employed for efficient feedback of the effective channels for digital beamforming design to deal with inter-user interference (IUI). }
    
    \item Second, we provide theoretical analysis on the feedback bit allocation and show that the scaling order of the number of required feedback bits for analog beamforming is $\mathcal{O}(\log M)$, while that for digital beamforming is $\mathcal{O}(K)$, where $M$ and $K$ respectively denote the number of antennas and users. 
    Furthermore, we show that for arrays with a relatively small number of antennas, more feedback bits should be allocated to angle samples than to range samples. However, as the array size scales up, the optimal allocation shifts toward increasing the number of bits assigned to range samples, while reducing those assigned to angle samples.

	\item Third, extensive numerical results demonstrate that the proposed geometric codebook achieves superior sum-rate and beamforming gain over various benchmark schemes. In particular, the proposed codebook achieves near-optimal rate performance, especially in cases with limited feedback budgets or a large number of antennas. Moreover, the rate loss between the low-complexity closed-form codebook (derived under uniform-user assumption) and the proposed feedback codebook remains small even under highly non-uniform user distributions (e.g., hot-spot cases). 
    Furthermore, the proposed codebook is shown to be robust in scenarios with different numbers of antennas and various farthest user ranges. 
\end{itemize}

The remainder of this paper is organized as follows. In Section~\ref{sec:system_model}, we introduce the FDD near-field XL-MIMO system model. 
In Section~\ref{sec:proposed_algorithm}, we consider a three-phase transmission protocol for near-field XL-MIMO systems and formulate a feedback codebook design problem to maximize the expected achievable sum-rate for data transmission. 
In Section~\ref{sec:feedback_codebook}, we propose a novel feedback codebook design for received power maximization and interference cancellation. 
In Section~\ref{sec:analysis}, we provide extensive theoretical analysis to shed useful insights into the required feedback bits. 
Numerical simulations are presented in Section~\ref{sec:number_sim}. The main notations used in this paper are summarized in Table~\ref{tab:notation}.

\vspace{-1em}
\section{System Model and Problem Formulation}
\label{sec:system_model}
\subsection{System Model}
We consider a near-field FDD communication system as shown in Fig.~\ref{fig:system_set}, where a BS equipped with an XL-array serves $K$ single-antenna users. Specifically, the XL-array has $M$  antennas (assumed to be an odd number for ease of exposition), denoted by $\mathcal{M}\triangleq \{0,1,\dots,M-1\}$.
{It is  assumed that all users are located in the radiating near-field region of the BS, for which the BS-user distance $r_k$ is smaller than the so-called Rayleigh distance, i.e., $r_k < \frac{2D^2}{\lambda}$~\cite{zengytutorial}, where $D$ denotes the physical size (aperture) of the array and $\lambda$ denotes the signal wavelength.
Moreover, for ease of analysis, we assume that the user-BS distance $r_k\ge 1.2D$, $\forall k\in \mathcal{K}\triangleq \{1,...,K\}$, for which the user-BS channel can be modeled based on the uniform spherical wavefronts with the same channel amplitude across all antennas~\cite{emil12d}.}
Let $\theta_k=\sin\varphi_k $ denote the spatial angle of user $k$ w.r.t. the BS, where $\varphi_k$ denotes the physical angle.
The locations of users (i.e., $(\theta_k, r_k), k\in \mathcal{K}$) are assumed to be distributed in a specific region as follows. 
\begin{assumption}[Location distribution of users]
    \label{a:location_distribution}
    The users are randomly and uniformly distributed in a given region\footnote{For the purpose of the exposition, we consider the case where the users are uniformly located around the boresight, for which the spatial angle approximately follows the uniform distribution~\cite{2023sparsearray}. 
    The proposed codebook design method can be extended to the case of non-uniform user distribution, with details given in \textbf{Remark \ref{r:non_unif}} of Section III-B-2. 
    For an irregular user-distribution geometry shape, one can find a large enough polar-region to cover the user distribution, and this scenario falls into the non-uniform user-distribution case.}
    $$\mathcal{P}\triangleq\{(\theta, r) | \theta \sim \mathcal{U}(\theta_{\min},\theta_{\max}), r\sim \mathcal{U}(r_{\min},r_{\max})\},$$
    where $\theta_{\min}$ and $\theta_{\max}$ denote the minimum and maximum spatial angles, respectively, and $r_{\min}$ and $r_{\max}$ denote the minimum and maximum ranges, respectively. 
\end{assumption}

\begin{table}[t]
    \centering
    \caption{Main Notations}
    \begin{tabular}{c>{\centering\arraybackslash}p{0.7\linewidth}}
        \toprule
        \textbf{Symbol} & \textbf{Definition} \\
        \midrule
        $r_k$ & Range of user $k$ \\
        $\theta_k$ & Spatial angle of user $k$ \\
        $B_1$ (or $B_2$) & Phase 1 (or Phase 2) feedback bit overhead \\
        $p$ (or $q$) & Angle (or range) feedback bit overhead \\
        $\hat{\theta}_i$ (or $\hat{r}_i$) & The $i$-th angle (or range) sample \\
        $\mathcal{S}_{\rm a}$ (or $\mathcal{S}_{\rm r}$) & Angle (or range) sampling set \\
        $\varepsilon_\theta$ (or $\varepsilon_{r}$) & Angle (or range) sampling error \\
        $f$ & Analog beamforming gain \\
        $\Gamma$ &  Expected beamforming gain w.r.t. user locations \\
        $\tilde{\varepsilon}_{\theta}$ (or $\tilde{\varepsilon}_{r}$) & Minimum angle (or range) sampling error \\
        $\mathcal{Q}$ & Angle domain set $[\theta_{\min},\theta_{\max}]$ \\
        $\mathcal{C}_i$ & The $i$-th angle cell \\
        $\mathcal{R}$ & Range domain set $[r_{\min},r_{\max}]$ \\
        $\mathcal{I}_i$ & The $i$-th range cell \\
        $\mathring{\varepsilon}_r$ & Surrogate range sampling error \\
        \bottomrule
    \end{tabular}
    \label{tab:notation}
\end{table}

\textbf{\underline{Channel model:}} Let $\ch_k\in \mathbb{C}^{M\times 1}$ denote the multi-path channel from the BS to user $k$. Based on the near-field channel model in high-frequency bands~\cite{zhangNearFieldSparseChannel2023},  $\ch_k, \forall k \in \mathcal{K}$, can be modeled as
\begin{equation}
    \label{eq:channel}
    \ch_k = \sum_{\ell=1}^{L} \sqrt{M} \beta_{k,\ell} \bfa(\theta_{k,\ell}, r_{k,\ell}),
\end{equation}
where $\ell = 1$ and $\ell > 1$ refer to the line-of-sight (LoS) and non-LoS (NLoS) path-indices, respectively, and $\beta_{k,\ell}$ is the complex-valued channel gain of user $k$ in path $\ell$. Herein, $\bfa(\theta_{k,\ell}, r_{k,\ell})\in \mathbb{C}^{M\times 1}$ denotes the near-field channel steering vector, which is modeled as follows based on the spherical wavefront
\begin{equation}{
    \label{eq:steering_vector}
    \begin{aligned}
    \bfa(\theta_{k,\ell}, r_{k,\ell}) = \frac{1}{\!\!\sqrt{M}}
    \bigg[e^{\jmath 2 \pi\frac{r_{k,\ell}^{(0)}-r_{k,\ell}^{}}{\lambda}},\dots,
    e^{\jmath 2 \pi\frac{r_{k,\ell}^{(M-1)}-r_{k,\ell}}{\lambda}}\bigg]^H\!\!,
    \end{aligned}}
\end{equation}
with $r_{k,\ell}^{(m)}$ representing the range between the $\ell$-th scatterer (or the user itself) of user $k$ and the $m$-th antenna of the BS. Moreover, $r_{k,\ell}$ and $\theta_{k,\ell}$ are the reference (w.r.t. the origin point, see Fig.~\ref{fig:system_set}) range and spatial angle, respectively. Specifically, let $\delta_m = \frac{2m-M-1}{2} $ denote the coordinate of the $m$-th antenna along the $y$-axis. Based on the Fresnel approximation, $r_{k,\ell}^{(m)}= \sqrt{ r_{k,\ell}^2 + \delta_m^2 d_0 ^2 - 2 r_{k,\ell} \theta_{k,\ell} \delta_m d_0}$ in~\eqref{eq:steering_vector} can be approximated as~\cite{zhangNearFieldSparseChannel2023}
\begin{align}
    r_{k,\ell}^{(m)} \approx r_{k,\ell}^{(0)}- \delta_m d_0 \theta_{k,\ell} + \frac{\delta_m^2 d_0^2}{2r_{k,\ell}}(1-\theta_{k,\ell}^2), \forall k,\forall \ell,\forall m \label{eq:steering_vector_approx}
\end{align}
where $d_0=\frac{\lambda}{2}$ denotes the inter-antenna spacing.

\begin{figure}
    \centering
     \vspace{-13pt}
    \includegraphics[width=0.8\linewidth]{./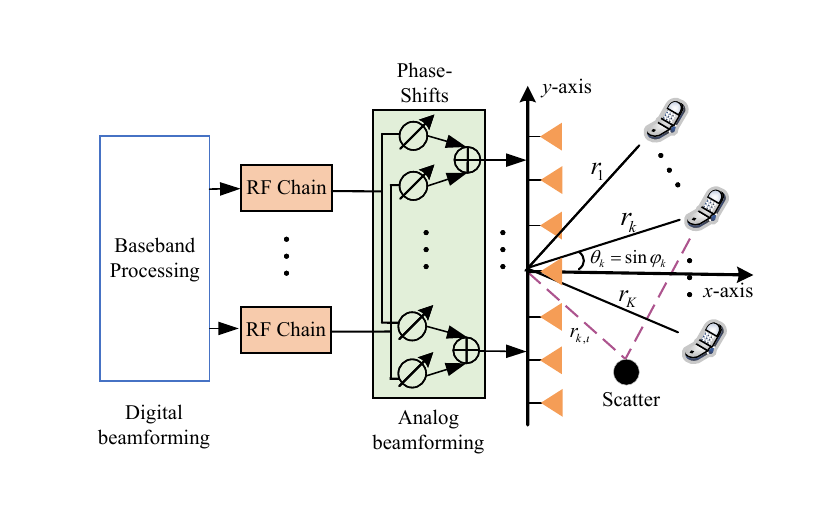}
    \vspace{-20pt}
    \caption{An XL-MIMO near-field communication system.}\label{fig:system_set}
    \vspace{-16pt}
\end{figure}
\begin{figure*}[th]
    \centering
    \vspace{-14pt}
    \begin{subfigure}[b]{0.75\linewidth}
        \centering
        \includegraphics[width=\linewidth]{./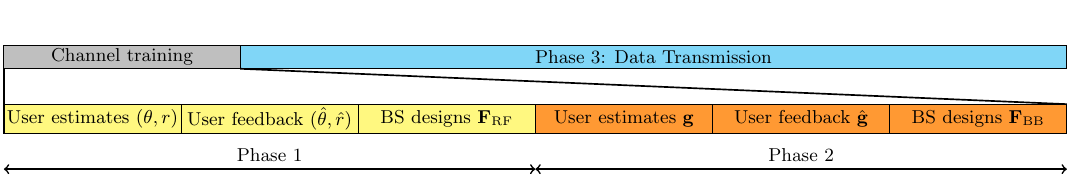}
        \caption{Considered transmission protocol for FDD near-field systems.}
        \label{fig:tx_protocal}
    \end{subfigure}
    \begin{subfigure}[b]{0.75\linewidth}
        \centering
        \includegraphics[width=\linewidth]{./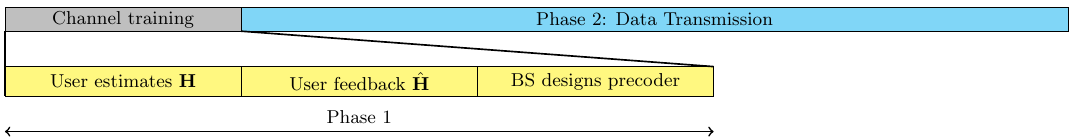}
        \caption{ A conventional FDD transmission protocol in \cite{DFTcodebookXLbeamtraining,polardomainXLfeedback,pengDeepLearningBasedCSI2024}.}
        \label{fig:tx_protocal_old}
    \end{subfigure}
    \caption{Comparison of FDD transmission protocols.}
    \label{fig:protocols}
    \vspace{-16pt}
\end{figure*}
\textbf{\underline{Signal model:}} For FDD XL-MIMO systems in high-frequency bands, the channel gains of NLoS paths are much smaller
(e.g., 20 dB weaker) than those of the LoS path~\cite{lywTut,zengytutorial}. As such, we first consider the LoS channel case in the following, while the results will be extended to the general multi-path case in Section~\ref{sec:feedback_codebook}. Under this setup, the channel for each user $k$ in~\eqref{eq:channel} reduces to $\ch_k =\sqrt{M} \beta_{k} \bfa(\theta_{k},r_{k})$, where we omit the path-index for simplicity. {We consider a cost-effective hybrid beamforming architecture for XL-arrays \cite{lywHFB,lywTut,zhou2024multibeamtrainingnearfieldcommunications}, which comprises $K$ radio frequency (RF) chains for serving $K$ single-antenna users.} Let $\mathbf{x} \in \mathbb{C}^{K \times 1}$ denote the transmitted signals to all the $K$ users with $\expect\left[\mathbf{x x}^H\right]=\frac{P_{\text {tol }}}{K} \mathbf{I}_K$, where $P_{\text {tol}}$ denotes the total transmit power of the BS. The received signal at user $k$ is given by
\begin{equation}
    \begin{aligned}
         & y_k = \mathbf{h}_k^H \mathbf{F}_{\rm RF} \mathbf{F}_{\rm BB} \mathbf{x} + n_k, \forall k \in \mathcal{K},
    \end{aligned}
\end{equation}
where $\frf \in \mathbb{C}^{M \times K}$ and $\mathbf{F}_{\mathrm{BB}} \in \mathbb{C}^{K \times K}$ denote the analog and digital beamforming matrices, respectively, and $n_k\sim \mathcal{CN}(0,\sigma^2)$ is the received additive white Gaussian noise (AWGN) at user $k$ with zero mean and variance $\sigma^2$. As such, the achievable rate of the $k$-th user in bits/second/Hertz (bps/Hz) is given by
\begin{equation}
    \begin{aligned}
        \label{eq:rate}
         & R_k\!=\!\log _2\left(1+\frac{\frac{P_{\text {tol }}}{K}\left|\mathbf{h}_k^H \frf \fbbk{k}\right|^2}{\frac{P_{\text {tol }}}{K} \sum_{i \neq k}\left|\mathbf{h}_k^H \frf \fbbk{i}\right|^2+\sigma^2}\right),
    \end{aligned}
\end{equation}
where $\fbbk{k}$ denotes the $k$-th column of $\mathbf{F}_{\rm BB}$.\vspace{-1em}

\subsection{Problem Formulation}
In FDD systems, the CSI has to be fed back from the users to the BS in the uplink, which may incur high feedback overhead and hence degrade system spectral efficiency~\cite{limitedfeedbackSurvey,guojiajia2020feedback}. To reduce feedback overhead, we consider a \textit{codebook-based} feedback scheme (whose details will be specified in Section~\ref{sec:proposed_algorithm}). 
Specifically, similar to~\cite{alkhateebLimitedFeedbackHybrid2015,shen0218channelfeedback}, the users send the information of their individual analog beamforming codewords and effective channel codewords back to the BS based on estimated CSI to maximize their sum-rate. All feedback codewords are chosen from pre-designed codebooks, in which the analog beamforming codebook is denoted by $\Bcal_1$ and the effective channel codebook is denoted by $\Bcal_2$. 

For the design of CSI feedback scheme, we aim to maximizing the expected sum-rate, for which the users' locations are distributed according to a specific distribution given in \textbf{Assumption~\ref{a:location_distribution}}.  
Specifically, given analog beamforming matrix $\frf$, the digital beamforming can be designed based on the effective channels of users (denoted by $\mathbf{g}_k\triangleq \mathbf{h}^H_k \frf,\forall k \in \mathcal{K}$). 
Define $\fbb(\mathbf{G})$ as the digital beamforming matrix given the effective channel matrix $\mathbf{G}\triangleq [\mathbf{g}_1,\mathbf{g}_2,\dots,\mathbf{g}_K]$. Based on the above, the optimization problem for multi-user sum-rate maximization can be mathematically formulated as
\begin{subequations}
    \begin{align}
        \label{eq:problem} \text{(P1):} & \max_{\Bcal_1,\mathcal{B}_2} \quad \expect_{(\theta,r)}\left[ \max_{\frf,\fbb(\mathbf{G})} \sum_{k\in \mathcal{K}} R_k\right]                          \\
                                    &\; \text{s.t.} \;\;\;~~~ \frfk{k}\in \Bcal_1, ~\forall k\in\mathcal{K}, \label{eq:RF_codebook} \\ 
                                    & \;\phantom{\text{s.t. }} \;\;~~~\mathbf{g}_k\in \Bcal_2, ~\forall k\in\mathcal{K}, \label{eq:BB_codebook} \\
                                    & \;\phantom{\text{s.t. }} ~~~~\left\|\mathbf{F}_{\mathrm{RF}} \mathbf{f}_{\mathrm{BB},  k}\right\|_F^2=1, ~\forall k \in \mathcal{K}, \label{eq:RF_norm_cons}
    \end{align}
\end{subequations}
where the expectation is taken over the random users' locations (see \textbf{Assumption~\ref{a:location_distribution}}), constraints \eqref{eq:RF_codebook} and \eqref{eq:BB_codebook} specify the codebook-based constraints, and constraint \eqref{eq:RF_norm_cons} specifies the unit-power constraint for each user accounting for equal power allocation\footnote{To simplify the analysis and focus on the codebook design, we assume equal power allocation for all users in this paper, whose optimization can also be incorporated into the proposed framework as an extension, via e.g., fractional programming~\cite{lywHFB}.}.

It is worth noting that the customized codebooks $\Bcal_1$ and $\Bcal_2$ are designed based on the \emph{user-location distribution} given in \textbf{Assumption~\ref{a:location_distribution}}, and $\frf$ and $\fbb$ are designed based on each \emph{possible realization} of locations of users. 
Problem (P1) is generally difficult to solve due to the lack of a closed-form expression for the expected sum-rate, the non-convex constraints~\eqref{eq:RF_codebook} and~\eqref{eq:BB_codebook}, as well as the intricate coupling between the analog beamforming matrix $\frf$ and the digital beamforming matrix $\fbb$. 
\section{CSI-feedback-based Hybrid Beamforming}
\label{sec:proposed_algorithm}
{ To address the intricate coupling between the analog and digital beamforming in Problem (P1) and avoid prohibitively high overhead for information exchange between the users and the BS, we consider a three-phase transmission protocol shown in Fig.~\ref{fig:protocols}(a), which is elaborated as follows.}

\textbullet \textbf{ Phase 1 (Location-based limited feedback for analog beamforming design)} In Phase 1, each user estimates its downlink CSI, based on which the codeword (index) from $\mathcal{B}_1$ that maximizes its received power is fed back to the BS for designing its analog beamforming $\frfk{k}$. Let $\hat{\theta}$ (or $\hat{r}$) be an arbitrary angle (or range) sample in the angle (or range) domain. Let $B_1=p+q$ denote the number of feedback bits in this phase, with $p$ and $q$ denoting the number of bits allocated for angle and range feedback, respectively. The feedback codebook in Phase 1 (or equivalently the codebook for analog beamforming) can be constructed as $\mathcal{\Bcal}_1\triangleq \{\mathbf{b}_{1}(i,j)\mid 1 \le i\le 2^p, 1 \le j\le 2^q\}$, where 
    $\mathbf{b}_{1}(i,j)=\mathbf{a}(\hat{\theta}_i,\hat{r}_j), \forall \hat{\theta}_i\in \mathcal{S}_{\rm a},\forall\hat{r}_j\in \Scal_{\rm r}$. Herein, $\Scal_{\rm a}$ and $\Scal_{\rm r}$ denote the sampling sets for the angles and ranges given the bit budgets $|\Scal_{\rm a}|=2^p, |\Scal_{\rm r}|=2^q$, respectively. Based on the above, the analog beamforming codebook design problem to maximize the expected received power (under limited feedback) for an arbitrary user $k$ (following a certain distribution) can be mathematically formulated as
    \begin{subequations}
        \begin{align}
            \label{eq:P1_RF_reformulated}
            (\text{P2}):& \max_{\Bcal_1} \quad \expect_{(\theta,r)} \left[ \max_{\mathbf{f}_{\mathrm{RF},k}\in \Bcal_1} \left|\mathbf{h}_k^H \mathbf{f}_{\mathrm{RF},k}\right|\right] \\
            &\text{ s.t.} \quad \;\; |\Bcal_1| = 2^{B_1} \label{eq:P1_RF_constraint2}.
        \end{align}
    \end{subequations}

    \textbullet \textbf{ Phase 2 (Effective-channel limited feedback for digital beamforming design)} In Phase 2, each user estimates its effective channel $\mathbf{g}_k$ and selects a codeword from the codebook $\mathcal{B}_2$ to feed it back to the BS.
    The BS then collects all users' codewords for Phase 2 to reconstruct the effective channel matrix, based on which the BS designs the digital beamforming matrix to suppress the residual IUI and maximize the expected sum-rate. The codebook in Phase 2 can be constructed as $\mathcal{B}_2 \triangleq \{\mathbf{b}_2(i) \mid 1 \le i \le 2^{B_2} \}$, where $B_2$ denotes the feedback bit overhead in Phase 2 and $\mathbf{b}_2(i)$ is the $i$-th codeword of $\Bcal_2$. The optimal codeword for user $k$ is chosen as $\hat{\mathbf{g}}_k = \arg\max_{\mathbf{b}_2 \in \mathcal{B}_2} |\mathbf{g}_k^H \mathbf{b}_2|^2$~\cite{alkhateebLimitedFeedbackHybrid2015}, and the effective channel matrix is reconstructed as $\hat{\mathbf{G}} \triangleq [\hat{\mathbf{g}}_1, \dots, \hat{\mathbf{g}}_K]$.
    Then the multi-user sum-rate maximization problem can be reformulated as 
    \begin{subequations}
        \begin{align}
            \label{eq:P1_BB_reformulated}
            (\text{P3}): & \max_{\Bcal_2} \quad \expect_{(\theta,r)}\left[ \max_{\fbb(\hat{\mathbf{G}})} \sum_{k=1}^{K} R_k \right] \\
            &\text{ s.t. } \;\quad \left\| \mathbf{f}_{\mathrm{BB}, k}\right\|_F^2=1, \forall k \in \mathcal{K}, \\
            &\phantom{ \text{s.t. }}\quad \;\;\; {\hat{\mathbf{g}}_k}\in \mathcal{B}_2, \forall k \in \mathcal{K}, \label{eq:P1_BB_reformulated2}\\
            &\phantom{ \text{s.t. }} \quad\;\;\;|\Bcal_2| = 2^{B_2}.
        \end{align}
    \end{subequations}

    \textbullet \textbf{ Phase 3 (Data transmission)}: In this phase, the BS sends data to all users based on the designed $\frf$ and $\fbb$. For ease of analysis and focusing on the feedback codebook design, we make the following assumptions. The imperfect CSI case will be discussed later in Section~\ref{sec:feedback_codebook}.
    
    { Different from the conventional FDD transmission protocol shown in Fig.~\ref{fig:protocols}(b), in our considered protocol, we first aim to maximize the received power at each user, and then suppress the residual IUI via digital beamforming design.}
\begin{assumption}
     Similar to \cite{alkhateebLimitedFeedbackHybrid2015,au-yeungPerformanceRandomVector2007a}, we make the following assumptions for CSI.
    \begin{enumerate}
        \item Each user $k$ perfectly estimates its downlink CSI $\ch_k$ and its effective channel $\mathbf{g}_k$ via existing efficient channel estimation schemes (see, e.g., \cite{cuiChannelEstimationExtremely2022,lu2023nearfieldestimation,zhangNearFieldSparseChannel2023}). 
        \item The BS perfectly knows the feedback information via a low-rate and reliable uplink channel~\cite{limitedfeedbackSurvey}.
    \end{enumerate}
\end{assumption}
\vspace{-1em}

\section{Proposed Codebook Design}
\label{sec:feedback_codebook}
In this section, a new polar-domain codebook for the limited feedback in Phase 1 is proposed for solving Problem (P2), while an RVQ codebook is designed for Phase 2 to solve Problem~(P3) by utilizing the ZF technique.
We show that the proposed \textit{geometric} codebook, which utilizes a novel \textit{geometric} range sampling set, achieves significantly higher expected received power at each user compared to the existing polar-domain codebook in \cite{cuiChannelEstimationExtremely2022}. 
\vspace{-1em}
\subsection{Codebook Design for Phase 1}
Each user $k$ selects the codeword $\mathbf{b}_{1,k}^\star$ from $\mathcal{B}_1$ that maximizes its received power. Ignoring the constant channel gain, this is equivalent to finding a codeword that maximizes the beamforming gain, i.e., $\mathbf{b}_{1,k}^\star = \arg\max_{\mathbf{b}_1 \in \mathcal{B}_1} |\bfa^H(\theta_k,r_k)\mathbf{b}_1|$. The index of this optimal codeword is then fed back to the BS.
{ The BS then uses this codeword to design $\frfk{k}=\mathbf{b}_{1,k}^\star$. Specifically, the design of $\mathcal{B}_1$ is equivalent to the design of angle and range sampling sets $\Scal_{\rm a}$ and $\Scal_{\rm r}$ (as well as their corresponding bit allocation $p,q$ given a  practical constraint on the total number of bits $p+q =B_1$). }
Thus, Problem (P2) can be equivalently reformulated as follows for each user, with the user index omitted for brevity
\begin{subequations}
    \begin{align}
        (\text{P4}): & \max_{\Scal_{\rm a},\Scal_{\rm r},p,q} \;\; \expect_{(\theta,r)} \left[\max_{\hat{\theta}\in \Scal_{\rm a},\hat{r}\in \Scal_{\rm r}} \left|\mathbf{a}^H\left(\theta, r\right) \mathbf{a}(\hat{\theta},\hat{r})\right|\right] \label{eq:P4}\\
        &\;\;\;\text{s.t.}\;\; \;\;\;\;\left|\Scal_{\rm a}\right| = 2^{p},\left
|\Scal_{\rm r}\right| = 2^{q}, \label{eq:P4_cardinality_cons_r}\\
&\phantom{\;\text{s.t.}\;\; \qquad} p+q = B_1,
        \label{eq:P4_cardinality_cons} \\
        &\phantom{\;\text{s.t.}\;\; \qquad} p,q \in \mathbb{Z}. \label{eq:P4_cardinality_cons_pq}
    \end{align}
\end{subequations}
Problem (P4) is a challenging non-convex optimization problem. The difficulties stem from the fact that the objective function lacks a closed-form expression, and the optimization of the sampling sets ($\mathcal{S}_{\rm a}$, $\mathcal{S}_{\rm r}$) is coupled with the bit allocation ($p,q$).
{To tackle these difficulties, we solve Problem (P4) by alternately optimizing ($\mathcal{S}_{\rm a}$, $\mathcal{S}_{\rm r}$) and $(p,q)$ based on a two-layer optimization framework, as elaborated below.}
\subsubsection{\textbf{Problem Reformulation}}
\label{sec:problem_reformulation}
The inner-layer subproblem optimizes the sampling sets $\mathcal{S}_{\rm a}$ and $\mathcal{S}_{\rm r}$ for a given feasible bit allocation $(p,q)$, and the outer-layer subproblem optimizes the bit allocation $(p,q)$ based on the solution to the inner-layer problem.

{\textbf{Optimizing sampling sets given bit allocation:} }Specifically, we first define $\epsT \triangleq | \theta-\hat{\theta} |$ and $\varepsilon_{r} \triangleq | \frac{1}{r} - \frac{1}{\hat{r}} |$ as the \emph{sampling errors} related to the angle and range parameters, respectively. Then, the analog beamforming gain in \eqref{eq:P4} under a feedback codebook $\mathcal{\Bcal}_1$ can be approximated as follows.
\begin{lemma}
    \label{l:beamforming_gain}
    Given the feedback codebook $\Bcal_1$ with a sufficient\footnote{For example, $p$ being 10--12 is sufficient~\cite{cuiChannelEstimationExtremely2022,shen0218channelfeedback} for this lemma, which is practically acceptable.} angle bit overhead $p$, the analog beamforming gain in~\eqref{eq:P4} can be approximated as
    \begin{align}
        &\left|  \bfa^H(\theta,r) \bfa(\hat{\theta},\hat{r})\right|\notag\\
        \approx & \frac{2}{M} \left| \sum_{m=0}^{(M-1)/2} e^{\jmath m \pi \epsT} e^{\jmath k_c m^2 d_0^2 \frac{1}{2} \vartheta \varepsilon_{r}}  \right| \triangleq f(\varepsilon_\theta,\vartheta\varepsilon_{r}),      \label{eq:approximation}
    \end{align}
    where $k_c \triangleq 2\pi/\lambda$ and $\vartheta \triangleq 1-\theta^2$.
\end{lemma}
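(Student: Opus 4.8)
The plan is to start from the definition of the steering-vector inner product, insert the Fresnel phase approximation \eqref{eq:steering_vector_approx}, and then collapse the resulting two-sided exponential sum to the one-sided form claimed in \eqref{eq:approximation}. Writing the gain as $|\bfa^H(\theta,r)\bfa(\hat\theta,\hat r)| = \tfrac{1}{M}\bigl|\sum_{m} e^{\jmath \Delta\phi_m}\bigr|$, where $\Delta\phi_m$ denotes the per-antenna phase difference, I would first substitute \eqref{eq:steering_vector_approx} for both steering vectors. The reference term common to the two vectors is independent of $m$ and factors out as a global phase that the outer modulus discards, so only the $\delta_m$-linear (angle) and $\delta_m^2$-quadratic (range) pieces survive in $\Delta\phi_m$.

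Next I would simplify these two surviving pieces. Using $k_c d_0 = \tfrac{2\pi}{\lambda}\cdot\tfrac{\lambda}{2}=\pi$, the linear piece becomes $-\pi\delta_m(\theta-\hat\theta)=\pm\pi\delta_m\varepsilon_\theta$ up to sign, while the quadratic piece becomes $k_c\tfrac{\delta_m^2 d_0^2}{2}\bigl(\tfrac{1-\theta^2}{r}-\tfrac{1-\hat\theta^2}{\hat r}\bigr)$. Here the hypothesis of a sufficiently large angle budget $p$ does its work: it forces $\hat\theta\approx\theta$, so that $1-\hat\theta^2\approx 1-\theta^2=\vartheta$ and the bracket collapses to $\vartheta\bigl(\tfrac1r-\tfrac1{\hat r}\bigr)=\pm\vartheta\varepsilon_r$, matching the quadratic exponent in \eqref{eq:approximation}. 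Finally, because the modulus is invariant under complex conjugation (which flips both signs) and under the array reflection $\delta_m\to-\delta_m$ (which flips only the linear sign while fixing the quadratic one), the gain depends only on $|\varepsilon_\theta|$ and $|\vartheta\varepsilon_r|$; this lets me fix both signs as positive, recalling that $\varepsilon_\theta,\varepsilon_r\ge 0$ by definition and $\vartheta=1-\theta^2\ge 0$.

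The final and most delicate step is to pass from the two-sided sum over the symmetric positions $\delta_m\in\{-(M-1)/2,\dots,(M-1)/2\}$ to the one-sided sum $\sum_{m=0}^{(M-1)/2}$ carrying the prefactor $2/M$. Folding the array about its center, each pair $\delta_m=\pm m$ contributes exactly $2\cos(\pi m\varepsilon_\theta)\,e^{\jmath k_c d_0^2\vartheta\varepsilon_r m^2/2}$, whereas the target summand in \eqref{eq:approximation} is $2\,e^{\jmath\pi m\varepsilon_\theta}\,e^{\jmath k_c d_0^2\vartheta\varepsilon_r m^2/2}$; the two differ by the purely imaginary residual $2\jmath\sin(\pi m\varepsilon_\theta)\,e^{\jmath k_c d_0^2\vartheta\varepsilon_r m^2/2}$, and the single center element is double-counted, an $O(1/M)$ term. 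The obstacle I expect to be hardest is therefore not the algebra but controlling this residual: I would argue that in the large-$p$ (hence small-$\varepsilon_\theta$) regime, $\sin(\pi m\varepsilon_\theta)$ is small term-by-term and the accumulated imaginary discrepancy $\sum_m \sin(\pi m\varepsilon_\theta)\,e^{\jmath k_c d_0^2\vartheta\varepsilon_r m^2/2}$ stays negligible relative to the retained sum, while the center double-count vanishes as $M$ grows. Making this last bound precise — showing the dropped antisymmetric (odd-in-$\delta_m$) contribution is genuinely lower order whenever the beamforming gain is non-negligible — is the crux of the argument and the step demanding the most care.
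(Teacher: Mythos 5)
Your proposal follows essentially the same route as the paper's Appendix~\ref{sec:appendix_bfgain}: substitute the Fresnel approximation, use $\hat{\theta}\approx\theta$ (large $p$) to replace $1-\hat{\theta}^2$ by $\vartheta=1-\theta^2$ in the quadratic phase (the paper's step $(b)$), and fold the symmetric two-sided sum over $\delta_m$ into the one-sided sum with prefactor $2/M$ (the paper's step $(a)$). The only difference is one of explicitness: you identify the dropped odd-in-$\delta_m$ residual $2\jmath\sin(\pi m\varepsilon_\theta)e^{\jmath k_c m^2 d_0^2\vartheta\varepsilon_r/2}$ and the $O(1/M)$ center double-count as the terms to be controlled, whereas the paper simply attributes step $(a)$ to ``approximate symmetry of the array response'' for small $\varepsilon_\theta,\varepsilon_r$ and verifies it numerically in Fig.~\ref{fig:approximation_valid}.
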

\begin{proof}
        See Appendix~\ref{sec:appendix_bfgain}.
\end{proof}
Based on \textbf{Lemma~\ref{l:beamforming_gain}}, given $\Bcal_1$ with a sufficiently large angle bit overhead, the expected analog beamforming gain in~\eqref{eq:P4} under arbitrary sampling sets $\mathcal{S}_{\rm a},\mathcal{S}_{\rm r}$ can be approximated as
\begin{align}
   &\expect_{(\theta,r)}\left[\max_{\hat{\theta}\in \Scal_{\rm a},\hat{r}\in \Scal_{\rm r}}\left|\bfa^H(\theta,r) \bfa(\hat{\theta},\hat{r})\right|\right] \notag \\
    \approx &\expect_{(\theta,r)} \left[\max_{\hat{\theta}\in \Scal_{\rm a},\hat{r}\in \Scal_{\rm r}} f(\varepsilon_\theta,\vartheta\varepsilon_{r})\right] \triangleq  \Gamma(\Scal_{\rm a},\Scal_{\rm r}|p,q).  \label{eq:beamforming_gain}
\end{align}
Based on the approximation in~\eqref{eq:beamforming_gain} and given any feasible bit allocation $(p,q)$, Problem (P4) approximately reduces to the following inner-layer optimization problem
\begin{equation}
        \label{eq:P5}
        (\text{P5}): \max_{\Scal_{\rm a},\Scal_{\rm r}} \;\; \Gamma(\Scal_{\rm a},\Scal_{\rm r}|p,q) 
        \;\;\text{s.t.} \;\; \eqref{eq:P4_cardinality_cons_r}.  \notag
\end{equation}

\textbf{{Optimizing bit allocation with optimized sampling sets:}}
Let $\tilde{\Gamma}(p,q) = \Gamma(\Scal_{\rm a}^{*},\Scal_{\rm r}^* | p,q)$ denote the expected analog beamforming gain given the bit allocation $(p,q)$, with $\Scal_{\rm a}^*$ and $\Scal_{\rm r}^*$ obtained by solving Problem (P5). Based on the above, Problem (P4) reduces to the following outer-layer optimization problem
\begin{equation}
        \label{eq:P6}(\text{P6}):  \max_{p,q} \; \tilde{\Gamma}(p,q)\;\;\text { s.t. } \;\;\eqref{eq:P4_cardinality_cons},\eqref{eq:P4_cardinality_cons_pq}.\notag
\end{equation}
In the following, we first solve Problem (P5), given any feasible feedback bit allocation, and then solve Problem~(P6).

\subsubsection{\textbf{Proposed Solution to Problem (P5)}}
\label{sec:codebook_1_design}
Problem (P5) is difficult to solve since $\Gamma(\Scal_{\rm a},\Scal_{\rm r} | p,q)$ admits no closed-form expression. To tackle this issue and characterize useful properties of $f(\varepsilon_\theta,\vartheta\varepsilon_{r})$, we first plot the curves of $f(\varepsilon_\theta,\vartheta\varepsilon_{r})$ w.r.t. $\varepsilon_\theta$ and $\varepsilon_{r}$ in Fig.~\ref{fig:f_AoD_r}, respectively. For ease of analysis, we assume $\vartheta=1$ in the following discussions.
\begin{observation}
    \label{p:approximation_decreasing}
        When $\varepsilon_{\iota}$ with $\iota\in\{\theta,{r}\}$ is smaller than a given threshold $\varepsilon_{\iota}^{\rm th}$, the function $f(\varepsilon_\theta,\varepsilon_{r})$ approximately linearly decreases with $\varepsilon_\theta$ and $\varepsilon_{r}$.
    \end{observation}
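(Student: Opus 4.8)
The plan is to prove \textbf{Observation~\ref{p:approximation_decreasing}} by analyzing the two cross-sections of $f$ separately, i.e., the dependence on $\varepsilon_\theta$ with $\varepsilon_r$ held fixed, and vice versa, since the statement concerns approximately linear roll-off in each argument individually. Starting from the closed form in \eqref{eq:approximation} with $\vartheta=1$, I would derive an explicit surrogate for each cross-section and then quantify how accurately a decreasing line fits it over the sub-region $\varepsilon_\iota < \varepsilon_\iota^{\rm th}$.

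For the angle direction I would set $\varepsilon_r = 0$, so that the sum collapses to a geometric series; summing it yields the Dirichlet-kernel form
$$ f(\varepsilon_\theta, 0) = \frac{2}{M}\left| \frac{\sin\bigl(\tfrac{(M+1)\pi\varepsilon_\theta}{4}\bigr)}{\sin\bigl(\tfrac{\pi\varepsilon_\theta}{2}\bigr)} \right|. $$
From this I would (i) verify the peak value $f(0,0) = (M+1)/M \approx 1$, (ii) show that $f(\varepsilon_\theta,0)$ decreases monotonically on the main lobe $0\le \varepsilon_\theta \le 4/(M+1)$ down to the first null, and (iii) bound the deviation of this curve from the chord joining the peak to the threshold point, which justifies replacing it by a decreasing line on $\varepsilon_\theta < \varepsilon_\theta^{\rm th}$.

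For the range direction I would set $\varepsilon_\theta = 0$, leaving the quadratic-phase (Fresnel) sum $\frac{2}{M}\bigl|\sum_m e^{\jmath c m^2 \varepsilon_r}\bigr|$ with $c = \tfrac{1}{2}k_c d_0^2$, where the geometric-series trick no longer applies. Here I would approximate the sum by the corresponding Fresnel integral, write the magnitude through the Fresnel cosine/sine integrals $C(\cdot)$ and $S(\cdot)$, and carry out a small-argument expansion. This recovers the same peak value $\approx 1$ and exhibits the same qualitative shape, namely a smooth maximum at $\varepsilon_r = 0$ followed by a monotone roll-off that is well modeled by a decreasing line over $\varepsilon_r < \varepsilon_r^{\rm th}$.

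The main obstacle is that, strictly speaking, the decrease is \emph{quadratic} rather than linear right at the peak: because $f$ is the modulus of a sum whose phases vanish at $\varepsilon_\iota = 0$, its first derivative in each argument vanishes there, so $f \approx 1 - \kappa_\iota \varepsilon_\iota^2$ in a neighborhood of the origin. The linearity assertion is therefore an approximation valid over the finite window below the threshold, not an exact local statement, and the rigorous content I can deliver is really \emph{monotone decrease} together with \emph{bounded curvature} over that window. I would thus frame the result as a controlled linear surrogate for the roll-off and support the tightness of the linear fit on $\varepsilon_\iota < \varepsilon_\iota^{\rm th}$ by the curves plotted in \figref{fig:f_AoD_r}.
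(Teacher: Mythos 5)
The paper does not actually prove this statement analytically: it is stated as an \emph{Observation} and justified purely empirically, by plotting the two cross-sections of $f$ in Fig.~\ref{fig:f_AoD_r} and then numerically root-finding $\partial f/\partial\varepsilon_\iota=0$ to locate the thresholds $\varepsilon_\theta^{\rm th}\approx 0.0058$ and $\varepsilon_r^{\rm th}\approx 0.027$. Your proposal therefore takes a genuinely different and more substantive route: collapsing the angle cross-section to the Dirichlet-kernel form $\frac{2}{M}\bigl|\sin\bigl(\tfrac{(M+1)\pi\varepsilon_\theta}{4}\bigr)/\sin\bigl(\tfrac{\pi\varepsilon_\theta}{2}\bigr)\bigr|$ (your counting of the $(M+1)/2$ terms is correct) and treating the quadratic-phase range cross-section via Fresnel integrals, then arguing monotone decrease plus a chord-fit bound on the main lobe. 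This buys what the paper only gets numerically — closed-form control of where the roll-off lives, e.g.\ the first null at $\varepsilon_\theta=4/(M+1)$ — and it surfaces the one caveat the paper glosses over: since $f$ is the modulus of a phase sum, $\partial f/\partial\varepsilon_\iota$ vanishes at the origin and the local behavior is $f\approx 1-\kappa_\iota\varepsilon_\iota^2$, so ``approximately linearly decreases'' can only be read as a secant approximation over the finite window $[0,\varepsilon_\iota^{\rm th}]$, not a local Taylor statement. That weaker but rigorous conclusion (monotone decrease with bounded curvature on the window) is all that is actually used downstream — the Observation only serves to justify replacing maximization of $f$ by minimization of the expected sampling errors in Problems (P7) and (P9), which requires monotonicity rather than literal linearity — so your framing is both honest and sufficient.
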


\begin{figure}[t]
    \centering
    \begin{minipage}{0.47\linewidth}
        \centering
        \vspace{-10pt}
        \includegraphics[width=\linewidth]{./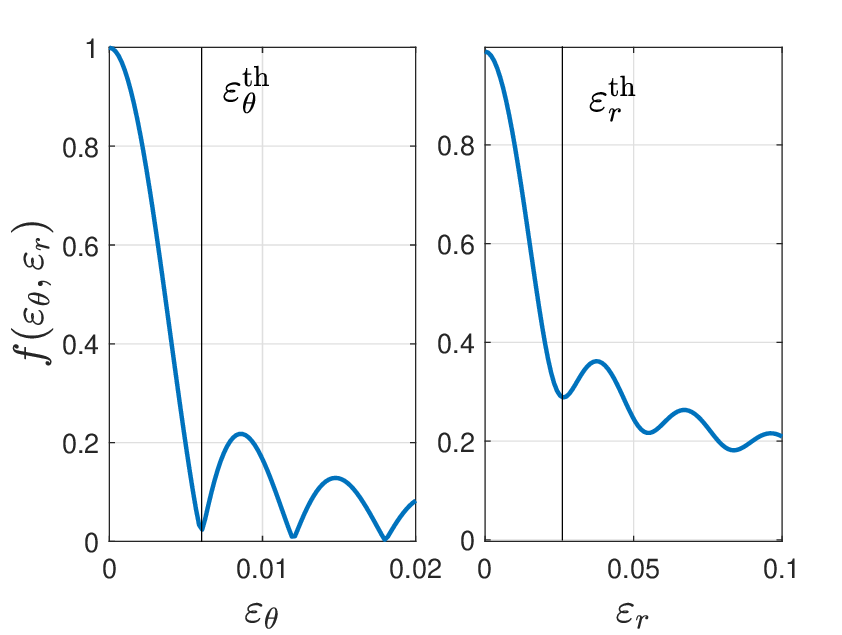}
        \caption{\small $f$ vs. $\varepsilon_{r}$ (given $\varepsilon_\theta=0$) and $\varepsilon_\theta$ (given $\varepsilon_{r} = 0$) with $\vartheta=1$ (other parameters refer to Section~\ref{sec:number_sim}).}
        \label{fig:f_AoD_r}
    \end{minipage}
    \hfill
    \begin{minipage}{0.45\linewidth}
        \centering
        \vspace{-10pt}
        \includegraphics[width=\linewidth]{./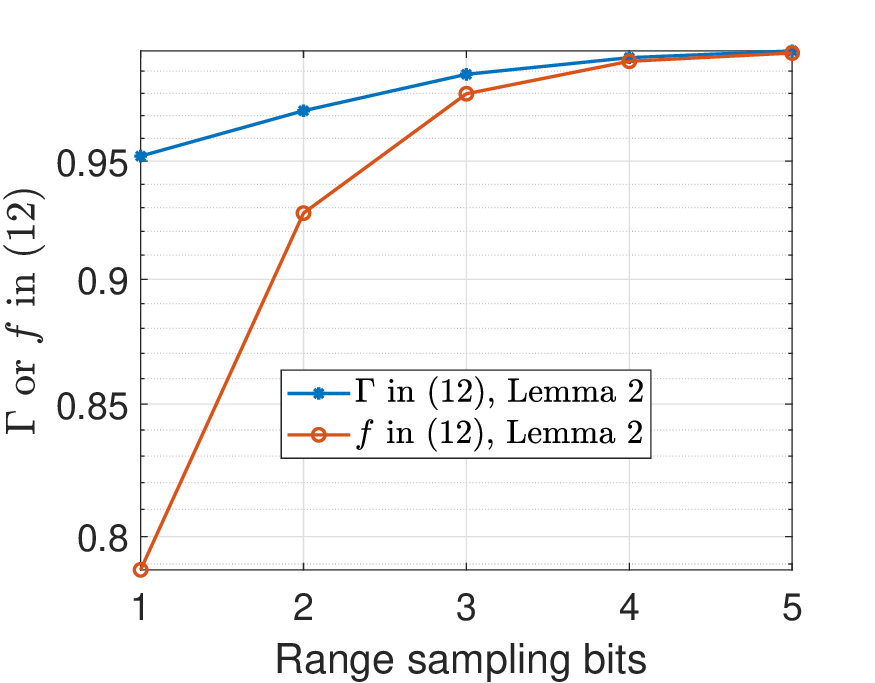}
        \caption{\small Accuracy verification of \textbf{Lemma~\ref{lemma:Gamma_approximation}} with $p=11$ (other parameters refer to Section~\ref{sec:number_sim}).}
        \label{fig:lemma1_veri}
    \end{minipage}
    \vspace{-2em}
\end{figure}
{ For example, for a typical XL-MIMO system specified in Fig.~\ref{fig:f_AoD_r}, by solving $\frac{\partial f(\epsT,0)}{\partial \epsT} = 0$ and $\frac{\partial f(0,\vartheta\varepsilon_{r})}{\partial \varepsilon_{r}} = 0$, we can numerically obtain $\epsT^{\rm th}\approx 0.0058$ and $\varepsilon_{r}^{\rm th}\approx 0.027$.
In addition, when $\epsT>\epsT^{\rm th}$ (or $\varepsilon_{r}>\varepsilon_{r}^{\rm th}$), the analog beamforming gain (i.e., $f(\varepsilon_\theta,\varepsilon_{r})$) is almost zero, thus incurring significant rate loss due to a large sampling error.
To simplify the analysis, we mainly consider the regime $\varepsilon_\iota \le \varepsilon^{\rm th}_\iota$ (i.e., the sampling error is acceptable) in the codebook design, which is also widely adopted in the literature~\cite{au-yeungPerformanceRandomVector2007a,alkhateebLimitedFeedbackHybrid2015,jindalMIMOBroadcastChannels2006a,sohrabiHybridDigitalAnalog2016}.}

Based on \textbf{Observation~\ref{p:approximation_decreasing}}, maximizing $f(\epsT,\vartheta\varepsilon_{r})$ is approximately equivalent to minimizing $\epsT$ and $\varepsilon_{r}$. Thus, we can define $\tilde{\varepsilon}_{\theta}(\mathcal{S}_{\rm a})\triangleq \min_{\hat{\theta}\in \Scal_{\rm a}} \epsT$ and $\tilde{\varepsilon}_{r}(\mathcal{S}_{r}) = \min_{\hat{r}\in \Scal_{\rm r}} | \frac{1}{r} - \frac{1}{\hat{r}} |$ as the minimum angle/range sampling errors in $\Scal_{\rm a}$ and $\Scal_{\rm r}$, respectively. Specifically, we define $\mathcal{Q}\triangleq [\theta_{\min},\theta_{\max}]$ and $\mathcal{R}\triangleq [r_{\min},r_{\max}]$ as the angle and range domains of interest, respectively. 
The following lemma reveals the relationship between the expected analog beamforming gain and the sampling errors.
\begin{lemma}
    \label{lemma:Gamma_approximation}
    When $p$ and $q$ are sufficiently large\footnote{An example is provided in Fig.~\ref{fig:lemma1_veri} at the top of the previous page, which suggests that $p=11$ and $q=2$ are sufficient.}, the expected analog beamforming gain $\Gamma$ in~\eqref{eq:beamforming_gain} can be approximated by
    \begin{equation}
        \label{eq:Gamma_approximation}
        \Gamma(\Scal_{\rm a},\Scal_{\rm r}|p,q) \approx f(\expect_{\theta}[\tilde{\varepsilon}_{\theta}], \check{\vartheta} \expect_{r}[\tilde{\varepsilon}_{r}]),
    \end{equation}
    where $\check{\vartheta}\triangleq \mathbb{E}_\theta [\vartheta]$, {$\expect_{\theta}[\tilde{\varepsilon}_{\theta}] = \int_{\mathcal{Q}}\tilde{\varepsilon}_{\theta} \mu(\theta) d\theta$ and $\expect_{r}[\tilde{\varepsilon}_{r}] = \int_{\mathcal{R}}\tilde{\varepsilon}_{r} \mu(r) dr$}, with $\mu(x)$ denoting the probability density function (p.d.f.) of $x$.
\end{lemma}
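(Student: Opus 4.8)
The plan is to reduce the joint expectation defining $\Gamma$ to a single evaluation of $f$ at the mean sampling errors, via three moves: decoupling the inner maximization, linearizing $f$, and factorizing the resulting expectation using the independence of $\theta$ and $r$.

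First I would decouple the inner maximization in the definition of $\Gamma$ in~\eqref{eq:beamforming_gain}. By \textbf{Observation~\ref{p:approximation_decreasing}}, within the operating regime $\epsT\le\epsT^{\rm th}$ and $\epsL\le\epsL^{\rm th}$ (which is guaranteed once $p,q$ are large enough that the nearest samples are close), $f$ is monotonically decreasing in each of its two arguments. Since a choice of $\hat{\theta}\in\Scal_{\rm a}$ affects only $\epsT=|\theta-\hat{\theta}|$ while a choice of $\hat{r}\in\Scal_{\rm r}$ affects only $\epsL=|1/r-1/\hat{r}|$, the joint maximization separates and the optimizing codeword is the pair of \emph{nearest} angle and range samples. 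Hence $\max_{\hat{\theta}\in\Scal_{\rm a},\hat{r}\in\Scal_{\rm r}} f(\epsT,\vartheta\epsL)\approx f(\tilde{\varepsilon}_{\theta},\vartheta\tilde{\varepsilon}_{r})$, giving $\Gamma(\Scal_{\rm a},\Scal_{\rm r}\mid p,q)\approx\expect_{(\theta,r)}[f(\tilde{\varepsilon}_{\theta},\vartheta\tilde{\varepsilon}_{r})]$.

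Second I would linearize. Again invoking \textbf{Observation~\ref{p:approximation_decreasing}}, in the small-error regime $f$ admits the affine surrogate $f(\epsT,z)\approx f(0,0)-\alpha\,\epsT-\gamma\,z$ for positive slopes $\alpha,\gamma$ (the magnitudes of the partial derivatives at the origin), where $z\triangleq\vartheta\epsL$. Substituting and using linearity of expectation yields $\expect_{(\theta,r)}[f(\tilde{\varepsilon}_{\theta},\vartheta\tilde{\varepsilon}_{r})]\approx f(0,0)-\alpha\,\expect_\theta[\tilde{\varepsilon}_{\theta}]-\gamma\,\expect_{(\theta,r)}[\vartheta\tilde{\varepsilon}_{r}]$, where I have used that $\tilde{\varepsilon}_{\theta}$ depends on $\theta$ alone. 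Then, since $\vartheta=1-\theta^2$ depends only on $\theta$ while $\tilde{\varepsilon}_{r}$ depends only on $r$, and $\theta$ and $r$ are independent under \textbf{Assumption~\ref{a:location_distribution}}, the cross term factorizes as $\expect_{(\theta,r)}[\vartheta\tilde{\varepsilon}_{r}]=\expect_\theta[\vartheta]\,\expect_r[\tilde{\varepsilon}_{r}]=\check{\vartheta}\,\expect_r[\tilde{\varepsilon}_{r}]$. Re-composing the affine surrogate backward gives $f(0,0)-\alpha\,\expect_\theta[\tilde{\varepsilon}_{\theta}]-\gamma\,\check{\vartheta}\,\expect_r[\tilde{\varepsilon}_{r}]\approx f(\expect_\theta[\tilde{\varepsilon}_{\theta}],\check{\vartheta}\,\expect_r[\tilde{\varepsilon}_{r}])$, which is exactly~\eqref{eq:Gamma_approximation}.

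The main obstacle I expect is controlling the accuracy of the ``expectation-inside-$f$'' step, i.e., the Jensen-type gap $\expect[f]-f(\expect[\cdot])$ incurred by treating the genuinely nonlinear $f$ as affine. This is precisely what the hypothesis ``$p,q$ sufficiently large'' buys: a fine sampling concentrates $\tilde{\varepsilon}_{\theta}$ and $\tilde{\varepsilon}_{r}$ well inside the linear regime $\varepsilon_\iota\le\varepsilon_\iota^{\rm th}$ identified in \textbf{Observation~\ref{p:approximation_decreasing}}, so the affine surrogate is tight over the support of the errors and the residual curvature contribution is negligible. I would also note that the decoupling in the first step requires $f$ to be monotone in each argument separately, which holds only within that same regime; Fig.~\ref{fig:lemma1_veri} then serves as the numerical corroboration that $p=11$, $q=2$ already render the approximation accurate.
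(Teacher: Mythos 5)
Your proposal is correct and follows the same overall skeleton as the paper's Appendix~\ref{sec:appendix_expect_decouple}: first replace the inner maximization by evaluation at the nearest samples, so that $\Gamma\approx\expect_{(\theta,r)}[f(\tilde{\varepsilon}_{\theta},\vartheta\tilde{\varepsilon}_{r})]$, and then argue that the expectation can be pushed inside $f$. Where you diverge is in how the second step is justified. You linearize $f$ around the origin via \textbf{Observation~\ref{p:approximation_decreasing}}, use linearity of expectation, factor $\expect_{(\theta,r)}[\vartheta\tilde{\varepsilon}_{r}]=\check{\vartheta}\,\expect_{r}[\tilde{\varepsilon}_{r}]$ by the independence of $\theta$ and $r$, and re-compose; the residual is the deviation of $f$ from affinity over the support of the errors. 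The paper instead performs a second-order Taylor (delta-method) expansion of $f$ around the \emph{means} $(\mu_X,\mu_Y)=(\expect_\theta[\tilde{\varepsilon}_\theta],\expect_{(\theta,r)}[\vartheta\tilde{\varepsilon}_r])$ and explicitly bounds the curvature correction $\tfrac12(f_{xx}\sigma_X^2+2f_{xy}\sigma_{XY}+f_{yy}\sigma_Y^2)$ using Popoviciu's inequality for the variances and Cauchy--Schwarz for the covariance, showing these terms are $\mathcal{O}(\varpi^2)$, $\mathcal{O}(\tilde\varpi^2)$, $\mathcal{O}(\varpi\tilde\varpi)$ as the cell widths shrink. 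The paper's route buys a quantitative handle on the Jensen-type gap without assuming $f$ is globally affine on the error support (it only needs the second derivatives to be bounded there), whereas your route is more elementary but leans harder on the affine surrogate being uniformly tight, and it does not address the possible cross-curvature term $f_{xy}$ that a purely separable reading of \textbf{Observation~\ref{p:approximation_decreasing}} leaves uncontrolled; that term is of the same (negligible) order, so the conclusion stands either way. A minor point in your favor: you make explicit the independence factorization that produces $\check{\vartheta}$, which the paper's appendix leaves implicit in passing from $\mu_Y=\expect_{(\theta,r)}[\vartheta\tilde{\varepsilon}_{r}]$ to the statement of the lemma.
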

\begin{proof}
    See Appendix~\ref{sec:appendix_expect_decouple}.
\end{proof}
\vspace{-0.5em}
As $ f$ generally decreases with $\mathbb{E}_\theta[\tilde{\varepsilon}_{\theta}]$ and $\check{\vartheta} \mathbb{E}_r[\tilde{\varepsilon}_{r}]$ (with $\check{\vartheta}$ being a constant), based on \textbf{Lemma~\ref{lemma:Gamma_approximation}}, Problem (P5) can be equivalently decomposed into two subproblems (i.e., Problems (P7) and (P9)) for minimizing $\expect_\theta[\tilde{\varepsilon}_{\theta}(\mathcal{S}_{\rm a})]$ and $\expect_r[\tilde{\varepsilon}_{r}(\mathcal{S}_{\rm r})]$, respectively. These two subproblems are only related to $\mathcal{S}_{\rm a}$ and $\mathcal{S}_{\rm r}$, and thus can be solved separately, as elaborated below.

{\textbf{Proposed angle sampling method:}}
\label{sec:minimization_epsT}
The subproblem of minimizing $\expect_\theta[\tilde{\varepsilon}_{\theta}]$ can be mathematically formulated as
\begin{subequations}
    \begin{align}
        \text{(P7):}\;\; & \operatorname*{\min}_{\Scal_{\rm a}} \;\; \int_{\mathcal{Q}} \tilde{\varepsilon}_\theta(\mathcal{S}_{\rm a}) \mu(\theta) d \theta  \label{eq:problem_AoD_approx} \\
                         & \text { s.t. } \;\;\left|\Scal_{\rm a}\right| = 2^{p},
    \end{align}
\end{subequations}
where the objective function~\eqref{eq:problem_AoD_approx} is directly obtained from $\expect_\theta[\tilde{\varepsilon}_{\theta}]$ in \textbf{Lemma~\ref{lemma:Gamma_approximation}}.
Note that $\tilde{\varepsilon}_\theta(\mathcal{S}_{\rm a})$ is a complicated set-valued function of $\mathcal{S}_{\rm a}$, which makes Problem~(P7) intractable due to the lack of a closed-form expression. To tackle this difficulty, we introduce the \emph{Voronoi model}, which is a widely used geometric model for partitioning a space (domain) based on a set of points (samples)~\cite{Voronoi}. 

With the Voronoi model, we can obtain a closed-form expression for $\tilde{\varepsilon}_\theta(\mathcal{S}_{\rm a})$ and reformulate Problem (P7) into a tractable optimization problem. Specifically, for any arbitrary sampling set $\mathcal{S}_{\rm a}=\{\hat{\theta}_i\}$, the Voronoi cell (region) of an arbitrary sample $\hat{\theta}_i$ is defined as
\begin{equation}
    \label{eq:Voronoi_cell}
        \begin{aligned}
            \mathcal{C}_i = \left\{ \theta : \left| \theta-\hat{\theta}_i \right| \le \left| \theta-\hat{\theta}_j \right|,  1\le \forall i \neq j \le 2^p\right\}.
        \end{aligned}
\end{equation}
\begin{figure}[t]
    \centering
    \begin{subfigure}[b]{\linewidth}
        \includegraphics[width=\linewidth]{./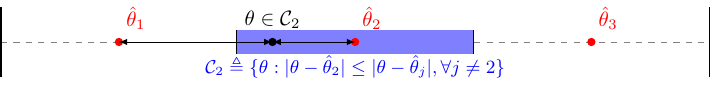}
    \end{subfigure}
    \caption{\small An example of uniform partitioning of the angle domain $\mathcal{Q}$ with $3$ angle samples $\{\hat{\theta}_1,\hat{\theta}_2,\hat{\theta}_3\}$. The blue region is the Voronoi cell $\mathcal{C}_2$. The boundaries of Voronoi cells $\dot{\theta}_i$ are represented by black lines. Red dots denote the angle samples. The black dot denotes an arbitrary angle $\theta$ that falls into the Voronoi cell $\mathcal{C}_2$, whose optimal angle sample is $\hat{\theta}_2$.
    }
    \vspace{-16pt}
    \label{fig:voronoi_diagram}
\end{figure}\noindent Fig.~\ref{fig:voronoi_diagram} provides an example of the Voronoi partitioning of $\mathcal{Q}$ with $3$ angle samples $\mathcal{S}_{\rm a}=\{\hat{\theta}_1,\hat{\theta}_2,\hat{\theta}_3\}$. Mathematically, Voronoi cell $\mathcal{C}_2$ collects all angles $\theta$ that are closer to $\hat{\theta}_2$ than to $\hat{\theta}_1$ and $\hat{\theta}_3$. For example, if $\theta$ falls into $\mathcal{C}_2$, we have $\hat{\theta}_2=\arg\min_{\hat{\theta}\in \mathcal{S}_{\rm a}} \varepsilon_\theta$ being the optimal angle sample. Thus, given $\mathcal{S}_{\rm a}$, the conditional expected angle sampling error given $\theta\in \mathcal{C}_i$ can be obtained as follows.
\begin{lemma}
\label{c:optimal_angle_codeword}
Given the angle sampling set $\mathcal{S}_{\rm a}=\{\hat{\theta}_i\}_{i=1}^{2^p}$, the conditional expected angle sampling error $\tilde{\varepsilon}_\theta(\mathcal{S}_{\rm a})$ given $\theta \in \mathcal{C}_i$ for an arbitrary Voronoi cell $\mathcal{C}_i$ can be obtained as
\begin{equation}
\mathbb{E}_\theta [\tilde{\varepsilon}_\theta(\mathcal{S}_{\rm a})|\mathcal{C}_i] = 
\frac{ 
(\acute{\theta}_{i} - \acute\theta_{i-1})^2 + (\acute{\theta}_{i+1} - \acute{\theta}_{i})^2
}{4(\acute{\theta}_{i+1}-\acute{\theta}_{i-1})},
\end{equation}
where $\acute{\theta}_0=\theta_{\min}, \acute{\theta}_{2^p+1}=\theta_{\max}$, and $\acute{\theta}_i=\hat{\theta}_i$ for $i\in\{1,2,\cdots,2^p+1\}$.
\end{lemma}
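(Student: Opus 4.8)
The plan is to reduce the conditional expectation to a single elementary integral over the interval $\mathcal{C}_i$. First I would note that, by the defining property of the Voronoi cell in~\eqref{eq:Voronoi_cell}, the nearest sample to any $\theta \in \mathcal{C}_i$ is $\hat{\theta}_i$, so that $\tilde{\varepsilon}_\theta(\Scal_{\rm a}) = \min_{\hat{\theta}\in\Scal_{\rm a}}|\theta-\hat{\theta}| = |\theta-\hat{\theta}_i|$ throughout $\mathcal{C}_i$. Since $\theta\sim\mathcal{U}(\theta_{\min},\theta_{\max})$ by \textbf{Assumption~\ref{a:location_distribution}}, its p.d.f.\ $\mu(\theta)$ is constant, so conditioning on $\{\theta\in\mathcal{C}_i\}$ renders $\theta$ uniform on the interval $\mathcal{C}_i$ and the density cancels. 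Hence
\begin{equation}
\mathbb{E}_\theta[\tilde{\varepsilon}_\theta(\Scal_{\rm a})\mid\mathcal{C}_i] = \frac{1}{|\mathcal{C}_i|}\int_{\mathcal{C}_i}|\theta-\hat{\theta}_i|\,d\theta,
\end{equation}
which is the object to be evaluated.

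The core step is to pin down the two endpoints of $\mathcal{C}_i$. As the partition is one-dimensional and nearest-neighbor, each interior boundary lies at the midpoint between $\hat{\theta}_i$ and its neighbor, giving $\mathcal{C}_i = [\tfrac{\acute{\theta}_{i-1}+\acute{\theta}_i}{2},\tfrac{\acute{\theta}_i+\acute{\theta}_{i+1}}{2}]$ with $\hat{\theta}_i=\acute{\theta}_i$. Writing $a_i,b_i$ for the left and right endpoints and splitting the integral at $\hat{\theta}_i$, I would use
\begin{equation}
\int_{a_i}^{b_i}|\theta-\hat{\theta}_i|\,d\theta = \frac{(\hat{\theta}_i-a_i)^2+(b_i-\hat{\theta}_i)^2}{2},
\end{equation}
so that $\mathbb{E}_\theta[\tilde{\varepsilon}_\theta\mid\mathcal{C}_i]=\tfrac{(\hat{\theta}_i-a_i)^2+(b_i-\hat{\theta}_i)^2}{2(b_i-a_i)}$. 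Substituting the midpoint endpoints, for which $\hat{\theta}_i-a_i=\tfrac{\acute{\theta}_i-\acute{\theta}_{i-1}}{2}$, $b_i-\hat{\theta}_i=\tfrac{\acute{\theta}_{i+1}-\acute{\theta}_i}{2}$ and $b_i-a_i=\tfrac{\acute{\theta}_{i+1}-\acute{\theta}_{i-1}}{2}$, collapses the expression to the claimed formula. This calculation is routine once the endpoints are fixed.

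The main obstacle is the consistent treatment of the two extreme cells $\mathcal{C}_1$ and $\mathcal{C}_{2^p}$, whose outer edges are the domain boundaries $\theta_{\min}$ and $\theta_{\max}$ rather than sample midpoints. To keep the single formula valid for all $i$, I would encode these edges through the convention $\acute{\theta}_0=\theta_{\min}$ and $\acute{\theta}_{2^p+1}=\theta_{\max}$ already adopted in the statement: the direct integral over $\mathcal{C}_1$ (resp.\ $\mathcal{C}_{2^p}$) then reproduces the stated expression once the extreme sample is identified with the corresponding endpoint, i.e.\ $\acute{\theta}_0=\acute{\theta}_1$ (resp.\ $\acute{\theta}_{2^p}=\acute{\theta}_{2^p+1}$), as is readily checked. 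I would further remark that any residual boundary mismatch is negligible in the operating regime: with $p$ sufficiently large (e.g.\ $p=11$, hence $2^p$ cells), the two edge cells carry vanishing probability mass and do not influence the minimization of $\mathbb{E}_\theta[\tilde{\varepsilon}_\theta]$ studied in Problem~(P7).
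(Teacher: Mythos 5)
Your proposal is correct and follows essentially the same route as the paper's proof: identify the Voronoi cell $\mathcal{C}_i$ as the midpoint interval $[({\acute{\theta}_{i-1}+\acute{\theta}_i})/{2},({\acute{\theta}_i+\acute{\theta}_{i+1}})/{2})$, replace the minimum by $|\theta-\hat{\theta}_i|$ on that cell, and evaluate the uniform-density integral by splitting at $\hat{\theta}_i$. Your treatment is in fact more careful than the paper's (you write the conditional expectation with the explicit normalization $1/|\mathcal{C}_i|$ and address the two edge cells, which the paper glosses over), but the underlying argument is the same.
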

\begin{proof}
    From~\eqref{eq:Voronoi_cell}, given $\{\hat{\theta}_i\}$, the Voronoi cells $\{\mathcal{C}_i\}_{i=1}^{2^p}$ can be constructed as $\mathcal{C}_i = [ ({\acute{\theta}_{i-1}+\acute{\theta}_{i}})/{2},({\acute{\theta}_i+\acute{\theta}_{i+1}})/{2})$~\cite{Voronoi}. Thus, given $\theta\in \mathcal{C}_i$, $\mathbb{E}_\theta [\tilde{\varepsilon}_\theta(\mathcal{S}_{\rm a})|\mathcal{C}_i]$ can be obtained as $\int_{\mathcal{C}_i} \tilde{\varepsilon}_\theta(\mathcal{S}_{\rm a})\mu(\theta) d\theta$, where $\tilde{\varepsilon}_\theta(\mathcal{S}_{\rm a}) = \min_{\hat{\theta}\in \mathcal{S}_{\rm a}} |\theta-\hat{\theta}|=|\theta-\acute{\theta}_i|$, leading to the desired result.
\end{proof}\vspace{-0.5em}
Moreover, according to the law of total expectation, the objective function~\eqref{eq:problem_AoD_approx} in Problem (P7) can be equivalently re-expressed as
\begin{equation}
    \label{eq:Voronoi_cell_angle_sampling}
        \int_{\mathcal{Q}} \tilde{\varepsilon}_\theta \mu(\theta) d \theta = \sum_{i=1}^{2^p} \mathbb{E}_\theta[\tilde{\varepsilon}_\theta| \mathcal{C}_i] \cdot \mathrm{Pr}(\theta\in\mathcal{C}_i),
\end{equation}
where $\mathrm{Pr}(\theta\in\mathcal{C}_i)$ is the probability that $\theta$ falls into $\mathcal{C}_i$. Finally, by substituting \eqref{eq:Voronoi_cell_angle_sampling} into \eqref{eq:problem_AoD_approx}, Problem (P7) can be equivalently reformulated as
\begin{subequations}
    \begin{align}
\label{eq:problem_AoD_approx_k}\text{(P8):}\;\; & \min_{\mathcal{S}_{\rm a}}\quad\sum_{i=1}^{2^p} \;\; \expect_{\theta}[\tilde{\varepsilon}_\theta(\mathcal{S}_{\rm a})|\mathcal{C}_i] \cdot \mathrm{Pr}(\theta\in\mathcal{C}_i) \\
                                                          & \; \text{s.t.} \quad\;\; |\mathcal{S}_{\rm a}| = 2^p.  \label{eq:C1}
    \end{align}
\end{subequations}
Problem (P8) now becomes a convex optimization problem, whose optimal solution is given below, with the proof provided in Appendix~\ref{sec:appendix_angle_sampling}.
        
\begin{framed}
    \vspace{-4pt}
    {\setlength\abovedisplayskip{0pt}
        \setlength\belowdisplayskip{0pt}
        \begin{equation}
            \label{eq:nf_aod_samples}
            {\begin{aligned}
                     & \textbf{Near-field Angle Samples:}                                                                       \\
                     & \Scal_{\rm a}^* =\left[\hat{\theta}^*_1,\hat{\theta}^*_2,\cdots,\hat{\theta}^*_{2^p}\right],   \text{with } \\
                     & \hat{\theta}^*_i = \theta_{\mathrm{min}}+i\cdot\frac{D(\mathcal{Q})}{2^{p}+1}, i=1,2,\dots,2^p,
                \end{aligned}}
        \end{equation}}
\end{framed}\noindent
where $D(\mathcal{X}) = \max_{x\in\mathcal{X}} x - \min_{x\in\mathcal{X}} x$ is the \textit{difference between the maximum and minimum values} of an arbitrary real-valued set $\mathcal{X}$.
Note that the optimal angle samples $\{\hat{\theta}^*_i\}$ are uniformly distributed in the angle domain $\mathcal{Q}$, which is consistent with the widely-considered DFT codebook for spatial angle sampling in far-field communications~\cite{Cui2023,2023sparsearray}.

{\textbf{Proposed range sampling method:}}
\label{sec:minimization_epsL}
Similarly, the subproblem of minimizing $\expect_r[\tilde{\varepsilon}_{r}]$ can be formulated as
\begin{subequations}
    \begin{align}
        \label{eq:problem_R_approx}\text{(P9):}\;\; & \min_{\Scal_{\rm r}} \;\; \int_{\mathcal{R}} \tilde{\varepsilon}_{r}(\Scal_{\rm r}) \mu(r) d r \\
                                                    & \text { s.t. }\;\; \left|\Scal_{\rm r}\right| = 2^{q}. \label{eq:P9C1}
    \end{align}
\end{subequations}
In Problem (P9), since $\tilde{\varepsilon}_{r}(\Scal_{\rm r})=\min_{\hat{r}\in\mathcal{S}_{\rm r}}\left|\frac{1}{\hat{r}}-\frac{1}{r}\right|$ is a non-linear function of $r$, the relationship between the Voronoi cell and the corresponding range sample is even more complicated. To overcome this difficulty, we first partition $\mathcal{R}$ into $2^q$ intervals (cells) $\{\mathcal{I}_i\}_{i=1}^{2^q}$ (and $2^q$ corresponding samples $\{\hat{r}_i\}$ with $\hat{r}_i \in \mathcal{I}_i,\forall i$) subject to the conditions that $\{\mathcal{I}_i\}$ are non-overlapping and their union equals $\mathcal{R}$, i.e., $\cup_{i=1}^{2^q} \mathcal{I}_i = \mathcal{R}$.
We then define $\mathring{\varepsilon}_{r}(\hat{r}_i) \triangleq \left|{1}/{r}-{1}/{\hat{r}_i}\right|$ as the \emph{surrogate} sampling error of an arbitrary range sample $\hat{r}_i$, based on which we obtain the following lemma.
\begin{lemma}
    \label{l:new_sample_feedback_model_eps}
    \label{eq:problem_AoD_approx_reformulate}
    Given any non-overlapping range partitioning $\{\mathcal{I}_i\}$ whose union equals $\mathcal{R}$, the expected sampling error related to the range $\expect_r[\tilde{\varepsilon}_{r}]$ in \textbf{Lemma~\ref{lemma:Gamma_approximation}} can be upper-bounded as
    \begin{equation}
        \label{eq:upper_bound_sampling_error}
        \int_{\mathcal{R}} \tilde{\varepsilon}_{r}(\Scal_{\rm r}) \mu(r) d r\le\sum_{i=1}^{2^q} \expect_r[\mathring{{\varepsilon}}_{r}(\hat{r}_i)|\mathcal{I}_i] \mathrm{Pr}(r\in\mathcal{I}_i),
    \end{equation}
    where $\expect_r[\mathring{{\varepsilon}}_{r}(\hat{r}_i)|\mathcal{I}_i]=\int_{\mathcal{I}_i} \mathring{{\varepsilon}}_{r} \mu(r) d r$ is the conditional expected range sampling error given $r\in\mathcal{I}_i$. \noindent
\end{lemma}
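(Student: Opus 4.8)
The plan is to obtain \eqref{eq:upper_bound_sampling_error} from a single pointwise inequality followed by a partition of the integral over $\mathcal{R}$, exactly paralleling the law-of-total-expectation decomposition already used for the angle case in \eqref{eq:Voronoi_cell_angle_sampling}. The essential observation is that $\tilde{\varepsilon}_{r}(\Scal_{\rm r})=\min_{\hat{r}\in\Scal_{\rm r}}\left|1/r-1/\hat{r}\right|$ is a minimum over the \emph{entire} sampling set, so it can only grow if we commit in advance to one designated sample per cell. Concretely, since each cell $\mathcal{I}_i$ is equipped with a sample $\hat{r}_i\in\mathcal{I}_i\subseteq\Scal_{\rm r}$, for every $r\in\mathcal{I}_i$ the sample achieving the minimum in $\tilde{\varepsilon}_{r}$ is at least as good as $\hat{r}_i$, which immediately yields the pointwise bound $\tilde{\varepsilon}_{r}(\Scal_{\rm r})\le\left|1/r-1/\hat{r}_i\right|=\mathring{\varepsilon}_{r}(\hat{r}_i)$. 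Unlike the angle derivation, where the cells were the exact (nearest-neighbor) Voronoi regions and the decomposition was an \emph{equality}, here the partition $\{\mathcal{I}_i\}$ is arbitrary and $\hat{r}_i$ need not be the nearest sample to $r$, so only an \emph{inequality} survives.

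First I would establish this pointwise inequality, noting that it holds for almost every $r$ since the cell boundaries form a measure-zero set and are irrelevant to the integral. Second, using that $\{\mathcal{I}_i\}$ are non-overlapping with $\cup_{i=1}^{2^q}\mathcal{I}_i=\mathcal{R}$, I would split the target integral as $\int_{\mathcal{R}}\tilde{\varepsilon}_{r}\mu(r)\,dr=\sum_{i=1}^{2^q}\int_{\mathcal{I}_i}\tilde{\varepsilon}_{r}\mu(r)\,dr$. Third, I would apply the pointwise bound inside each cell to replace $\tilde{\varepsilon}_{r}$ by the surrogate $\mathring{\varepsilon}_{r}(\hat{r}_i)$, giving $\int_{\mathcal{I}_i}\tilde{\varepsilon}_{r}\mu(r)\,dr\le\int_{\mathcal{I}_i}\mathring{\varepsilon}_{r}(\hat{r}_i)\mu(r)\,dr$. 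Finally, I would rewrite the per-cell contribution via the law of total expectation as $\int_{\mathcal{I}_i}\mathring{\varepsilon}_{r}(\hat{r}_i)\mu(r)\,dr=\mathbb{E}_r[\mathring{\varepsilon}_{r}(\hat{r}_i)\mid\mathcal{I}_i]\,\mathrm{Pr}(r\in\mathcal{I}_i)$, so that summing over $i$ reproduces the right-hand side of \eqref{eq:upper_bound_sampling_error}.

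The calculation itself is short; the main substance lies in justifying that the relaxation is simultaneously \emph{valid} and \emph{useful}. Validity rests entirely on the containment $\hat{r}_i\in\Scal_{\rm r}$, which is what makes the min-over-the-set bounded above by the single-sample surrogate; I would make this step explicit, since it is the only place the structure of $\Scal_{\rm r}$ enters. Usefulness is the real point: the true quantity $\int_{\mathcal{R}}\tilde{\varepsilon}_{r}\mu\,dr$ couples all samples together through the nonlinear inverse-range metric $\left|1/r-1/\hat{r}\right|$ and admits no tractable closed form, whereas the surrogate bound decouples into independent per-cell integrals, each depending only on its own designated sample $\hat{r}_i$. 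This decoupling is precisely what will later permit a cell-by-cell minimization leading to the geometric range-sampling solution. I would therefore expect the only genuine obstacle to be conceptual rather than technical, namely arguing that optimizing the tractable upper bound in \eqref{eq:upper_bound_sampling_error} (rather than the intractable exact error) still yields a high-quality sampling design, and flagging that the inequality is in general strict because the designated sample $\hat{r}_i$ need not coincide with the true nearest sample in the inverse-range distance.
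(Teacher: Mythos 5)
Your proposal is correct and follows essentially the same route as the paper's proof: decompose the integral over the non-overlapping cells via the law of total expectation, then bound the min over the full set $\Scal_{\rm r}$ by the surrogate error at the designated sample $\hat{r}_i$ in each cell, using only that $\hat{r}_i\in\Scal_{\rm r}$. One small slip: the containment you write as $\hat{r}_i\in\mathcal{I}_i\subseteq\Scal_{\rm r}$ should read $\hat{r}_i\in\mathcal{I}_i$ and $\hat{r}_i\in\Scal_{\rm r}$ (the cell $\mathcal{I}_i$ is a subset of $\mathcal{R}$, not of the sampling set), but your subsequent justification makes clear you rely on the correct fact.
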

    \begin{proof}
    According to the law of total expectation, we can expand $\int_{\mathcal{R}}\! \tilde{\varepsilon}_{r} \mu(r) d r\! =\! \sum_{i=1}^{2^q} \int_{\mathcal{I}_i}\!\! \tilde{\varepsilon}_{r}\mu(r) d r \mathrm{Pr}(r\in \mathcal{I}_i)$ for any feasible partitioning $\{\mathcal{I}_i\}$. 
    Specifically, given $r\in\mathcal{I}_i$, we have $\int_{\mathcal{I}_i} \tilde{\varepsilon}_{r} \mu(r) d r = \int_{\mathcal{I}_i} \min_{\hat{r}\in\Scal_{\rm r}}\left| \frac{1}{r}-\frac{1}{\hat{r}} \right| \mu(r) d r$, with $ \tilde{\varepsilon}_{r}= \min_{\hat{r}\in \Scal_{\rm r}}\left| \frac{1}{r}-\frac{1}{\hat{r}} \right| \le \mathring{{\varepsilon}}_{r} (\hat{r}_i)= \left| \frac{1}{r}-\frac{1}{\hat{r}_i} \right|$, which directly leads to $\int_{\mathcal{I}_i} \tilde{\varepsilon}_{r} \mu(r) d r \le \int_{\mathcal{I}_i} \mathring{{\varepsilon}}_{r}(\hat{r}_i) \mu(r) d r = \expect_r[\mathring{{\varepsilon}}_{r}(\hat{r}_i)|\mathcal{I}_i]$. Thus, by collecting this inequality for all $i$, we have the desired result.
\end{proof}\vspace{-0.5em}

Note that $\mathbb{E}_r[\mathring{\varepsilon}_r(\hat{r}_i)|\mathcal{I}_i] \mathrm{Pr}(r\in \mathcal{I}_i)$ in~\eqref{eq:upper_bound_sampling_error} is only dependent on the range cell $\mathcal{I}_i$ and the corresponding sample $\hat{r}_i$, given any feasible partitioning $\{\mathcal{I}_i\}$ (non-overlapping and constituting $\mathcal{R}$). Thus, by adopting \textbf{Lemma~\ref{l:new_sample_feedback_model_eps}}, we can approximately solve Problem (P9) by minimizing the upper-bound in~\eqref{eq:upper_bound_sampling_error} as follows
\begin{subequations}
    \begin{align}
        \label{eq:problem_r_approx_k}\text{(P10):}\;\; & \min_{\{\mathcal{I}_i\},\{\hat{r}_i\}}\sum_{i=1}^{2^q} \;\; \expect_r[\mathring{{\varepsilon}}_{r}(\hat{r}_i)|r\in \mathcal{I}_i] \mathrm{Pr}(r\in\mathcal{I}_i) \\
                                                        & \;\;\;\; \text{s.t.} \;\;\;\; \mathcal{I}_i \cap \mathcal{I}_j = \emptyset, \forall i \ne j, \quad  \label{eq:C4}\\
                                                        & \phantom{\;\;\; \text{s.t.} \;\;\;\;} \cup_{i=1}^{2^q} \mathcal{I}_i = \mathcal{R}, \quad   \label{eq:C5}                 \\
                                                        &\phantom{\;\;\; \text{s.t.} \;\;\;\;\;} \hat{r}_i \in \mathcal{I}_i.\label{eq:C6}
    \end{align}
\end{subequations} 
Note that the constraints \eqref{eq:C4}--\eqref{eq:C5} ensure that the partitioning $\{\mathcal{I}_i\}$ is feasible, i.e., non-overlapping and complete, and the constraint \eqref{eq:C6} is to ensure that the range sample $\hat{r}_i$ is in the corresponding range cell $\mathcal{I}_i$.
To solve this problem, we first optimize the range samples $\hat{r}_i$ in each cell $\mathcal{I}_i$ given fixed partitioning $\{\mathcal{I}_i \}$, and then optimize the partitioning $\{\mathcal{I}_i \}$ with optimized range samples $\{\hat{r}_i\}$.

\underline{{{\textit{Optimizing $\{\hat{r}_i\}$ given $\{\mathcal{I}_i\}$}:}}}
Given any feasible partitioning $\{\mathcal{I}_i\}$ that satisfies constraints \eqref{eq:C4}--\eqref{eq:C5}, Problem (P10) reduces to the following subproblem for optimizing the range sample $\hat{r}_i$ in each cell $\mathcal{I}_i$
\begin{subequations}
    \begin{align}
        \label{eq:problem_r_approx_I_i}\text{(P11):}\;\; & \min_{\hat{r}_i} \expect_r \left[\mathring{\varepsilon}_{r} (\hat{r}_i) | \; r\in \mathcal{I}_i\right] \;\; \\
                                                              & \;\text{s.t.} \;\; \text{~\eqref{eq:C6}}. \notag
    \end{align}
\end{subequations}
\begin{proposition}
    \label{p:optimal_range_sample}
    The optimal solution to Problem (P11) is
    \begin{equation}
        \label{eq:optimal_range_sample}
        \hat{r}_i^{*} = \frac{1}{2}\left(\dot{r}_{i-1}+\dot{r}_{i}\right),
    \end{equation}
    where $\dot{r}_{i-1} \triangleq \min_{r \in \mathcal{I}_i} r$ and $\dot{r}_{i} \triangleq \max_{r \in \mathcal{I}_i} r$. 
    Moreover, the corresponding objective value in~\eqref{eq:problem_r_approx_I_i} is 
    $$
    \expect_r \left[\mathring{\varepsilon}_{r}(\mathcal{I}_i,\hat{r}^*) | \; r\in \mathcal{I}_i\right]=\frac{2}{D(\mathcal{I}_i)}\left(\log \frac{\sqrt{\xi_i}+1/\sqrt{\xi_{i}}}{2}\right),
    $$
    where $\xi_i =\sqrt{\frac{\dot{r}_{i}}{\dot{r}_{i-1}}}$, and $D(\mathcal{I}_i) = \dot{r}_{i}-\dot{r}_{i-1}$ is the difference between the maximum and minimum of $\mathcal{I}_i$.
\end{proposition}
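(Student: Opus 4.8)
The plan is to exploit the uniform location model of \textbf{Assumption~\ref{a:location_distribution}}: on any cell $\mathcal{I}_i=[\dot r_{i-1},\dot r_i]$ the conditional density of $r$ is itself uniform, equal to $1/D(\mathcal{I}_i)$. Consequently, minimizing $\expect_r[\mathring{\varepsilon}_{r}(\hat r_i)\mid r\in\mathcal{I}_i]$ over $\hat r_i\in\mathcal{I}_i$ is equivalent, up to the positive constant factor $1/D(\mathcal{I}_i)$, to minimizing the single-variable function $g(\hat r_i)\triangleq\int_{\dot r_{i-1}}^{\dot r_i}\big|\tfrac1r-\tfrac1{\hat r_i}\big|\,dr$. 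This strips away the probabilistic wrapping and reduces Problem~(P11) to an elementary one-dimensional calculus problem on the closed interval $\mathcal{I}_i$, with the containment constraint \eqref{eq:C6} simply restricting $\hat r_i\in[\dot r_{i-1},\dot r_i]$.

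First I would resolve the absolute value. Since the candidate $\hat r_i$ lies in $\mathcal{I}_i$, the integrand is nonnegative and vanishes at $r=\hat r_i$, being positive for $r<\hat r_i$ and negative (before taking modulus) for $r>\hat r_i$. Splitting the integral at $r=\hat r_i$ and using $\int r^{-1}\,dr=\ln r$ together with $\int \hat r_i^{-1}\,dr$ on each piece yields the smooth closed form $g(\hat r_i)=\ln\frac{\hat r_i^2}{\dot r_{i-1}\dot r_i}+\frac{\dot r_{i-1}+\dot r_i}{\hat r_i}-2$. I would then differentiate; the Leibniz boundary contributions from the splitting point cancel exactly because the integrand is continuous (equal to zero) there, giving $g'(\hat r_i)=\frac{2}{\hat r_i}-\frac{\dot r_{i-1}+\dot r_i}{\hat r_i^2}=\frac{2\hat r_i-(\dot r_{i-1}+\dot r_i)}{\hat r_i^2}$. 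The unique stationary point is $\hat r_i^{*}=\tfrac12(\dot r_{i-1}+\dot r_i)$, which manifestly lies in $\mathcal{I}_i$, so the constraint is inactive.

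Next I would confirm global optimality and evaluate the minimum. A sign inspection of $g'$ (negative for $\hat r_i<\hat r_i^{*}$, positive for $\hat r_i>\hat r_i^{*}$, equivalently $g''>0$ at the stationary point) shows that $g$ strictly decreases and then increases on $\mathcal{I}_i$, so $\hat r_i^{*}$ is the unique global minimizer. Substituting it back makes the term $\frac{\dot r_{i-1}+\dot r_i}{\hat r_i^{*}}-2$ vanish, leaving $g(\hat r_i^{*})=\ln\frac{(\hat r_i^{*})^2}{\dot r_{i-1}\dot r_i}=2\ln\frac{(\dot r_{i-1}+\dot r_i)/2}{\sqrt{\dot r_{i-1}\dot r_i}}$. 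Dividing by the normalization $D(\mathcal{I}_i)$ reproduces the claimed objective value: the right-hand side is $\frac{2}{D(\mathcal{I}_i)}\log\frac{\sqrt{\dot r_i/\dot r_{i-1}}+\sqrt{\dot r_{i-1}/\dot r_i}}{2}$, which matches the stated form with the arithmetic-over-geometric-mean ratio once $\xi_i$ is read as $\dot r_i/\dot r_{i-1}$.

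The integration and differentiation are routine; the one point requiring genuine care is establishing that the interior critical point is a true global minimum over the constrained interval rather than merely a stationary point, which I would settle by the monotonicity/sign argument on $g'$ above, while also noting that the kink of $|\cdot|$ introduces no spurious non-smoothness since the integrand is continuous at $r=\hat r_i$. A secondary bookkeeping check is to reconcile the exponents in the stated $\xi_i$ with the square-root form $\tfrac{\sqrt{\dot r_i/\dot r_{i-1}}+\sqrt{\dot r_{i-1}/\dot r_i}}{2}$ that the integral naturally produces.
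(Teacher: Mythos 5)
Your proof is correct and follows essentially the same route as the paper's own appendix: split the absolute value at $r=\hat r_i$, integrate in closed form, and apply the first-order optimality condition (the paper merely performs the identical computation in the inverse variable $\varsigma=1/r$, while you work directly in $r$ and additionally supply the global-optimality check and the explicit evaluation of the optimal objective that the appendix omits). Your side remark is also right: the stated objective value only matches the computation if $\xi_i$ is read as $\dot r_i/\dot r_{i-1}$ rather than its square root, i.e., the proposition's definition of $\xi_i$ is inconsistent with the exponents used in (P12) and in the subsequent appendix.
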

\begin{proof}
    See Appendix~\ref{sec:appendix_range_sampling_cells}.
\end{proof}

\underline{{{\textit{Optimizing $\{\mathcal{I}_i\}$ given $\{\hat{r}_i^*\}$}:}}} Substituting the obtained $\hat{r}_i^*$ in~\eqref{eq:optimal_range_sample} into~\eqref{eq:problem_r_approx_k}, Problem (P10) reduces to the following subproblem for optimizing the partitioning $\{\mathcal{I}_i\}$
\begin{equation*}
    \label{eq:problem_rA_approx_D}
    \begin{aligned}
        \text{(P12):}\;\; & \min_{\{\mathcal{I}_i\}}\;\;\sum_{i=1}^{2^q} \frac{2 }{D(\mathcal{I}_i)}\!\left(\log \frac{{\xi_i}+1/{\xi_{i}}}{2}\right) \!\mathrm{Pr}(r\in \mathcal{I}_i)\;\;
        \\& \;\text{s.t.} \;\; \text{~\eqref{eq:C4},\eqref{eq:C5}},
    \end{aligned}
\end{equation*}
which is optimally solved below.
\begin{proposition}
    \label{p:optimal_range_divide}
    The optimal solution to Problem (P12) is
    \begin{equation}
        \mathcal{I}_i^* = \left[r_{\rm min} \xi^{\frac{i-1}{2^q}},r_{\rm min}\xi^{\frac{i}{2^q}}\right),\forall i\in\{1,2,\cdots,2^{q}\}, 
    \end{equation}
    where $\xi= \sqrt{\frac{r_{\rm max}}{r_{\rm min}}}$, and the optimized objective is
    \begin{equation}
        \label{eq:optimal_range_error_sum}
       {\expect_r[\mathring{{\varepsilon}}_{r}(\mathcal{I}_i^*)] = \frac{2^{q}}{D(\mathcal{R})}\left(\log \frac{\sqrt[2^{q}]{\xi}+1/\sqrt[2^{q}]{\xi}}{2}\right).}
    \end{equation}

\end{proposition}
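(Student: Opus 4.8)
The plan is to use the \emph{uniform} range distribution to decouple the cell widths from the cell ratios in the objective of Problem (P12), after which a convexity (Jensen) argument pins down the geometric partition. First I would substitute the uniform density $\mu(r)=1/D(\mathcal{R})$ on $\mathcal{R}=[r_{\min},r_{\max}]$, so that each weight becomes $\mathrm{Pr}(r\in\mathcal{I}_i)=\int_{\mathcal{I}_i}\mu(r)\,dr=D(\mathcal{I}_i)/D(\mathcal{R})$. The key observation is that the prefactor $2/D(\mathcal{I}_i)$ carried over from \textbf{Proposition~\ref{p:optimal_range_sample}} then cancels exactly against $D(\mathcal{I}_i)$ in the weight, so the objective of (P12) collapses to
$$\frac{2}{D(\mathcal{R})}\sum_{i=1}^{2^q}\log\frac{\xi_i+1/\xi_i}{2},\qquad \xi_i=\sqrt{\dot{r}_i/\dot{r}_{i-1}},$$
which depends on the partition only through the \emph{ratios} $\dot{r}_i/\dot{r}_{i-1}$, not through the absolute widths or positions of the cells.

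Next I would reparametrize each cell by $u_i\triangleq\ln\xi_i=\tfrac12\ln(\dot{r}_i/\dot{r}_{i-1})>0$. Because the boundaries telescope, the sum of the $u_i$ is forced to a constant that is independent of the interior cut points $\dot{r}_1,\dots,\dot{r}_{2^q-1}$, namely $\sum_{i=1}^{2^q}u_i=\tfrac12\ln(\dot{r}_{2^q}/\dot{r}_0)=\tfrac12\ln(r_{\max}/r_{\min})=\ln\xi$, using $\dot{r}_0=r_{\min}$ and $\dot{r}_{2^q}=r_{\max}$. The per-cell cost is $g(u_i)\triangleq\log\tfrac{e^{u_i}+e^{-u_i}}{2}=\log\cosh u_i$, whose second derivative $g''(u)=1/\cosh^2 u>0$ makes it strictly convex. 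Minimizing $\sum_i g(u_i)$ subject to the single linear constraint $\sum_i u_i=\ln\xi$ is therefore settled by Jensen's inequality, $\sum_i g(u_i)\ge 2^q\,g\!\big(\tfrac{1}{2^q}\sum_i u_i\big)$, with equality if and only if all $u_i$ coincide. Hence the optimum sets $u_i=\ln\xi/2^q$, i.e. $\xi_i=\sqrt[2^q]{\xi}$ for every $i$, so that the ratio $\dot{r}_i/\dot{r}_{i-1}=(r_{\max}/r_{\min})^{1/2^q}$ is constant across cells --- exactly the \emph{geometric} partition.

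Finally I would convert the equal-ratio condition into the explicit cells: the constant common ratio together with $\dot{r}_0=r_{\min}$ and $\dot{r}_{2^q}=r_{\max}$ determines the boundaries as a geometric progression, recovering the claimed $\mathcal{I}_i^*$, and substituting the common value $u_i=\ln\xi/2^q$ into $g$ gives the optimized per-cell error $\log\tfrac{\sqrt[2^q]{\xi}+1/\sqrt[2^q]{\xi}}{2}$, which summed over the $2^q$ identical cells reproduces the claimed objective~\eqref{eq:optimal_range_error_sum}. I expect the main obstacle --- and the step that makes everything tractable --- to be the width--ratio decoupling in the first paragraph: it is precisely the uniformity $\mathrm{Pr}(r\in\mathcal{I}_i)\propto D(\mathcal{I}_i)$ that cancels the widths and turns a joint optimization over the sizes \emph{and} locations of all cells into a one-dimensional convex allocation over the log-ratios with a telescoping (hence fixed) budget. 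Once that reduction is in place, checking convexity of $\log\cosh$ and verifying that the equal-$u_i$ solution is a feasible, strictly increasing partition are routine.
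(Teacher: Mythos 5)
Your proposal is correct and follows essentially the same route as the paper's Appendix E: exploit the uniform density so that $\mathrm{Pr}(r\in\mathcal{I}_i)\propto D(\mathcal{I}_i)$ cancels the $2/D(\mathcal{I}_i)$ prefactor, reduce to minimizing a sum of per-cell log-ratio costs under the telescoping product constraint, and conclude equal ratios. Your only addition is making explicit, via the strict convexity of $\log\cosh$ and Jensen, the equal-ratio optimality that the paper dismisses as "easy to verify," which is a welcome tightening rather than a different method.
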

    \begin{proof}
        See Appendix~\ref{sec:appendix_range_sampling}.
    \end{proof}\vspace{-0.5em}
Based on the \textbf{Propositions~\ref{p:optimal_range_sample}} and \textbf{\ref{p:optimal_range_divide}}, and defining $\xi_{\Sigma} = ({r_{\max}}/{r_{\min}})^{1/2^q}$, we obtain a suboptimal solution to Problem (P9) as follows
\begin{framed}
    {\setlength\abovedisplayskip{0pt}
        \setlength\belowdisplayskip{0pt}
        \begin{equation}
            \label{eq:nf_r_samples}
            \begin{aligned}
                 & \textbf{Near-field Range Samples:}                                                                            \\
                 & \Scal_{\rm r}^* = \left[ \hat{r}_1,\hat{r}_2,\cdots,\hat{r}_{2^q}\right],    \text{where }                          \\
                 & \hat{r}_i^* = \frac{r_{\min}}{2} (\xi_{\Sigma}^{-1}+1){\xi^{i}_{\Sigma}}, i =1,2,\dots,2^q.
            \end{aligned}
        \end{equation}}
\end{framed}
The proposed range sampling scheme in~\eqref{eq:nf_r_samples} is referred to as \textit{geometric sampling}, since the range samples are generated by a geometric sequence (i.e., the ratio of two adjacent samples ${\hat{r}_i^*}/{\hat{r}_{i-1}^*}= \xi_{\Sigma}$ is a constant), which assigns more samples at closer ranges. The corresponding codebook $\Bcal_1=\{\mathbf{b}_1(i,j)=\mathbf{a}(\hat{\theta}_i,\hat{r}_j)\mid \hat{\theta}_i\in \Scal_{\rm a}^*,\hat{r}_j \in \Scal_{\rm r}^* \}$ is called the \textit{geometric codebook}.

\begin{figure}[t]
\centering
\begin{minipage}{0.7\linewidth}
    \includegraphics[width=\linewidth]{./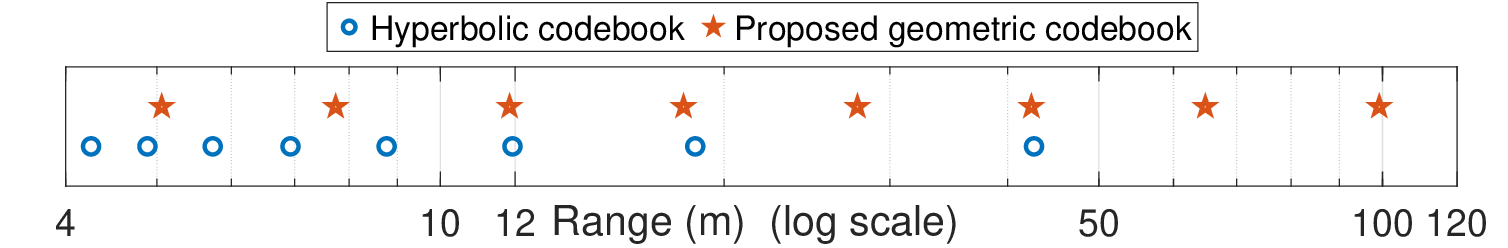}
    \subcaption{Range samples.}
\end{minipage}
\hfill
\begin{minipage}{0.7\linewidth}
    \includegraphics[width=\linewidth]{./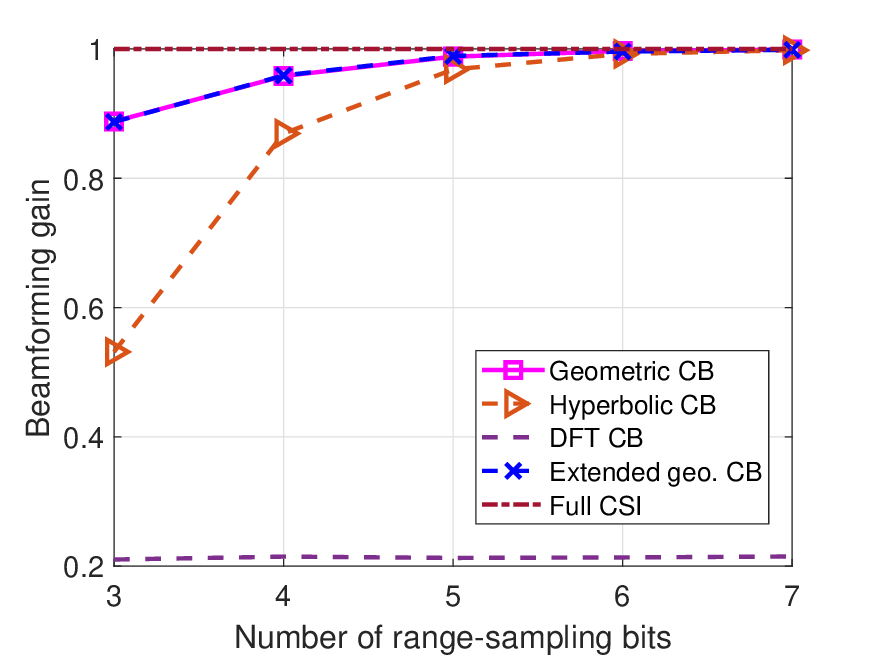}
    \subcaption{Beamforming gain vs. the number of range-sampling bits $q$.}
\end{minipage}
\caption{\small
Comparison between geometric and hyperbolic codebooks. For other simulation parameters, please refer to Section~\ref{sec:number_sim}.}
\label{fig:uniform_group}
\vspace{-1.2em}
\end{figure}

\begin{remark}[Geometric versus (vs.) hyperbolic]
\label{r:scheme_compare}
The proposed \textit{geometric} codebook is a new polar-domain codebook and is different from the \textit{hyperbolic codebook} proposed in~\cite{cuiChannelEstimationExtremely2022}. It is worth noting that the range samples of the hyperbolic codebook are sampled over a bounded range $[r_{\min}, r_{\max}]$ as
\begin{equation}
    \label{eq:nf_r_samples_polar}
\hat r^{\rm hy}_i=\frac{2^{q} r_{\max}}{i(\xi^{2}-1)+2^{q}},~ \forall i=1,\dots,2^{q}.
\end{equation}
As illustrated in Fig.~\ref{fig:uniform_group}(a), the hyperbolic codebook assigns too few samples in the long range regime. 
For example, it only assigns two range samples for the regime of $[12,120]$ meter (m), while there are $92\%$ of the users located in this regime under uniform distribution. This suggests that the hyperbolic codebook fails to adapt to the user range distribution; while the proposed geometric range sampling assigns more samples to cover users in the long range regime, thus achieving a higher beamforming gain, as shown in Fig.~\ref{fig:uniform_group}(b).
\end{remark}

\begin{figure}[t]
\centering
\begin{minipage}{0.7\linewidth}
    \includegraphics[width=\linewidth]{./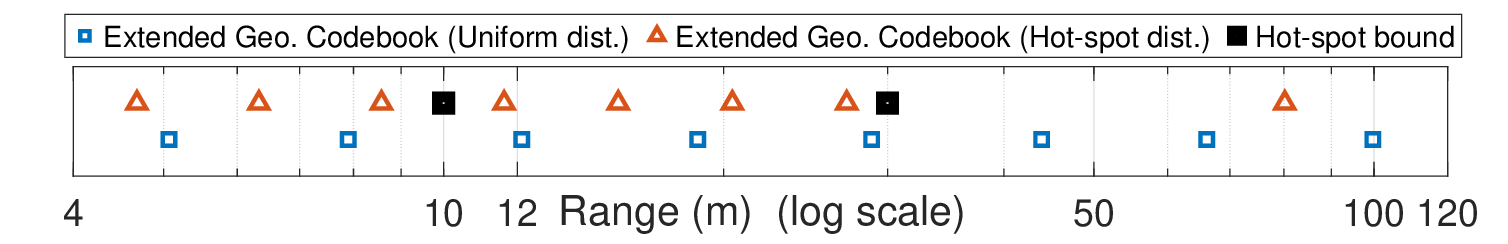}
    \subcaption{\small Range samples. \textit{Hot-spot bound} indicates the boundaries of the hot-spot user region.}
\end{minipage}
\hfill
\begin{minipage}{0.7\linewidth}
    \includegraphics[width=\linewidth]{./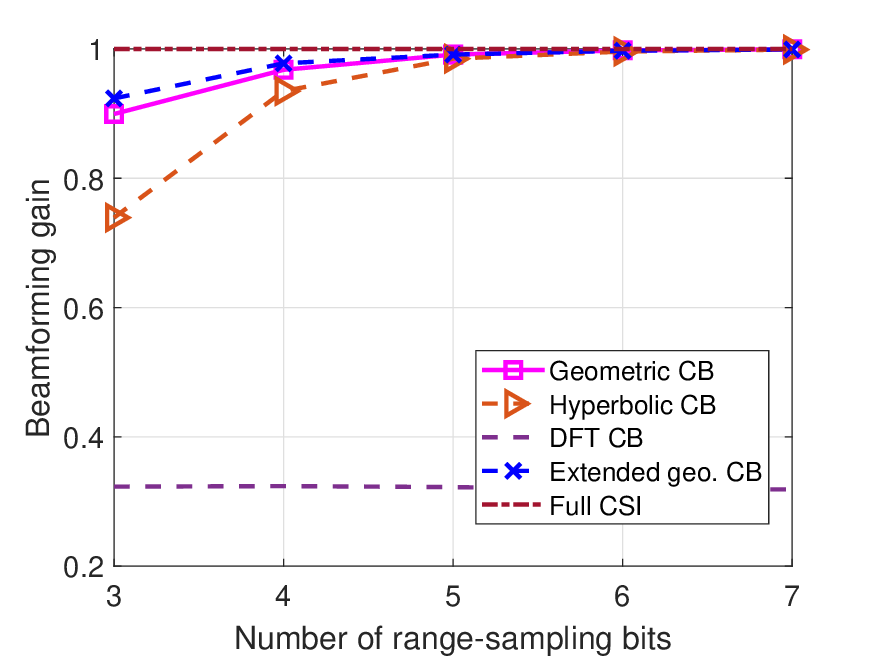}
    \subcaption{Beamforming gain vs. the number of range-sampling bits $q$.}
\end{minipage}
\caption{\small
Comparison between uniform and non-uniform (hot-spot\protect\footnotemark) user distributions. For other simulation parameters, please refer to Section~\ref{sec:number_sim}.}
\label{fig:nonuniform_group}
\vspace{-1.2em}
\end{figure}

\footnotetext{Hot-spot distribution: $r$ has a probability of $90\%$ to uniformly distribute in $[10,20]$ and $10\%$ to uniformly distribute in $[4,10]\cup [20,120]$.} 
\subsubsection{\textbf{Proposed Solution to Problem (P6)}}
\label{sec:adptive_allocation_I}
After solving Problem (P5), we optimize the bit allocation $(p,q)$ to solve Problem (P6). However, it is generally difficult to solve due to non-convex objective function $\tilde{\Gamma}(p,q)$ and the integer constraints on the feedback bit allocation $(p,q)$ in~\eqref{eq:P4_cardinality_cons_pq}. To tackle this difficulty, we propose an exhaustive search algorithm to obtain its optimal solution, where we approximate the objective function $\tilde{\Gamma}(p,q)$ using Monte-Carlo simulations. 
Since for practical implementation, $B_1$ is generally small (e.g., $B_1$ being 16--20 can achieve satisfactory rate performance, see Fig.~\ref{fig:bits_allocation}), which renders computational complexity of the exhaustive search acceptable. 
The overall computational complexity of the proposed exhaustive search algorithm is on the order of $\mathcal{O}(2^{B_1} N_{\rm mc})$ with $N_{\rm mc}$ being the number of Monte-Carlo simulations. For practical implementation, $N_{\rm mc}$ can be set to a moderate value (e.g., $N_{\rm mc}=300$) \cite{cuiChannelEstimationExtremely2022}.
\begin{figure}[t]
\centering
\begin{minipage}{0.7\linewidth}
    \includegraphics[width=\linewidth]{./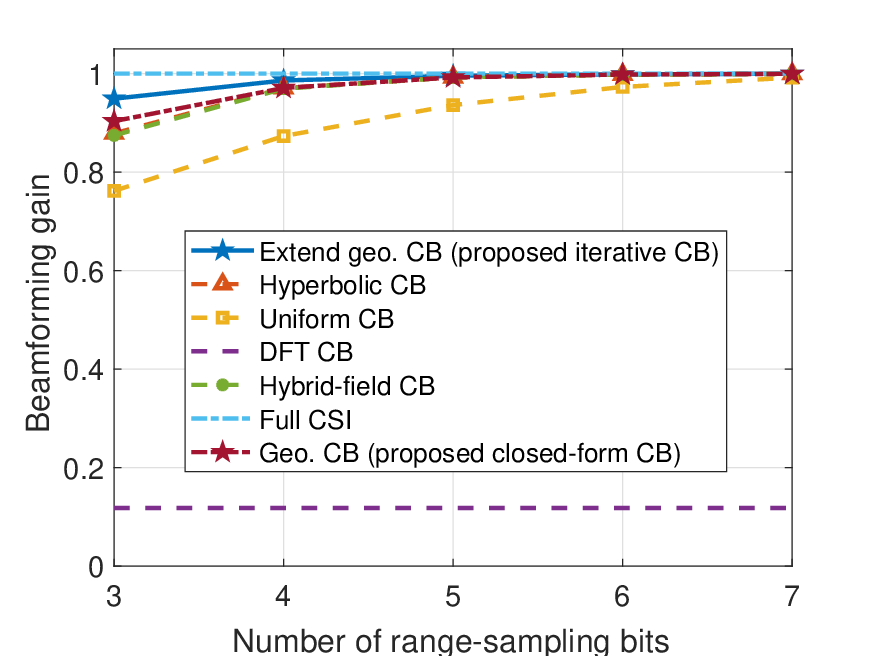}
    \subcaption{\small Gaussian.}
\end{minipage}
\hfill
\begin{minipage}{0.7\linewidth}
    \includegraphics[width=\linewidth]{./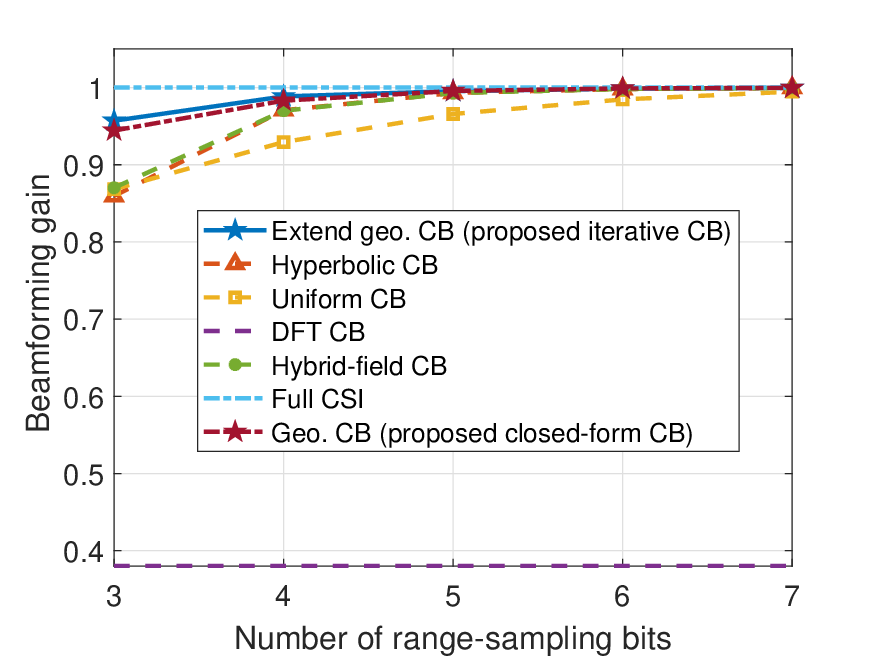}
    \subcaption{\small Mixture of Gaussian.}
\end{minipage}
\caption{\small
Beamforming performance under Gaussian and mixture of Gaussian user distributions. For other simulation parameters, please refer to Section~\ref{sec:number_sim}.}
\label{fig:nonuniform_group_oth}
\vspace{-0.8em}
\end{figure}

\begin{remark}[Non-uniform user distributions]
    \label{r:non_unif}
    The geometric codebook under more general non-uniform user distributions can be obtained by using the alternating-based algorithm similar to~\cite{llyodLeastSquaresQuantization1982}, which is referred to as the \emph{extended geometric codebook}. {Specifically, suppose that we have $N_{\rm sam}$ samples of user-range data, $\{r_n\}_{n=1}^{N_{\rm sam}}$, which are drawn from a non-uniform distribution $\mu(r)$.
    As summarized in Algorithm 1, the extended geometric range-sampling algorithm includes two main steps: 1) updating the range codewords $\{\hat{r}_i\}$ based on the current range-cell partitioning $\{\mathcal{I}_i\}$ (lines 3--6); and 2) updating the range-cell partitioning $\{\mathcal{I}_i\}$ based on the updated range codewords $\{\hat{r}_i\}$ (lines 7--10).
    Note that, if the user distribution is uniform, the above alternating algorithm for the non-uniform distribution essentially reduces to the same range sampling as in \eqref{eq:nf_r_samples} (see Fig.~\ref{fig:nonuniform_group}(a)}).
    The angle-sampling method in~\eqref{eq:nf_aod_samples} can be similarly extended to non-uniform distributions by replacing the objective function of line 6 in Algorithm~\ref{a:location_distribution} with that in Problem (P8) (i.e., minimizing $\expect_\theta[{\varepsilon}_\theta(\hat{\theta}_i)|\mathcal{C}_i]$). 
    It is worth noting that the closed-form geometric codebook (designed based on uniform user distribution) can still provide satisfactory beamforming performance under non-uniform distributions, as shown in Fig.~\ref{fig:nonuniform_group}(b). 
    This is because the geometric allocation of range samples spans both the near and long range regimes, thereby offering robustness to diverse user distributions. 
    {
    Moreover, for other non-uniform distributions (e.g., Gaussian distribution and mixture of Gaussian distribution), the extended geometric codebook can also be applied to further enhance the beamforming performance, as shown in Fig.~\ref{fig:nonuniform_group_oth}.}
\end{remark}

\begin{algorithm}[t]{
    \caption{Extended geometric range sampling}
\label{alg:four_step_lloyd_r_opt}\begin{algorithmic}[1]
    \setlength{\itemsep}{0pt}
\setlength{\parskip}{0pt}
\Input User-range data $\{r_{n}\}_{n=1}^{N_{\rm sam}}$, bits $q$, tolerance $\varepsilon$
\State Randomly initialize $\{\mathcal{I}_i\}_{i=1}^{2^q}$.
\Repeat
    \For{$i=1$ to $2^q$}
        \State $\mathcal{R}_{i} \gets \{ r_{n} : n \in \mathcal{I}_{i} \}$
        \State $\hat{r}_{i} \gets \arg\min_{c>0}\, \sum_{r \in \mathcal{R}_{i}} \mathring{\varepsilon}_r (c)$ //  Update the range codeword.
    \EndFor
    \For{$i=1$ to $2^q$}
    \State $\mathcal{I}_{i} \gets \{ n : i = \arg\min_{j} |1/r_{n} - 1/\hat{r}_{j}| \}$ //  Update the range-cell partitioning.
    \EndFor
    \State $\delta \gets \max_{i} \left| \hat{r}_{i} - \hat{r}_{i}^{\text{prev}} \right|$, store $\hat{r}_{i}^{\text{prev}} \gets \hat{r}_{i}$.
\Until{$\delta < \varepsilon$}
\State \textbf{Return} sorted $\{ \hat r_{i} \}_{i=1}^{2^q}$.
\end{algorithmic}}
\end{algorithm}

\subsection{Codebook Design for Phase 2}
\label{sec:effective_channel_feedback}

After Phase 1, the BS further designs its digital beamforming matrix $\mathbf{F}_{\mathrm{BB}}$ in Phase~2. Specifically, the effective channel matrix for the users is given by $\mathbf{G} = \left[\mathbf{g}_1,\mathbf{g}_2,\cdots,\mathbf{g}_K\right]$ with $\mathbf{g}_k^H = \mathbf{h}_k^H \mathbf{F}_{\mathrm{RF}}$. When $ \hat{\theta}^\star\to\theta$ and $ \hat{r}^\star\to r$, the diagonal elements of $\mathbf{G}$ can be obtained as
\begin{equation}
    \label{eq:effective_channel_diag}
\!\![\mathbf{G}]_{k,k} \!= \! \mathbf{h}_k^H \mathbf{f}_{\mathrm{RF},k} \!=\!  \beta_{k} \mathbf{a}^H({\theta}_{k},{r}_{k}) \mathbf{a}(\hat{\theta}^\star_{k},\hat{r}^\star_{k}) \approx\beta_k, \forall k,
\end{equation}
where the approximation is due to $\mathbf{a}({\theta}_{k},{r}_{k})^H\mathbf{a}(\hat{\theta}^\star_{k},\hat{r}^\star_{k}) \to 1$. Similarly, the non-diagonal elements of $\mathbf{G}$ can be approximated as
\begin{equation}
    \label{eq:effective_channel_non_diag}
\!\![\mathbf{G}]_{k,v} \!=\!  \mathbf{h}_k^H \mathbf{f}_{\mathrm{RF},v} \!\!=\!\!\beta_{k} \mathbf{a}^H(\theta_{k},r_{k})\mathbf{a}(\hat{\theta}^\star_{v},\hat{r}^\star_{v})\approx 0, \forall k\ne v,
\end{equation}
when $\hat{\theta}_k \ne \hat{\theta}^\star_{v}$. One can observe from \eqref{eq:effective_channel_diag} and \eqref{eq:effective_channel_non_diag} that the near-field beam-focusing effect can help eliminate IUI. However, when the number of users is large (or the users are close to each other), the residual IUI may still cause a degraded sum-rate, thus necessitating explicit feedback of the effective channels $\mathbf{G}$ to help the BS design IUI-eliminating digital beamforming $\fbb$. 

To this end, we adopt the RVQ codebook to generate a feedback codebook for $\mathbf{G}$. Specifically, each codeword of $\Bcal_2$, i.e., $\mathbf{b}_2{(i)}$, is generated according to the distribution of $\mathbf{g}$ given in~\eqref{eq:effective_channel_diag} and~\eqref{eq:effective_channel_non_diag} by Monte-Carlo simulations. Note that RVQ is easy to implement and is \emph{asymptotically} optimal when the number of users $K$ is sufficiently large~\cite{au-yeungPerformanceRandomVector2007a}. 
With the feedback in Phase 2 of all the $K$ users $\{\hat{\mathbf{g}}_k\}$, the BS can then utilize the ZF technique as a suboptimal yet efficient digital beamforming design to solve Problem (P3), which has been widely used in the literature (see, e.g., \cite{limitedfeedbackSurvey,alkhateebLimitedFeedbackHybrid2015,lywTut}). Mathematically, the ZF-based digital beamforming for user $k$ can be expressed as 
$
\fbbk{k}= {[(\hat{\mathbf{G}}^H \hat{\mathbf{G}})^{-1} \hat{\mathbf{G}}]_{(:,k)} \over \|[(\hat{\mathbf{G}}^H \hat{\mathbf{G}})^{-1} \hat{\mathbf{G}}]_{(:,k)}\|_2},\forall k,
$
where $\hat{\mathbf{G}} = [\hat{\mathbf{g}}_1,\hat{\mathbf{g}}_2,\cdots,\hat{\mathbf{g}}_K]$ is the effective channel matrix given the feedback $\{\hat{\mathbf{g}}_k\}$. 


\begin{remark}[Multi-path channels]
    \label{r:multi-path_channel}
    The proposed scheme can be extended to multi-path channels by feeding back the parameters (angle, range, complex gain) for each path. Specifically, the geometric codebook can be used to feed back the angle and range of each path, while an RVQ codebook can be used for the feedback of the complex-valued channel gains of all paths after they are estimated by the user. However, the optimal bit allocation among paths and a tailored feedback framework for the hybrid beamforming architecture still remain open problems.
\end{remark}
\begin{remark}[Imperfect CSI]
    \label{r:imperfect_csi}
    The proposed framework can be extended to handle imperfect CSI at the user side. With channel estimation errors, the true channel $\mathbf{h}_k$ can be modeled to lie within an uncertainty set around the estimated channel $\hat{\mathbf{h}}_k$~\cite{robust}. A robust feedback strategy can then be adopted, where the user selects the codeword that maximizes the worst-case received power over this uncertainty set~\cite{robust}, which ensures a guaranteed rate performance level despite CSI inaccuracy. However, designing a more dedicated and efficient feedback mechanism for imperfect CSI is left for future work.
\end{remark}

\begin{figure}[t]
    \centering
    \begin{subfigure}{0.485\linewidth}
        \centering
        \includegraphics[width=\linewidth]{./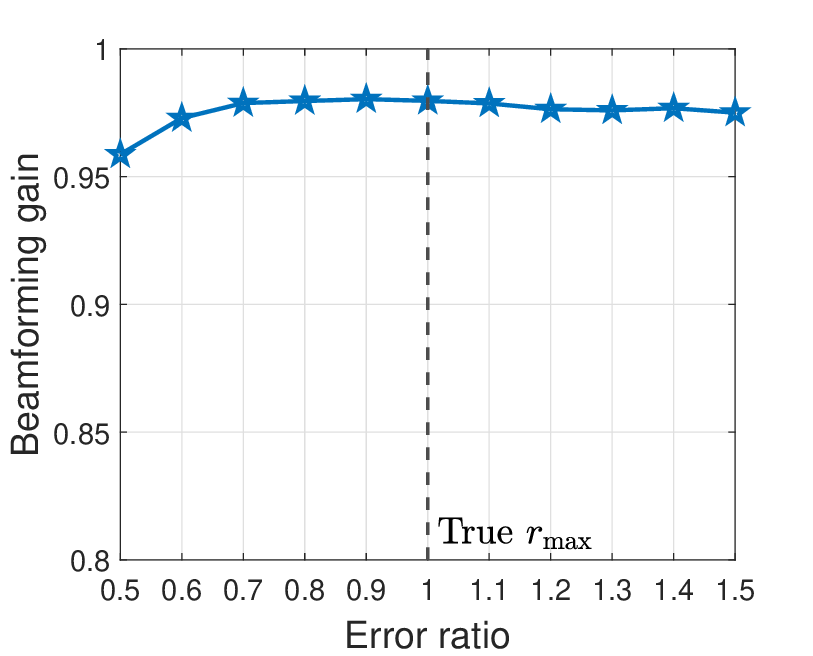}
        \subcaption{Mismatch in user distribution parameters. The error ratio is defined as $r_{\max}^{\text{est.}}/r_{\max}^{\text{true}}$, where $r_{\max}^{\text{est.}}$ and $r_{\max}^{\text{true}}$ denote the estimated and true maximum user ranges, respectively.}
    \end{subfigure}
    \begin{subfigure}{0.485\linewidth}
        \centering
        \includegraphics[width=\linewidth]{./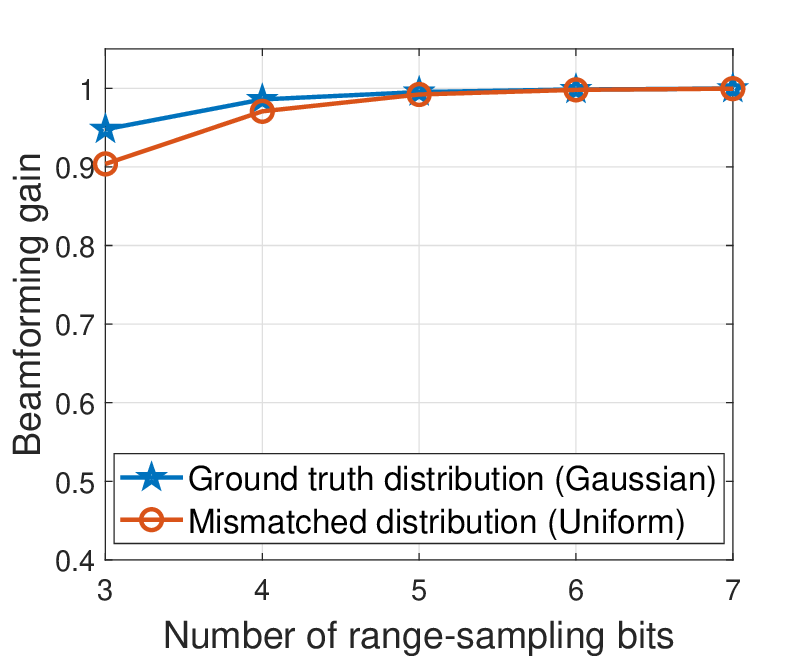}
        \subcaption{Mismatch in user distribution types. Specifically, two codebooks are designed based on uniform and Gaussian distribution, respectively, while the actual user distribution is Gaussian.}
    \end{subfigure}
    \caption{\small The impact of inaccurate user distribution information on the beamforming gain performance of the proposed geometric codebook. For other simulation parameters, please refer to Section~\ref{sec:number_sim}.}
    \label{fig:dist_sensitivity}
\end{figure}

{
\begin{remark}[Other practical considerations]
First, for practical deployments, the BS can estimate the actual user distribution by using, e.g., historical data, global positioning systems (GPS) information, integrated sensing and communication (ISAC) methods, and existing CSI estimation methods \cite{3gpp.37.320,limitedfeedbackSurvey,you2024generationadvancedtransceivertechnologies,cuiChannelEstimationExtremely2022,shen0218channelfeedback,lu2023nearfieldestimation}. Second, the codebook sharing between the BS and the users can be performed by following similar procedures defined in model transfer process in 3GPP Rel-18\cite{3gpprel18}.
Third, the signaling overhead and computational complexity associated with codebook updates are generally negligible as the codebook only needs to be updated per serveral hours or days.
Furthermore, to evaluate the impact of inaccurate distribution knowledge on the proposed codebooks, we evaluate the beamforming gain under two scenarios: (1) estimation errors in $r_{\max}$ in Fig.~\ref{fig:dist_sensitivity}(a), and (2) mismatches in distribution type in Fig.~\ref{fig:dist_sensitivity}(b). 
The results indicate that both types of inaccuracies result in only marginal performance degradation, thereby verifying the robustness of the proposed codebook design against imperfect distribution information.
\end{remark}
}

\section{Performance Analysis}
In this section, we analyze the required feedback bits for the proposed three-phase transmission protocol to achieve a target beamforming gain and a target rate gap. For ease of exposition, we consider a uniform distribution of user angles and ranges in this section, while the results can be extended to non-uniform distributions.
\vspace{-0.9em}
\label{sec:analysis}
\subsection{Required Feedback Bits in Phase 1}
\label{sec:bits_analysis_phase1}
We analyze the impact of the number of antennas on the required feedback bits, i.e., the feedback bit overhead for an XL-array equipped with $M$ antennas to achieve a target beamforming gain $\Gamma_0$. Due to the intricate coupling between $\tilde{\Gamma}(p,q)$ and the complicated form of $\mathbb{E}_\theta[\tilde{\varepsilon}_\theta]$ and $\mathbb{E}_r[\tilde{\varepsilon}_{r}]$, we first characterize the expected beamforming gain $\Gamma$ of the XL-array with $M$ antennas as follows
\begin{align}
    \Gamma &\approx \frac{2}{M} \left| \sum_{m=0}^{(M-1)/2} e^{\jmath m \pi \mathbb{E}_\theta[\tilde{\varepsilon}_\theta]} e^{\jmath k_c m^2 d_0^2  \frac{1}{2} \check{\vartheta} \mathbb{E}_r[\tilde{\varepsilon}_{r}]} \right| \notag\\
 & \triangleq \check{f}( \mathbb{E}_\theta [\tilde{\varepsilon}_\theta],\check{\vartheta}\mathbb{E}_r[\tilde{\varepsilon}_{r}],M),\label{eq:approximation_M_E}
\end{align}
where we adopt the approximation in \textbf{Lemma~\ref{lemma:Gamma_approximation}}. 
We assume that the target beamforming gain is sufficiently large (i.e., sufficiently small $\tilde{\varepsilon}_\theta$ and $\tilde{\varepsilon}_{r}$). 
We reveal how the number of angle-sampling bits should grow with the number of antennas $M$, provided that range sampling is sufficiently accurate (e.g., $q>3$ as revealed in Section~\ref{sec:number_sim}).

\begin{lemma}
    \label{c:epc_q}
 For an $M$-antenna XL-array, to achieve a target expected beamforming gain $\Gamma_0\gg 0$ in~\eqref{eq:approximation_M_E}, if the range sampling is sufficiently accurate, the required number of angle-sampling bits $p$ is approximated by
    \begin{equation}
        \begin{aligned}
 \label{eq:AoD_bits}
 p \approx \log_2 {M} - \log_2{D(\mathcal{Q})}-\log_2{M_0 \tilde{\varepsilon}_{\theta,\Gamma_0}}-2,
        \end{aligned}
    \end{equation}
 where $\tilde{\varepsilon}_{\theta,\Gamma_0}$, $M_0$ are constants satisfying $\check{f}(\tilde{\varepsilon}_{\theta,\Gamma_0},0,M_0) = \Gamma_0$.
\end{lemma}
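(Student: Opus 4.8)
The plan is to exploit the fact that, once the range error is driven to zero, the expected beamforming gain in \eqref{eq:approximation_M_E} collapses to a one-dimensional Dirichlet-kernel-type function whose value is governed solely by the product $M\,\mathbb{E}_\theta[\tilde{\varepsilon}_\theta]$. Matching this product to the reference constant that defines $\Gamma_0$ then pins down $\mathbb{E}_\theta[\tilde{\varepsilon}_\theta]$ as a function of $M$, and substituting the closed-form uniform angle-sampling error converts this into the required bit count $p$. Concretely, I would first set the range sampling to be sufficiently accurate, so that $\mathbb{E}_r[\tilde{\varepsilon}_r]\to 0$ and the factor $e^{\jmath k_c m^2 d_0^2 \frac{1}{2}\check{\vartheta}\mathbb{E}_r[\tilde{\varepsilon}_r]}$ in \eqref{eq:approximation_M_E} reduces to unity. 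This leaves
\[
\Gamma \approx \check{f}(\mathbb{E}_\theta[\tilde{\varepsilon}_\theta],0,M)=\frac{2}{M}\left|\sum_{m=0}^{(M-1)/2} e^{\jmath m \pi \mathbb{E}_\theta[\tilde{\varepsilon}_\theta]}\right|,
\]
a truncated geometric series, which I would sum in closed form to obtain the ratio of sines $\frac{2}{M}\,|\sin(N\pi\epsilon/2)|/|\sin(\pi\epsilon/2)|$ with $N=(M+1)/2$ and $\epsilon\triangleq\mathbb{E}_\theta[\tilde{\varepsilon}_\theta]$.

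Next, since a large target gain $\Gamma_0$ forces $\epsilon$ to be small, I would apply the small-angle approximation $\sin(\pi\epsilon/2)\approx \pi\epsilon/2$ to the denominator and $N\approx M/2$ to the numerator, obtaining $\check{f}\approx \frac{4}{\pi M\epsilon}\,|\sin(\pi M\epsilon/4)|$. The key structural observation is that this depends on $M$ and $\epsilon$ only through the product $v\triangleq M\epsilon$. Consequently, holding $\Gamma=\Gamma_0$ fixed is equivalent to holding $v$ at a constant value, and the defining relation $\check{f}(\tilde{\varepsilon}_{\theta,\Gamma_0},0,M_0)=\Gamma_0$ identifies that constant as $v_0 = M_0\,\tilde{\varepsilon}_{\theta,\Gamma_0}$. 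The requirement for an $M$-antenna array therefore becomes $M\,\mathbb{E}_\theta[\tilde{\varepsilon}_\theta]=M_0\,\tilde{\varepsilon}_{\theta,\Gamma_0}$.

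Finally I would insert the error of the optimal uniform angle sampling \eqref{eq:nf_aod_samples}. With equal spacing $\Delta=D(\mathcal{Q})/(2^p+1)$, \textbf{Lemma~\ref{c:optimal_angle_codeword}} assigns each Voronoi cell the conditional error $\Delta/4$, so $\mathbb{E}_\theta[\tilde{\varepsilon}_\theta]=D(\mathcal{Q})/(4(2^p+1))\approx D(\mathcal{Q})/2^{p+2}$ for large $p$. Substituting into $M\,\mathbb{E}_\theta[\tilde{\varepsilon}_\theta]=M_0\tilde{\varepsilon}_{\theta,\Gamma_0}$ gives $2^{p+2}=M D(\mathcal{Q})/(M_0\tilde{\varepsilon}_{\theta,\Gamma_0})$, and taking $\log_2$ and rearranging yields the logarithmic form \eqref{eq:AoD_bits}, exhibiting the $\mathcal{O}(\log M)$ scaling. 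I would, however, double-check the sign of the $\log_2 D(\mathcal{Q})$ term, since a wider angular domain intuitively demands more (not fewer) samples, which suggests it should enter with a positive sign.

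The main obstacle I anticipate is rigorously justifying the collapse of $\check{f}(\epsilon,0,M)$ to a function of the single product $M\epsilon$: this requires controlling the two approximations — the small-angle linearization of the denominator and the replacement $(M+1)/2\to M/2$ in the oscillatory numerator — and showing that their residuals are negligible in the joint regime of large $M$ and small $\epsilon$. Since the statement is deliberately an approximation, the real work is to bound these residuals (together with the $2^p+1\approx 2^p$ and edge-cell approximations used in the sampling-error computation) rather than to establish an exact identity.
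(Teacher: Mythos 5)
Your proposal follows essentially the same route as the paper's proof in Appendix F: set $\mathbb{E}_r[\tilde{\varepsilon}_r]\approx 0$, sum the geometric series to get a sinc-type gain depending only on the product $M\,\mathbb{E}_\theta[\tilde{\varepsilon}_\theta]$, match that product to the reference constant $M_0\tilde{\varepsilon}_{\theta,\Gamma_0}$, and substitute $\mathbb{E}_\theta[\tilde{\varepsilon}_\theta]=D(\mathcal{Q})/(4\cdot 2^{p})$ from the uniform angle sampling. Your sign concern is also well founded: carrying out this substitution gives $2^{p}=M D(\mathcal{Q})/(4 M_0\tilde{\varepsilon}_{\theta,\Gamma_0})$ and hence $p\approx\log_2 M+\log_2 D(\mathcal{Q})-\log_2(M_0\tilde{\varepsilon}_{\theta,\Gamma_0})-2$, so the minus sign on $\log_2 D(\mathcal{Q})$ in the lemma statement appears to be a typo (numerically invisible in the paper's simulations, where $D(\mathcal{Q})=1$).
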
  
\begin{proof}
    {See Appendix~\ref{sec:appendix_lemma5}.}
\end{proof}

Specifically, by fixing $M_0$, $\tilde{\varepsilon}_{\theta,\Gamma_0}$ and $D(\mathcal{Q})$ as constants, the target beamforming gain $\check{f}(\tilde{\varepsilon}_{\theta,\Gamma_0},0,M_0) = \Gamma_0$ is fixed, and the required number of angle-sampling bits $p$ is on the order of $\mathcal{O}(\log_2 M)$. 
\begin{lemma}
    \label{c:epc_p}
    For an $M$-antenna XL-array, to achieve a target expected beamforming gain $\Gamma_0\gg 0$ in~\eqref{eq:approximation_M_E}, when angle-sampling is accurate, the required number of range sampling bits $q$ is approximated by
    \begin{equation}
        \label{eq:range_bits}
        \begin{split}
            q &\approx 2\log_2{M} + \log_2 \left(\frac{\log^2 \xi}{D(\mathcal{R})}\right) - \log_2\left({\check{M}_0^2 \tilde{\varepsilon}_{r,\Gamma_0} }\right)\\& - (1 + \log_2(\log 2)),
        \end{split}
    \end{equation}
    where $\tilde{\varepsilon}_{r,\Gamma_0}$, $\check{M}_0$ are constants satisfying $\check{f}(0,\check{\vartheta} \tilde{\varepsilon}_{r,\Gamma_0},\check{M}_0) = \Gamma_0$, and $\log(\cdot)$ is the natural logarithm.
\end{lemma}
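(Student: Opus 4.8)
The plan is to establish the range-sampling counterpart of \textbf{Lemma~\ref{c:epc_q}}, following the same scaling philosophy but adapted to the \emph{quadratic} (rather than linear) phase that the range error induces in~\eqref{eq:approximation_M_E}. First I would impose ``angle sampling is accurate,'' i.e.\ set $\mathbb{E}_\theta[\tilde{\varepsilon}_\theta]\approx 0$, so that the expected gain collapses to $\check{f}(0,\check{\vartheta}\mathbb{E}_r[\tilde{\varepsilon}_{r}],M)=\frac{2}{M}|\sum_{m=0}^{(M-1)/2}e^{\jmath k_c m^2 d_0^2\frac{1}{2}\check{\vartheta}\mathbb{E}_r[\tilde{\varepsilon}_{r}]}|$, a pure quadratic-phase sum whose only free quantities are the array size $M$ and the effective range error $\eta\triangleq\check{\vartheta}\mathbb{E}_r[\tilde{\varepsilon}_{r}]$.

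The crux is a scaling-invariance argument. Approximating the quadratic-phase sum by its continuum (Fresnel-integral) limit and substituting $u=m\sqrt{k_c d_0^2\eta/2}$, the $2/M$ normalization and the integral's upper limit combine so that $\check{f}$ depends, to leading order, only on the single composite variable $b\triangleq\frac{M}{2}\sqrt{k_c d_0^2\eta/2}$; equivalently, a short Taylor/variance expansion of the sum gives $\check{f}\approx 1-\frac{1}{2}\mathrm{Var}_m(\phi_m)$ with $\phi_m\propto m^2\eta$ and $\mathrm{Var}_m(m^2)=\Theta(M^4)$, so that the gain is governed by $M^2\eta$. Either route shows that achieving a fixed target $\Gamma_0$ pins down the product $M^2\mathbb{E}_r[\tilde{\varepsilon}_{r}]$ to a constant. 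Evaluating this constant at the reference operating point defined in the statement, where $\check{f}(0,\check{\vartheta}\tilde{\varepsilon}_{r,\Gamma_0},\check{M}_0)=\Gamma_0$, yields the key relation $M^2\mathbb{E}_r[\tilde{\varepsilon}_{r}]=\check{M}_0^2\tilde{\varepsilon}_{r,\Gamma_0}$, i.e.\ $\mathbb{E}_r[\tilde{\varepsilon}_{r}]\approx\check{M}_0^2\tilde{\varepsilon}_{r,\Gamma_0}/M^2$.

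Next I would invoke the optimized range error from \textbf{Proposition~\ref{p:optimal_range_divide}}, namely~\eqref{eq:optimal_range_error_sum}, and take its large-$q$ asymptotics. Writing $N=2^q$ and using $\log\frac{\xi^{1/N}+\xi^{-1/N}}{2}=\log\cosh\frac{\log\xi}{N}\approx\frac{\log^2\xi}{2N^2}$ for large $N$ gives $\mathbb{E}_r[\tilde{\varepsilon}_{r}]\approx\frac{\log^2\xi}{2\cdot 2^q D(\mathcal{R})}$. Equating this with the pinned-down value from the previous step and solving for $2^q$ yields $2^q\propto M^2\log^2\xi/(D(\mathcal{R})\check{M}_0^2\tilde{\varepsilon}_{r,\Gamma_0})$; taking $\log_2$ then produces the leading term $2\log_2 M$ together with the additive constants $\log_2(\log^2\xi/D(\mathcal{R}))$, $-\log_2(\check{M}_0^2\tilde{\varepsilon}_{r,\Gamma_0})$, and the numerical offset collected as $-(1+\log_2(\log 2))$ in~\eqref{eq:range_bits}.

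The main obstacle I anticipate is twofold. First, justifying the scaling invariance---that after the $2/M$ normalization the quadratic-phase sum truly reduces to a function of $M^2\mathbb{E}_r[\tilde{\varepsilon}_{r}]$ alone---requires controlling the continuum (Fresnel) approximation error uniformly in $M$, which is the genuinely delicate step and mirrors the corresponding step in \textbf{Lemma~\ref{c:epc_q}}. Second, bookkeeping the exact additive offset $(1+\log_2(\log 2))$ is sensitive to the precise form of the $\log\cosh$ expansion and to logarithm-base conversions, so I would carry the constants symbolically through the final substitution rather than discard lower-order terms prematurely. The headline conclusion, $q=\mathcal{O}(\log_2 M)$ with coefficient $2$ (versus coefficient $1$ for the angle bits $p$ in \textbf{Lemma~\ref{c:epc_q}}), is robust to these approximations and is the substantive takeaway that justifies shifting feedback bits toward range as the array grows.
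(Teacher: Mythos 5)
Your proposal follows essentially the same route as the paper's proof: the $\log\cosh$ large-$2^q$ expansion of \eqref{eq:optimal_range_error_sum}, the target-gain condition $M^2\mathbb{E}_r[\tilde{\varepsilon}_{r}]\approx\check{M}_0^2\tilde{\varepsilon}_{r,\Gamma_0}$ (which the paper simply imports from the near-field gain analysis of \cite{cuiChannelEstimationExtremely2022}, whereas you sketch a self-contained Fresnel/scaling argument), and then solving for $q$. The only discrepancy is a half-bit constant: your intermediate asymptotic $\mathbb{E}_r[\tilde{\varepsilon}_{r}]\approx\frac{\log^2\xi}{2\cdot 2^{q}D(\mathcal{R})}$ omits the $1/\ln 2$ factor that the paper obtains by reading the outer logarithm in \eqref{eq:optimal_range_error_sum} as base-$2$, and it is precisely that factor which produces the $-\log_2(\log 2)$ term in \eqref{eq:range_bits}.
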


\begin{proof}
   { See Appendix~\ref{sec:appendix_lemma6}.}
\end{proof}


\begin{remark}[Sensitivity of angle vs. range feedback]
\label{r:range_feedback}
As the number of antennas $M$ increases, both the required feedback bits $(p, q)$ for angle and range grow (approximately) logarithmically. However, their growth trends are different. Specifically, when $M$ is small, the beamforming performance is more sensitive to the angle errors than to the range errors (see Fig.~\ref{fig:f_AoD_r}). This indicates that when $M$ is small, more bits should be used for angle feedback (i.e., $p > q$). 
On the other hand, as $M$ becomes larger, the XL-array becomes more sensitive to range errors due to the higher pre-log term (i.e., $2\log_2 M$) in~\eqref{eq:range_bits}, thus the number of bits for range feedback $q$ should grow faster than $p$. 
Consequently, with a fixed feedback budget $B_1$ (i.e., $p+q=B_1$), the optimal bit allocation shifts from angle-dominant to range-dominant for a larger $M$.
This trend will be numerically verified in Fig.~\ref{fig:bits_allocation} of Section~\ref{sec:number_sim}.
\end{remark}
\vspace{-1.2em}
\subsection{Required Total Feedback Bits of the Three Phases}

To analyze the required feedback bits for the three-phase transmission protocol, we first characterize the upper-bound of the rate-gap $\Delta R$ between the sum-rate with perfect CSI feedback (denoted by $R_{\rm ideal}$) and the sum-rate with limited feedback (denoted by $R_{\rm lim}$). 
Specifically, following similar analysis method in \cite{alkhateebLimitedFeedbackHybrid2015}, $\Delta R$ can be upper-bounded as 
\begin{equation}
 \label{eq:rate_gap}
    \begin{aligned}
 & \Delta R \triangleq R_{\rm ideal} \!-\! R_{\rm lim}\\
 & \le \underbrace{-2\log_2 {\Gamma(\mathcal{S}_{\rm a},\mathcal{S}_{\rm r} | p,q)}}_{\triangleq \Delta R_{\rm I}} + \underbrace{\log_2\left({1}\!+\! \frac{P_{\rm tol}}{K\sigma^2} 2^{\frac{-B_2}{K-1}} \right)}_{\triangleq \Delta R_{\rm II}}.
    \end{aligned}
\end{equation}
The proof can be found in Appendix~\ref{sec:appendix_rate_gap}.  Specifically, the rate gap can be decomposed as $\Delta R=\Delta R_{\rm I}+\Delta R_{\rm II}$, where $\Delta R_{\rm I}$ comes from imperfect analog gain $\Gamma<1$ and $\Delta R_{\rm II}$ comes from RVQ with error $\mathbb{E}[\min_{\hat{\mathbf{g}}\in\mathcal{B}_2}(1-|\tilde{\mathbf{g}}^H\hat{\mathbf{g}}|^2)] = 2^{-B_2/(K-1)}$. 
To ensure a bounded rate-loss (i.e., a constant $\Delta R$) due to the feedback error, the scaling order of $B_1$ to ensure a constant $\Gamma$ is $\mathcal{O}(\log_2 M)$ (see Section \ref{sec:bits_analysis_phase1}) while that for $B_2$ to ensure a constant $\Delta R_{\rm II}$ is approximately $\mathcal{O}(K)$ (see  $\Delta R_{\rm II}$ in \eqref{eq:rate_gap}), yielding total overhead $\mathcal{O}(K+\log_2 M)$.

\section{Numerical Results}
\begin{table*}[t]
\centering
\caption{ Comparison of feedback overhead, codebook generation computational complexity, and pilot overhead.}
\label{tab:overhead_comparison}
\vspace{-0.5em}
\renewcommand{\arraystretch}{1.2}
{\begin{tabular}{lccc}
\toprule
\textbf{Scheme} & \textbf{Feedback overhead} & \textbf{Computational complexity} & \textbf{Pilot overhead} \\
\midrule
Extended geo. CB (Proposed iterative CB) & $\mathcal{O}(K(B_1+B_2))$ & $\mathcal{O}(2^{B_1}+2^{B_2})$ & $\mathcal{O}(LK) +\mathcal{O}(K^2) $ \\
Geometric CB (Proposed closed-form CB) & $\mathcal{O}(K(B_1+B_2))$ & $\mathcal{O}(2^{B_2})$ & $\mathcal{O}(LK) +\mathcal{O}(K^2)$ \\
Hyperbolic CB \cite{cuiChannelEstimationExtremely2022} & $\mathcal{O}(K(B_1+B_2))$ & $\mathcal{O}(2^{B_2})$ & $\mathcal{O}(LK)+\mathcal{O}(K^2)$ \\
Uniform CB & $\mathcal{O}(K(B_1+B_2))$ & $\mathcal{O}(2^{B_2})$ & $\mathcal{O}(LK)+\mathcal{O}(K^2)$ \\
DFT CB \cite{alkhateebLimitedFeedbackHybrid2015} & $\mathcal{O}(K(B_1+B_2))$ & $\mathcal{O}(2^{B_2})$ & $\mathcal{O}(LK)+\mathcal{O}(K^2)$ \\
Hybrid-field CB \cite{wei2022channel} & $\mathcal{O}(K(B_1+B_2))$ & $\mathcal{O}(2^{B_2})$ & $\mathcal{O}(LK)+\mathcal{O}(K^2)$ \\
\bottomrule
\end{tabular}}
\vspace{-1em}
\end{table*}
\label{sec:number_sim}
\subsection{System Setup and Benchmark Schemes}
We consider an XL-MIMO system operating at a carrier frequency of $30$ GHz, with a BS equipped with $M=387$ antennas serving $K=4$ users, if not otherwise specified.
{ The users are randomly located within the angular region $\mathcal{Q} = [-0.5,0.5]$ and the range region $\mathcal{R}=[2 D, 0.16 Z_{\rm r}]$ (i.e., around $\mathcal{R}=[4, 120]$ m).}
The SNR is set as $22$ dB \cite{ldmaorsdma}, if not otherwise specified. 
The channel gain of the LoS path is set as $\beta_{k,1}=\sqrt{\frac{\kappa}{1+\kappa}}, \forall k$ with $\kappa=9.54$ dB, and the channel gains for the NLoS paths follow $\beta_{k,\ell}\sim \mathcal{CN}(0,\sqrt{\frac{1}{1+\kappa}\frac{1}{L-1}}), \forall k=1,...,K,\forall\ell =2,...,L$~\cite{ldmaorsdma} where the number of the paths is set as $L=3$. 
{ For multi-path scenarios (i.e., non-LoS-dominant channels), the channel gains of all paths are assumed to follow $\beta_{k,\ell}\sim \mathcal{CN}(0,\frac{1}{\sqrt{L}}), \forall k,\forall\ell$. The CSI of multi-path channels is fed back via the method in \textbf{Remark~\ref{r:multi-path_channel}} with an RVQ codebook with $2^{12}$ codewords.}
If not otherwise specified, we assume that all users and scatterers are uniformly distributed in the polar sub-region designated by $\mathcal{Q}\times\mathcal{R}$, with $\times$ here denoting the Cartesian product.
 We use Monte-Carlo simulation with $1000$ runs. For performance comparison, we consider the following benchmarks. 
\begin{itemize}
    \item \textbf{Hyperbolic codebook (CB): }The polar-domain codebook proposed in \cite{cuiChannelEstimationExtremely2022} is used to feed back the locations (i.e., $\theta$ and $r$) of the users for analog beamforming, with $2^p$ angle samples and $2^q$ range samples generated by hyperbolic-range sampling in \eqref{eq:nf_r_samples_polar}.
    \item \textbf{Uniform CB: }The polar-domain codebook is used to feed back the locations of the users for analog beamforming, with $2^p$ angle samples and $2^q$ range samples generated by the uniform range sampling.
    \item \textbf{DFT CB: }The DFT codebook is used to feed back the angle of the users for analog beamforming, with $2^{p+q}$ angle samples. The codeword yielding the maximum received power is fed back for analog beamforming.
    \item \textbf{Hybrid-field CB}: The hybrid-field codebook proposed in \cite{wei2022channel} is used to feed back for analog beamforming. 
    Specifically, this hybrid-field codebook combines a far-field DFT codebook and a near-field polar-domain codebook with $2^{p}$ angle samples and $2^{q}$ range samples. The range sample is given by \eqref{eq:nf_r_samples_polar} with $\hat{r}^{\rm hf}_1=\infty, \hat{r}^{\rm hf}_{i,\forall i \ne 1}=\hat{r}^{\rm hy}_i$, where $\hat{r}^{\rm hf}_1=\infty$ is used to indicate the usage of DFT codebook.
    The codeword yielding the maximum received power is fed back.
    \item \textbf{Full CSI: }The beamforming design is based on full CSI, which serves as a rate performance upper bound.
\end{itemize}
\begin{figure}[t]
    \centering
    \begin{subfigure}{0.7\linewidth}
        \centering
        \includegraphics[width=\linewidth]{./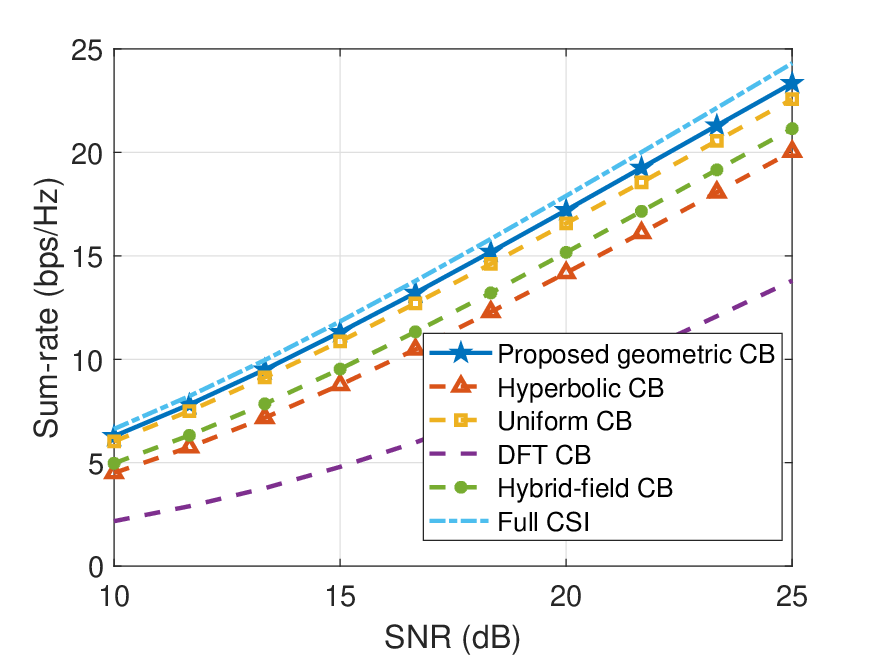}
        \subcaption{\small Uniform user range.}
    \end{subfigure}
    \begin{subfigure}{0.7\linewidth}
        \centering
        \includegraphics[width=\linewidth]{./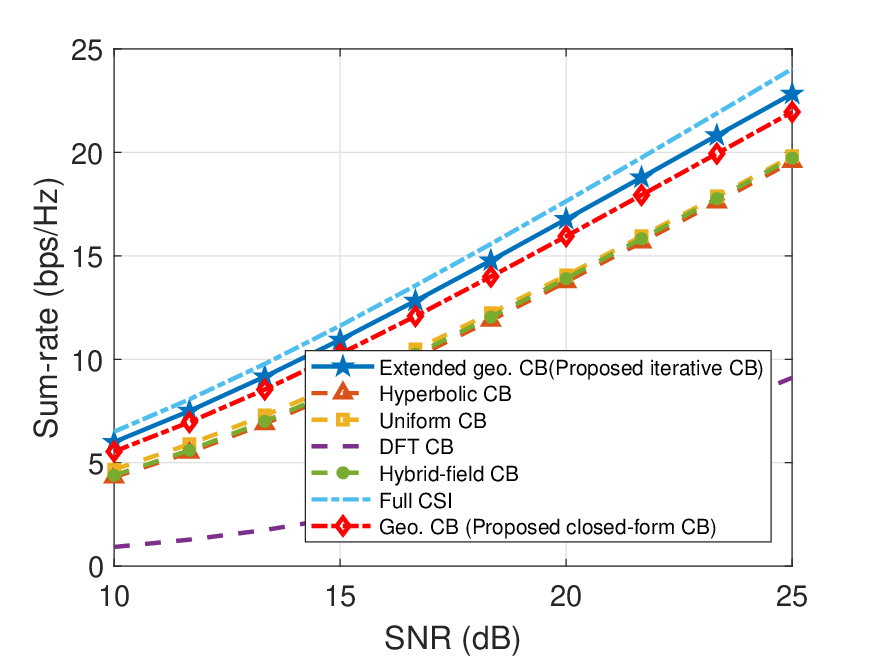}
        \subcaption{\small Hot-spot user range\footnotemark.}
    \end{subfigure}
    \vspace{-0.5em}
    \caption{\small Sum-rate vs. SNR with different user range distributions.}
    \label{fig:rate_SNR}
    \vspace{-0.8em}
\end{figure}
\footnotetext{Hot-spot distribution: user-range $r$ has a probability of $90\%$ to uniformly distribute in $[10,20]$ and $10\%$ to uniformly distribute in $[4,10]\cup [20,120]$.}
For all benchmark schemes, the angle samples are uniformly sampled over $\mathcal{Q}$. 
Unless otherwise specified, the bit allocation in Phase~1 feedback is set to $p=12$ and $q=3$. For Phase 2 feedback, we use an RVQ codebook with $B_2 = 12$ bits.

{ 
\subsection{Complexity and overhead comparison}
The overhead comparisons among different schemes are summarized in Table~\ref{tab:overhead_comparison}. 
Specifically, the feedback overhead for all limited-feedback schemes is $\mathcal{O}(K(B_1+B_2))$, where $B_1=p+q$ is the bit budget for Phase 1 feedback and $B_2$ is the bit budget for Phase 2 feedback. 
The computational complexity of the proposed iterative geometric codebook is $\mathcal{O}(2^{B_1}+2^{B_2})$ due to the need to generate both Phase 1 and Phase 2 codebooks. Our proposed closed-form geometric codebook has a lower complexity of $\mathcal{O}(2^{B_2})$ since it does not require codebook generation for Phase 1, and so do other benchmark schemes.
The pilot overhead for channel estimation in all limited-feedback schemes is $\mathcal{O}(LK)+\mathcal{O}(K^2)$, where $\mathcal{O}(LK)$ is for estimating the $L$-path channel of each user and $\mathcal{O}(K^2)$ is for effective channel estimation in Phase 2 (the size of effective channels are $K\times K$). 
}

\vspace{-1em}
\subsection{Performance of the Proposed Sampling Design}
To evaluate the effectiveness of the proposed sampling design for the limited-feedback codebook, we show the achievable sum-rate versus the SNR in Fig.~\ref{fig:rate_SNR} under different user range distributions. 
Specifically, the proposed closed-form geometric codebook is applied to the uniform user distribution case [see Fig.~\ref{fig:rate_SNR}(a)], while the extended geometric codebook is employed for the non-uniform user distribution case [see Fig.~\ref{fig:rate_SNR}(b)]. 
First, it is observed that the proposed geometric codebook consistently achieves the highest rate performance, approaching that of full CSI, and significantly outperforming the DFT codebook. 
This advantage arises from its efficient range feedback, which enables the BS to perform more effective analog beamforming, while the DFT codebook lacks range information.
Second, compared with other range-feedback benchmark schemes (i.e., hyperbolic and hybrid-field codebooks), the proposed geometric codebook achieves better rate performance.
This is because, the hyperbolic codebook allocates too few samples in the long range regime, resulting in inadequate user coverage, while the hybrid-field codebook, although benefiting from the inclusion of the DFT codebook for far users, still fails to adapt to the underlying user distribution.
Moreover, it is observed that the uniform codebook achieves rate performance comparable to that of the proposed geometric codebook under uniform user distribution [Fig.~\ref{fig:rate_SNR}(a)]. 
This is because the uniform range samples can cover far-away users, incurring only a slight rate loss due to the neglect of near-field sampling density considerations.
However, under non-uniform user distributions [Fig.~\ref{fig:rate_SNR}(b)], the performance gain of the proposed codebook over the uniform one becomes more pronounced, as the uniform design cannot adapt to uneven user locations, resulting in significant mismatch and rate degradation.
It is worth noting that although the closed-form geometric codebook is derived under the assumption of uniform user distribution, it demonstrates strong robustness, providing satisfactory sum-rate performance even under non-uniform distributions and offering an attractive trade-off between performance and complexity.

\begin{figure}[t]
    \centering
    \includegraphics[width=0.65\linewidth]{./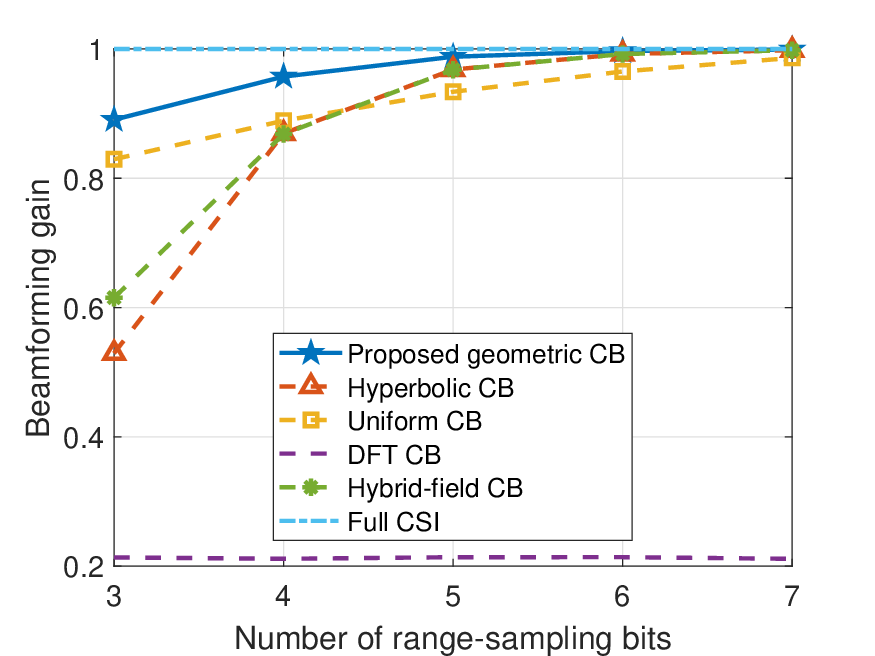}
    \caption{\small Beamforming gain vs. number of range-sampling bits $q$.}
    \label{fig:rate_loss_p}
    \vspace{-1.2em}
\end{figure}
\begin{figure}[t]
    \centering
    \includegraphics[width=0.65\linewidth]{./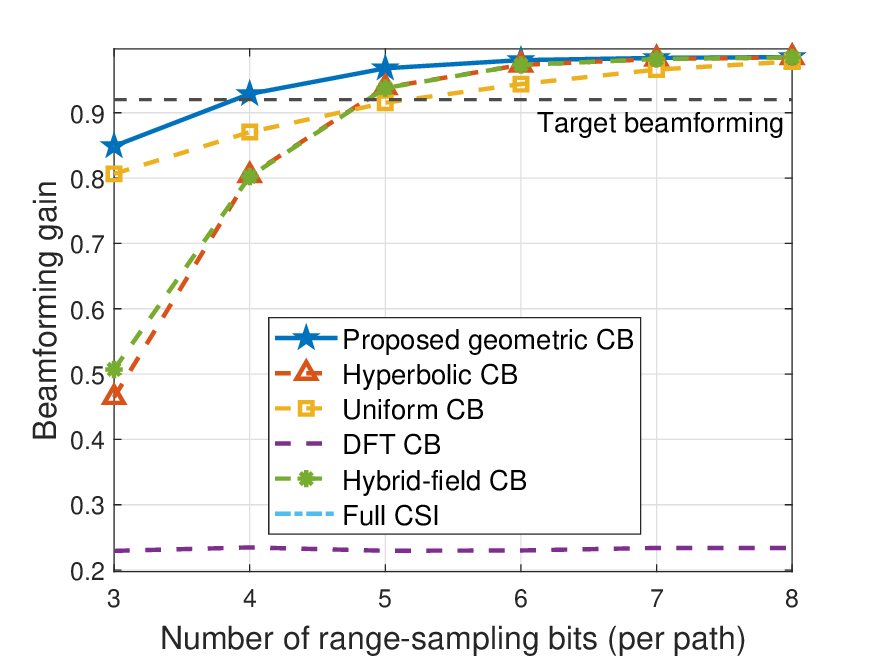}
    \caption{\small Beamforming gain vs. number of range-sampling bits in multi-path scenario.}
    \label{fig:rate_L}
    \vspace{-1.2em}
\end{figure}

\begin{figure}[t]
    \centering
    \includegraphics[width=0.65\linewidth]{./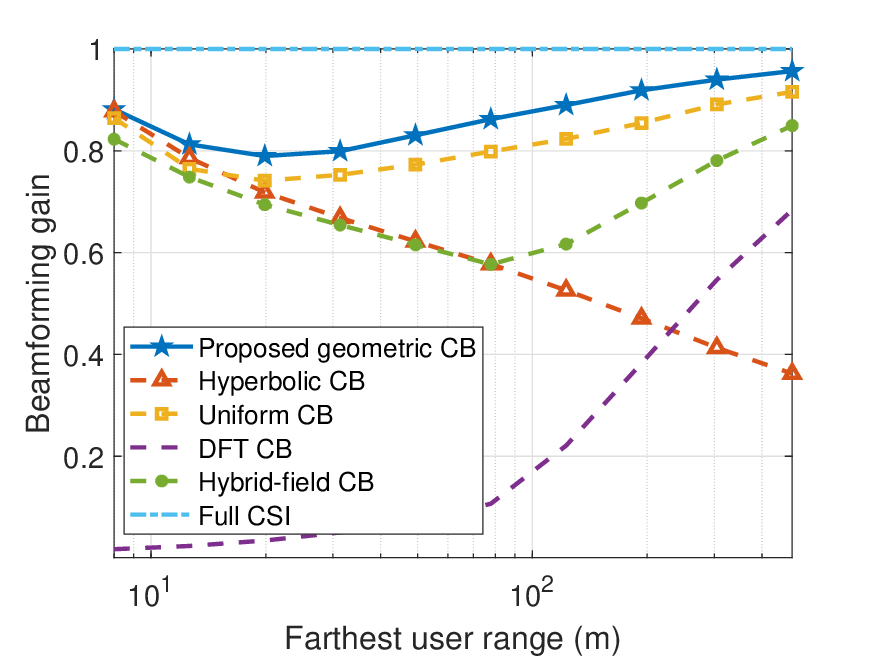}
    \caption{\small Beamforming gain vs. farthest user range $r_{\max}$.}
    \label{fig:f_bits_codebook_rmax}
    \vspace{-1.2em}
\end{figure}

In Fig.~\ref{fig:rate_loss_p}, we show the beamforming gain of different codebooks vs. the number of range-sampling bits $q$. 
It is observed that, the superiority of the geometric codebook becomes particularly evident when the number of range-sampling bits is small. 
The performance gain of the proposed geometric codebook becomes increasingly pronounced in the low range-sampling bit regime (e.g., $q =3$ or $q=4$).
This is because the geometric codebook effectively balances the range-sampling density across the near and long range regions.
Moreover, it is observed that, when the number of range-sampling bits is sufficiently large (e.g., $q \ge 5$), both the hyperbolic and hybrid-field codebooks achieve better beamforming performance than that of the uniform codebook. 
This is attributed to the inability of uniform codebook to adapt to \emph{near-field} effect, where denser sampling is required for near-range users, while the hyperbolic and hybrid-field codebooks alleviate this limitation through their non-uniform sampling strategies.

{ Moreover, to evaluate the proposed geometric codebook in multi-path case, in Fig.~\ref{fig:rate_L}, we plot the beamforming gain vs. the number of range-sampling bits $q$ (per path). It is observed that, the proposed geometric codebook still outperforms other benchmark schemes in multi-path scenarios, demonstrating its effectiveness beyond LoS-dominant channels. For example, when the target beamforming gain is $\Gamma_0=0.92$, the proposed geometric codebook can save 3 bits (accounting for all $L=3$ paths) of range-sampling feedback overhead compared to the hyperbolic codebook and the hybrid-field codebook, and 6 bits compared to the uniform codebook.
}


In Fig.~\ref{fig:f_bits_codebook_rmax}, we plot the beamforming gain $\Gamma$ vs. the farthest user range $r_{\max}$. Specifically, for different $r_{\max}$, we re-generate the range samples of all codebooks according to their respective sampling strategies with the fixed number of range sampling bits, while keeping the angle samples unchanged.
Some interesting observations can be made.
First, the beamforming performance of the hyperbolic and DFT codebooks exhibits opposite trends as $r_{\max}$ increases. 
Specifically, the beamforming performance of the hyperbolic codebook deteriorates with increasing $r_{\max}$, since it allocates an excessive number of samples to the near-range region and fails to match the uniform user range distribution, resulting in severe range-sampling errors for distant users, thereby degrading the beamforming gain as $r_{\max}$ increases.
In contrast, the beamforming performance of the DFT codebook improves as $r_{\max}$ increases. This is because for a larger $r_{\max}$, the users are more likely to be located in the far-field region, making the far-field DFT codebook sufficient for beamforming.
Therefore, by combining hyperbolic and DFT codebooks, the beamforming gain of the hybrid-field codebook first decreases (due to the influence of the hyperbolic component) and then increases (due to the contribution of the DFT component) as $r_{\max}$ increases.
{ The geometric and uniform codebooks exhibit similar non-monotonic trends; whose beamforming gains first decrease and then increase with an increasing $r_{\max}$. 
This is because, in the small-$r_{\max}$ regime, near-field effects dominate, making accurate range sampling more critical. In this regime, increasing $r_{\max}$ leads to a wider range region (given a fixed number of range sample bits), thereby increasing the range-sampling errors for these two codebooks and suffering reduced beamforming gains.}
As $r_{\max}$ continues to increase and users are predominantly located in the far-field region, the near-field effects start to diminish, leading to improved beamforming gains.
Across all regimes, the geometric codebook consistently outperforms the uniform counterpart, benefiting from its more efficient range-sampling strategy that captures channel variations over distance.

\vspace{-1em}
\subsection{Performance of Bit Allocation}

\begin{figure}[t]
    \centering
    \includegraphics[width=0.65\linewidth]{./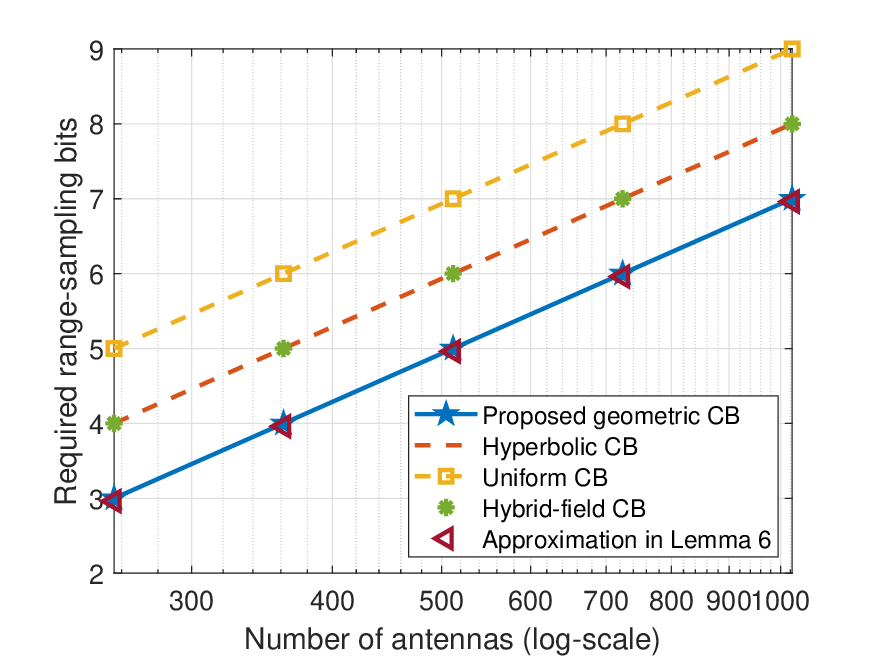}
    \caption{\small Required $q$ vs. number of antennas $M$.}
    \label{fig:bits_required}
    \vspace{-1.2em}
\end{figure}

In this subsection, we evaluate the impact of the number of antennas $M$ on the proposed codebook design and the corresponding bit allocation strategy. Specifically, it is worth noting that, full CSI scheme requires $B_1=\infty$ bits. Moreover, the DFT codebook can never achieve a given target beamforming gain of $\Gamma_0 = 0.95$ due to the lack of range feedback. As such, we omit these two schemes in the following simulations.

In Fig.~\ref{fig:bits_required}, we show the required range-sampling bits to achieve a target beamforming gain of $\Gamma_0 = 0.95$ vs. the number of antennas $M$. 
It is observed that, all codebooks require a higher overhead to achieve a targeted beamforming gain as $M$ increases. This is because a larger $M$ leads to finer spatial resolution, thus demanding more accurate range feedback.
Owning to its efficient range sampling strategy, the proposed geometric codebook achieves the lowest overhead, reducing up to $2$ bits per user per path compared to the uniform codebook.
Furthermore, it is observed that the required range-sampling bits scale as $\mathcal{O}(\log_2 M)$, which is consistent with the analysis in \textbf{Lemma~\ref{c:epc_p}}. 

In Fig.~\ref{fig:bits_allocation}, we show the expected beamforming gain vs. the number of antennas $M$. The total bit overhead is fixed at $p+q=16$.
It is observed that, the beamforming gain of all schemes decreases as $M$ increases. 
This is because larger antenna arrays require higher feedback overhead to preserve the same beamforming resolution, thus leading to inevitable performance degradation due to the fixed number of angle and range sample bits, which is consistent with the observation in Fig.~\ref{fig:bits_allocation}.
Moreover, the performance gap between the proposed geometric codebook and other benchmarks becomes larger as $M$ increases. 
This is due to the fact that, as $M$ grows, near-field effects become more pronounced (i.e., larger Rayleigh distance), making efficient range sampling critical for achieving high beamforming gain, thus demonstrating the effectiveness of the proposed geometric codebook in covering users under practical user distribution.

In Fig.~\ref{fig:bits_allocation_pair}, we show the optimal bit allocation $(p,q)$ vs. the number of antennas $M$ for the proposed geometric codebook, given a fixed total bit budget of $p+q=16$.
It is observed that, the optimal number of range-sampling bits $q$ increases while the optimal number of angle-sampling bits $p$ decreases as $M$ increases, which agrees with the analysis in \textbf{Remark~\ref{r:range_feedback}}.
\begin{figure}[t]
    \centering
    \begin{minipage}{0.45\linewidth}
        \centering
        \includegraphics[width=0.75\linewidth]{./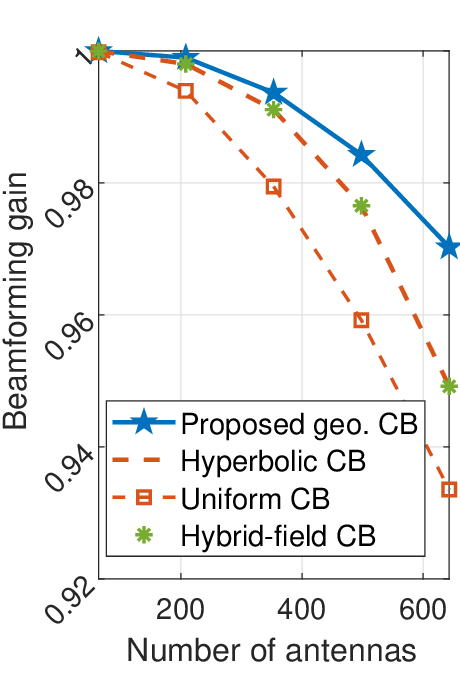}
        \caption{\small Beamforming gain vs. number of antennas $M$.}
        \label{fig:bits_allocation}
    \end{minipage}
    \begin{minipage}{0.45\linewidth}
        \centering
        \includegraphics[width=0.75\linewidth]{./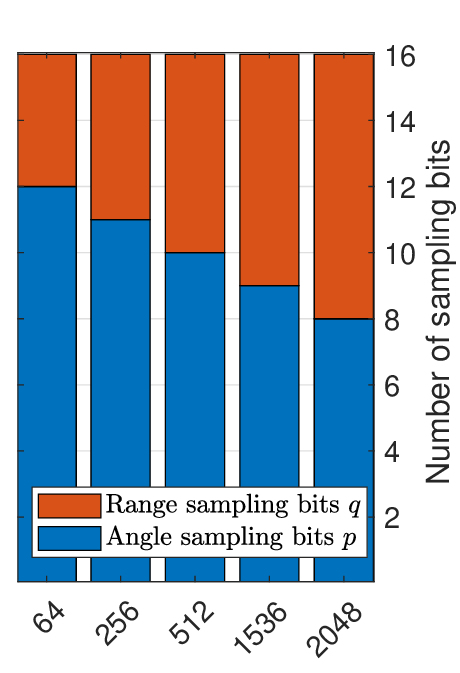}
        \caption{\small Optimal bit allocation vs. number of antennas  $M$.}
        \label{fig:bits_allocation_pair}
    \end{minipage}
    \vspace{-1.2em}
\end{figure}
\section{Conclusion}
\label{sec:conclusion}
In this paper, we proposed a novel limited-feedback codebook for FDD XL-MIMO systems, addressing the excessive overhead of existing designs by exploiting practical user distributions. For uniformly distributed users, we showed that uniform angle sampling and geometric range sampling provide near-optimal performance, while this design framework was extended to non-uniform distributions via an alternating sampling method. Theoretical analysis revealed that, as the array size increases, the optimal feedback allocation favors range samples over angle samples. Numerical results demonstrated that the proposed codebook consistently achieves superior rate performance over benchmark schemes and robustness across various system configurations. Overall, the proposed design provides an efficient and practical CSI feedback solution for near-field XL-MIMO systems, balancing beamforming performance and feedback overhead.

\appendices
\section{}
\label{sec:appendix_bfgain}
Given sufficiently small angle and range sampling errors, we can approximate the analog beamforming gain as follows
\begin{equation}
    \begin{split}
     \label{eq:beamforming_gain_appendix}
     & \left|\bfa^H(\theta,r) \bfa(\hat{\theta},\hat{r})\right|  \\
     & \overset{(a)}{\approx} \frac{2}{M} \left| \sum_{m=0}^{(M-1)/2} e^{\jmath m \pi |\theta-\hat{\theta}|} e^{\jmath k_c m^2 d_0^2 \left|\frac{1-\theta^2}{2r} - \frac{1-\hat{\theta}^2}{2 \hat{r}}\right|} \right|\\
     &\overset{(b)}{\approx}\frac{2}{M} \left| \sum_{m=0}^{(M-1)/2} e^{\jmath m \pi |\theta-\hat{\theta}|} e^{\jmath k_c m^2 d_0^2\frac{1-{\theta}^2}{2} \left|\frac{1}{r}-\frac{1}{\hat{r}}\right|} \right|  \triangleq f(\varepsilon_\theta,\varepsilon_{r}), 
\end{split} 
\end{equation}
where $\varepsilon_\theta \triangleq |\theta - \hat{\theta}|$ and $\varepsilon_{r} \triangleq (1-\theta^2) \left| \frac{1}{r} - \frac{1}{\hat{r}} \right|$ are the angle and angle-range sampling errors, respectively. In this equation, the approximation $(a)$ holds due to the approximate symmetry of the array response when $\varepsilon_\theta$ and $\varepsilon_{r}$ are small~\cite{DFTcodebookXLbeamtraining} and the Fresnel approximation~\cite{lywTut}. Moreover, approximation $(b)$ is due to the truth that $\frac{1-\theta^2}{2} \approx \frac{1-\hat{\theta}^2}{2}$ when $p$ is sufficiently large. 
{ Specifically, we plot the approximation of overset (a) and (b) of  in Fig.~\ref{fig:approximation_valid} to verify their accuracy. It is observed that both approximations are accurate when the angle sampling error is sufficiently small (e.g., set $q\ge 12$ bits).}

\section{}
\label{sec:appendix_expect_decouple}
For ease of exposition, we define $\varepsilon_{\rm u}\triangleq \vartheta \varepsilon_{r}$ (and its tilde version $\tilde{\varepsilon}_{\rm u}\triangleq \vartheta \tilde{\varepsilon}_{r}$).
The maximum of the codeword selection problem in~\eqref{eq:beamforming_gain} can be approximately reformulated as
\begin{equation}
 \label{eq:beamforming_gain_appendix_B}
    \begin{aligned}
        \Gamma  = \mathbb{E}_{(\theta,r)} \left[\max_{\hat{\theta} \in \Scal_{\rm a}, \hat{r} \in \Scal_{\rm r}} f_\theta({\varepsilon}_\theta,{\varepsilon}_{\rm u})\right] \approx \mathbb{E}_{(\theta,r)}[f(\tilde{\varepsilon}_{\theta},\tilde{\varepsilon}_{\rm u})].
    \end{aligned}
\end{equation}
Specifically, for ease of exposition, by defining $X\triangleq \tilde{\varepsilon}_\theta$ and $Y\triangleq \tilde{\varepsilon}_{\rm u}$, the expectation in~\eqref{eq:beamforming_gain_appendix_B} can be approximated as follows~\cite{casella2002statistical}
\begin{equation*}
    \mathbb E_{(\theta,r)}[f]\approx f(\mu_X,\mu_Y)+
    \frac12\!\left(
 f_{xx}\sigma_X^2
 +2f_{xy}\sigma_{XY}
 +f_{yy}\sigma_Y^2
 \right)
\end{equation*}
with $\mu_X = \mathbb{E}_\theta[\tilde{\varepsilon}_\theta]$, $\mu_Y = \mathbb{E}_{(\theta,r)}[\tilde{\varepsilon}_{\rm u}]$, $\sigma_X^2 = \mathrm{Var}[\tilde{\varepsilon}_\theta]$, $\sigma_Y^2 = \mathrm{Var}[\tilde{\varepsilon}_{\rm u}]$, and $\sigma_{XY} = \mathrm{Cov}(\tilde{\varepsilon}_\theta,\tilde{\varepsilon}_{r})$, where $f_{xx}$, $f_{xy}$, and $f_{yy}$ are the second-order partial derivatives of $f$ w.r.t. $\tilde{\varepsilon}_\theta$ and $\tilde{\varepsilon}_{\rm u}$. For ease of exposition, we define $\varpi\triangleq \max_{\theta} \tilde{\varepsilon}_\theta - \min_{\theta} \tilde{\varepsilon}_\theta$ and $\tilde{\varpi}\triangleq \max_{r} \tilde{\varepsilon}_{r} - \min_{r} \tilde{\varepsilon}_{r}$.
Based on the Popoviciu's inequality on variances \cite{casella2002statistical},
$\sigma_{X}^2$ and $\sigma_{Y}^2$ can be bounded as $\sigma_X^2 = \mathrm{Var}[\tilde{\varepsilon}_\theta] \leq \frac{\varpi^2}{4}$ and $ \sigma_Y^2 = \mathrm{Var}[\tilde{\varepsilon}_{\rm u}] = \mathrm{Var}[\vartheta \tilde{\varepsilon}_r] \leq \mathrm{Var}[\vartheta]\frac{\tilde{\varpi}^2}{4}$.
Moreover, based on the Cauchy-Schwarz inequality \cite{cvxOpt}, the covariance $\sigma_{XY}$ is bounded as $\sigma_{XY} \leq \sqrt{\mathrm{Var}[\tilde{\varepsilon}_\theta] \mathrm{Var}[\tilde{\varepsilon}_{\rm u}]}\leq \mathrm{Var}[\vartheta] \frac{\varpi\tilde{\varpi}}{4}$.
Thus, when $\varpi$ (and $\tilde{\varpi}$) tends to zero (i.e., the sampling error is sufficiently small), we have $\sigma_X^2, \sigma_Y^2$ and $\sigma_{X,Y}$ tending to zero with order $\mathcal{O}(\varpi^2)$, $\mathcal{O}(\tilde{\varpi}^2)$ and $\mathcal{O}(\varpi\tilde{\varpi})$, while $f(\mu_X,\mu_y)\gg 0$, which directly implies the result.
\vspace{-1em}
\section{}\begin{figure}[t]
    \centering
    \includegraphics[width=\linewidth]{./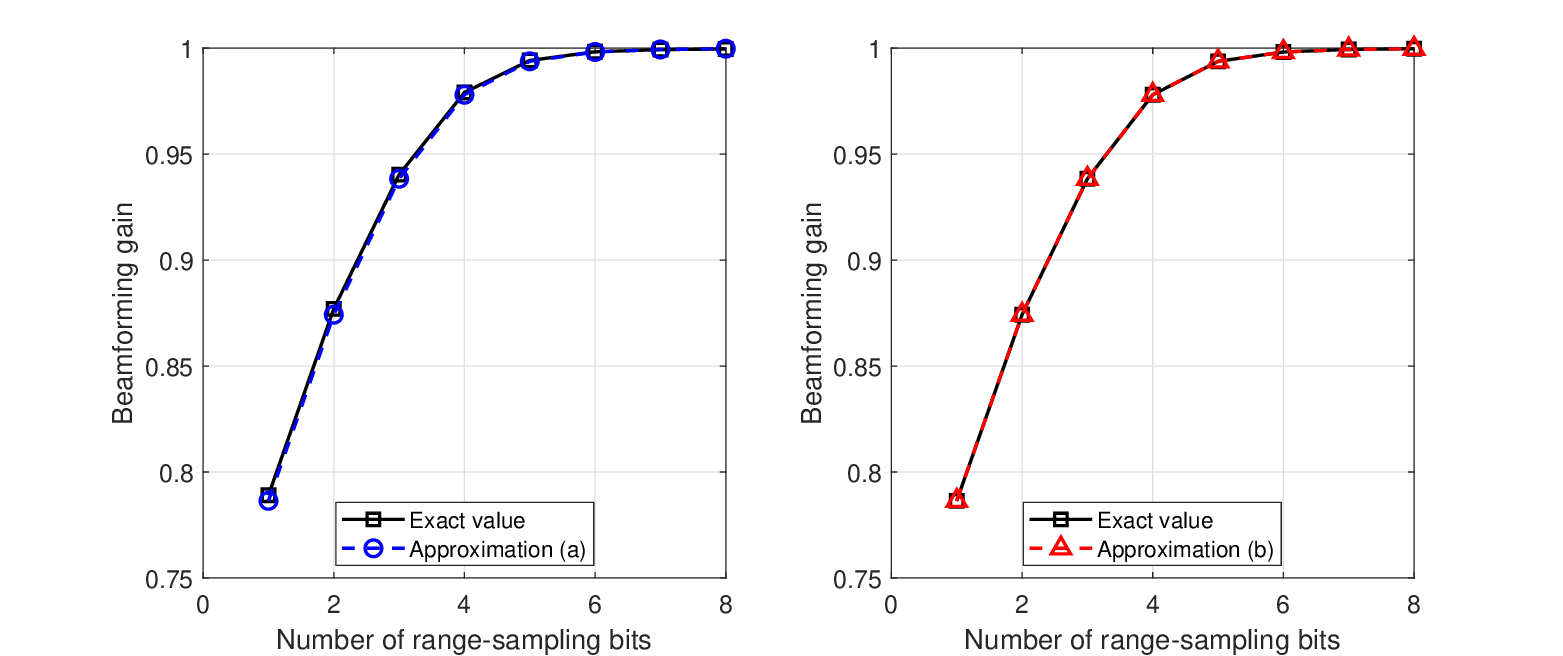}
    \caption{\small Verification of approximations in~\eqref{eq:beamforming_gain_appendix} with $M=387$. For other parameters, please refer to Section~\ref{sec:number_sim}.}
    \label{fig:approximation_valid}
    \vspace{-1.em}
\end{figure}
\label{sec:appendix_angle_sampling}
For a given $\mathcal{C}_i$, $i=1,2,\dots,2^p$, we have $\mathrm{Pr}(\theta\in \mathcal{C}_i)={D(\mathcal{C}_i)}/{D(\mathcal{Q})}$. 
The objective function of Problem (P8) is equivalent to $\expect_\theta[\tilde{\varepsilon}_{\theta}] = \sum_{i=1}^{2^p} \expect_\theta[\tilde{\varepsilon}_{\theta}|\mathcal{C}_i] \mathrm{Pr}(\theta \in \mathcal{C}_i) = \sum_{i=1}^{2^p} \frac{(a_i^2+a_{i+1}^2)}{D(\mathcal{Q})}$ by substituting $\mathrm{Pr}(\theta\in \mathcal{C}_i)$ into~\eqref{eq:Voronoi_cell_angle_sampling}
where $a_i \triangleq (\acute{\theta}_i -\acute{\theta}_{i-1})$. We aim at minimizing $\mathbb{E}_\theta[\tilde{\varepsilon}_\theta]=\sum_{i=1}^{2^p} {(a_i^2+a_{i+1}^2)}/{D(\mathcal{Q})}$ subject to $\sum_{i=1}^{2^p+1} a_i = D(\mathcal{Q})$.
It can be readily verified that $a_i=a_j,\forall i \ne j$ is the optimal solution to the above problem~\cite{llyodLeastSquaresQuantization1982}. Thus, we have $a_i = a_j = \frac{D(\mathcal{Q})}{2^p+1}$, which directly leads to the optimal $\hat{\theta}_i = \frac{a_i + a_{i+1}}{2} = \frac{D(\mathcal{Q})}{2(2^p+1)}$ for $i=1,2,\dots,2^p$.
\vspace{-1em}
\section{}
\label{sec:appendix_range_sampling_cells}
Let $r\sim\mathcal{U}(\dot r_i,\dot r_{i+1})$, define $\varsigma=1/r$.
Then $\mu_\varsigma(\varsigma)=\frac{1}{\varsigma^2 D(\mathcal{I}_i)}$, $\varsigma\in[\dot\varsigma_i,\dot\varsigma_{i+1}]$ with
$\dot\varsigma_i=1/\dot r_{i+1}<\dot\varsigma_{i+1}=1/\dot r_i$.
Let $\hat\varsigma=1/\hat r$. The conditional expected (inverse-range) error is
$\mathbb{E}_r[\tilde{\varepsilon}_{r}|\mathcal{I}_i]
=\int_{\dot\varsigma_i}^{\dot\varsigma_{i+1}} |\varsigma-\hat\varsigma| \mu_\varsigma(\varsigma)\,d\varsigma
= \frac{-2+\hat\varsigma(1/\dot\varsigma_i+1/\dot\varsigma_{i+1})
-2\log\hat\varsigma-\log(\dot\varsigma_i\dot\varsigma_{i+1})}{D(\mathcal{I}_i)}.$
With the first-order optimality condition, we obtain
$\partial / \partial \hat\varsigma =0 \;\Rightarrow\;
\hat\varsigma^*=\frac{2}{1/\dot\varsigma_i+1/\dot\varsigma_{i+1}}
\;\Rightarrow\;
\hat r_i^*=\frac{\dot r_i+\dot r_{i+1}}{2}$, which directly leads to the optimal $\hat{r}_i = \frac{r_i + r_{i+1}}{2}$ for $i=0,1,\dots,2^q$.
\vspace{-1em}
\section{}
\label{sec:appendix_range_sampling} 
By substituting the obtained $\mathbb{E}_r[\tilde{\varepsilon}_{r}|\mathcal{I}_i]$ and $\mathrm{Pr}(r \in \mathcal{I}_i)$ into \eqref{eq:problem_r_approx_k} and ignoring constants, Problem (P10) is equivalent to minimizing the objective $E \triangleq \sum_{i=0}^{2^{q}} \log \left(\xi_i + \frac{1}{\xi_i} + 2\right)$, where $\xi_i=\frac{\dot r_{i+1}}{\dot r_i}$ subject to $\prod_{i=1}^{2^q} \xi_i = \frac{r_{\mathrm{max}}}{r_{\mathrm{min}}}$. Given the equality $\sum_{i=1}^{2^q}\log \left(\xi_i + \frac{1}{\xi_i} + 2\right)= \log \prod_{i=1}^{2^q}\left(\xi_i+\frac{1}{\xi_i}+2\right)$, it is easy to verify that $\prod_{i=1}^{2^q}\left(\xi_i+\frac{1}{\xi_i}+2\right)$ is minimized when $\xi_i=\xi_j$ for all $1\le i\ne j\le 2^q$~\cite{cvxOpt}. Let $\xi_i=\xi$ for all $i$, then $\prod_{i=1}^{2^q}\left(\xi_i+\frac{1}{\xi_i}+2\right)=\left(\xi+\frac{1}{\xi}+2\right)^{2^q}$. Thus, the expected range sampling error is given by $\mathbb{E}_r[\tilde{\varepsilon}_{r}] = \frac{2^{q+1}(1-\theta^2)}{D(\mathcal{R})}\left(\log \frac{\sqrt[2^{p+1}]{\xi}+1/\sqrt[2^{p+1}]{\xi}}{2}\right)$, where $\xi=\frac{r_{\mathrm{max}}}{r_{\mathrm{min}}}$.
\vspace{-1em}

{\section{}
\label{sec:appendix_lemma5}
When $\tilde{\varepsilon}_\theta$ is sufficiently small and range sampling is accurate (i.e., $\mathbb{E}_r[\tilde{\varepsilon}_{r}] \approx 0$), the expected beamforming gain in~\eqref{eq:approximation_M_E} can be approximated as
\begin{equation*}
    \Gamma \approx \frac{2}{M} \left| \sum_{m=0}^{(M-1)/2} e^{\jmath m \pi \mathbb{E}_\theta[\tilde{\varepsilon}_\theta]} \right|.
\end{equation*}
By applying the geometric series summation formula and following similar analysis as in~\cite{DFTcodebookXLbeamtraining}, this can be further simplified to
\begin{equation*}
    \Gamma \approx \left|\frac{\sin(\frac{1}{2}M \pi \mathbb{E}_\theta[\tilde{\varepsilon}_\theta])}{\frac{1}{2}M\pi \mathbb{E}_\theta[\tilde{\varepsilon}_\theta]}\right|.
\end{equation*}
To achieve the target beamforming gain $\Gamma = \Gamma_0$, we require
\begin{equation*}
    \left|\frac{\sin(\frac{1}{2}M \pi \mathbb{E}_\theta[\tilde{\varepsilon}_\theta])}{\frac{1}{2}M\pi \mathbb{E}_\theta[\tilde{\varepsilon}_\theta]}\right| = \left|\frac{\sin(\frac{1}{2}M_0 \pi \tilde{\varepsilon}_{\theta,\Gamma_0})}{\frac{1}{2}M_0\pi \tilde{\varepsilon}_{\theta,\Gamma_0}}\right|,
\end{equation*}
where the right-hand side is a constant determined by the target gain $\Gamma_0$ and the reference array size $M_0$. When the argument is sufficiently small (i.e., $\frac{1}{2}M \pi \mathbb{E}_\theta[\tilde{\varepsilon}_\theta] \ll 1$), the sinc function is approximately linear, yielding
\begin{equation*}
    \mathbb{E}_\theta[\tilde{\varepsilon}_\theta] M \approx \tilde{\varepsilon}_{\theta,\Gamma_0} M_0.
\end{equation*}
From the optimal angle sampling in~\eqref{eq:nf_aod_samples}, we have $\mathbb{E}_\theta[\tilde{\varepsilon}_\theta] = \frac{D(\mathcal{Q})}{4 \cdot 2^{p}}$. Substituting this into the above equation and solving for $p$, we obtain the desired result.
\vspace{-1em}
\section{}
\label{sec:appendix_lemma6}
We first derive the asymptotic behavior of the expected range sampling error. Based on the Taylor expansion of the function $x \log_2 \left(\frac{\sqrt[x]{\xi}+1/\sqrt[x]{\xi}}{2}\right)$ in~\eqref{eq:optimal_range_error_sum} around large $x$ (i.e., $x = 2^q$ with sufficiently large $q$), we obtain
\begin{equation*}
    \begin{aligned}
        &x \log_2 \left(\frac{\sqrt[x]{\xi}+1/\sqrt[x]{\xi}}{2}\right) 
        = x \log_2 \left(\frac{e^{\frac{\log \xi}{x}} + e^{-\frac{\log \xi}{x}}}{2}\right) \\
        &= x \log_2 \left(\cosh\left(\frac{\log \xi}{x}\right)\right) 
        \approx x \log_2 \left(1 + \frac{(\log \xi)^2}{2x^2}\right) \\
        &\approx \frac{(\log \xi)^2}{2x \ln 2},
    \end{aligned}
\end{equation*}
where the first approximation uses the Taylor expansion $\cosh(y) \approx 1 + \frac{y^2}{2}$ for small $y$, and the second approximation uses $\log_2(1+z) \approx \frac{z}{\ln 2}$ for small $z$. Substituting $x = 2^q$ into the above result and recalling~\eqref{eq:optimal_range_error_sum}, we have
\begin{equation*}
    \begin{aligned}
        \expect_r[\mathring{{\varepsilon}}_{r}] = \frac{2^{q}}{D(\mathcal{R})}\left(\log \frac{\sqrt[2^{q}]{\xi}+1/\sqrt[2^{q}]{\xi}}{2}\right) \approx \frac{\log^2 \xi}{D(\mathcal{R}) 2^{q+1} \ln 2}.
    \end{aligned}
\end{equation*}

Next, we analyze how the target beamforming gain $\Gamma_0$ constrains the range sampling error. Following the approximation framework in~\eqref{eq:approximation_M_E} and leveraging the near-field beamforming gain analysis in~\cite[\textbf{Lemma 1}]{cuiChannelEstimationExtremely2022}, when the angle sampling is sufficiently accurate (i.e., $\mathbb{E}_\theta[\tilde{\varepsilon}_\theta] \approx 0$), to the target beamforming gain $\Gamma = \Gamma_0$, we require
\begin{equation*}
    M^2 \mathbb{E}_r[\tilde{\varepsilon}_{r}] \approx \check{M}_0^2 \tilde{\varepsilon}_{r,\Gamma_0},
\end{equation*}
where $\check{M}_0$, $\tilde{\varepsilon}_{r,\Gamma_0}$ are constants satisfying $\check{f}(0,\check{\vartheta} \tilde{\varepsilon}_{r,\Gamma_0},\check{M}_0) = \Gamma_0$. After substituting the asymptotic expression of $\mathbb{E}_r[\tilde{\varepsilon}_{r}]$ into the above equation and solving for $q$, we obtain the desired result.}
\vspace{-1em}

\section{}
\label{sec:appendix_rate_gap}
Followed by \cite{alkhateebLimitedFeedbackHybrid2015}, the rate gap $\Delta R$ is upper-bounded by
\begin{align}
\Delta R &\le \expect\lmidbra \log_2 \!\left(\frac{|\mathbf{a}^H(\theta,r)\mathbf{a}(\theta,r)|^2}{|\mathbf{a}^H(\theta,r)\mathbf{a}(\hat{\theta},\hat{r})|^2 }\right) \rmidbra \notag\\
&\quad + \expect \lmidbra \log_2 \!\left( 1+\frac{P_{\rm tol} (K-1)}{\sigma^2 K} \|\mathbf{g}\|^2 |\hat{\mathbf{g}}^H \tilde{\mathbf{f}}_{\rm BB}|^2  \right) \rmidbra
\label{eq:rate_gap_appendix}
\end{align}
where $\tilde{\mathbf{f}}_{\rm BB}$ is the digital beamforming vector with perfect CSI. 
Herein, $\mathbf{g}$ (and $\hat{\mathbf{g}}$ representing the corresponding codeword in Phase 2) and $\tilde{\mathbf{f}}_{\rm BB}$ are the effective channel and digital beamforming of \emph{different users}, respectively. Specifically, utilizing the convexity of $\log$ function, the first term in the above equation can be approximated as
\begin{equation*}
    \begin{aligned}
 &\expect\lmidbra \log_2 \(1/(\mathbf{a}^H(\theta,r) \mathbf{a}(\hat{\theta},\hat{r}))^2\) \rmidbra\\
    \le &-\log_2 \left(\mathbb{E}\left[(\mathbf{a}^H(\theta,r) \mathbf{a}(\hat{\theta},\hat{r}))^2\right]\right)\!\! \approx \!\! -\!\log_2\!\! \left(\mathbb{E}\left[f^2(\tilde{\varepsilon}_\theta,\tilde{\varepsilon}_r)\right]\right)\\
 = & -\log_2 \left(\mathbb{E}^2[f]+\operatorname*{Var}[f]\right) \overset{(c)}{\approx}-\log_2 \left(\mathbb{E}^2[f]\right)= -2\log_2 \Gamma,
    \end{aligned}
\end{equation*}
where $(c)$ is due to that $\operatorname{Var}[f]\approx 0$. Moreover, we can upper-bound the second term in \eqref{eq:rate_gap_appendix} as
\begin{equation*}
    \begin{aligned}
 &\expect \lmidbra \log_2 \( 1\!+\!\frac{P_{\rm tol} (K-1)}{\sigma^2 K} \|\mathbf{g}\|^2 |\hat{\mathbf{g}}^H \tilde{\mathbf{f}}_{\rm BB}|^2  \) \rmidbra \le \\
 &\log_2 \( 1\!+\!\frac{P_{\rm tol} (K-1)}{\sigma^2 K} \expect[\|\mathbf{g}\|^2] \expect [|\hat{\mathbf{g}}^H \tilde{\mathbf{f}}_{\rm BB}|^2]  \) \overset{(d)}{\le} \\
 &\log_2 \( 1+\frac{P_{\rm tol}}{\sigma^2 K} 2^{\frac{-B_2}{K-1}} \),
    \end{aligned}
\end{equation*}
where the bound $(d)$ can be obtained by similar tricks in~\cite[\textbf{Theorem~7}]{alkhateebLimitedFeedbackHybrid2015}. Thus we complete the proof.
\bibliographystyle{IEEEtran}
\bibliography{bib_lib.bib}

\end{document}